\newcommand\semihuge{\@setfontsize\semihuge{22.3}{22}}
\def\underbracex#1#2{\mathop{\vtop{\m@th\ialign{##\crcr
				$\hfil\displaystyle{#2}\hfil$\crcr
				\noalign{\kern3\p@\nointerlineskip}%
				#1\crcr\noalign{\kern3\p@}}}}\limits}
\def\underbracea{\underbracex\upbracefilla}
\def\upbracefilla{$\m@th \setbox\z@\hbox{$\braceld$}%
	\bracelu\leaders\vrule \@height\ht\z@ \@depth\z@\hfill 
	\kern\p@\vrule \@width\p@\kern\p@\vrule \@width\p@\kern\p@\vrule \@width\p@
	$}
\def\upbracefillb{$\m@th \setbox\z@\hbox{$\braceld$}%
	\vrule \@width\p@\kern\p@\vrule \@width\p@\kern\p@\vrule \@width\p@\kern\p@
	\leaders\vrule \@height\ht\z@ \@depth\z@\hfill\bracerd
	\braceld\leaders\vrule \@height\ht\z@ \@depth\z@\hfill
	\kern\p@\vrule \@width\p@\kern\p@\vrule \@width\p@\kern\p@\vrule \@width\p@
	$}
\def\upbracefillc{$\m@th \setbox\z@\hbox{$\braceld$}%
	\vrule \@width\p@\kern\p@\vrule \@width\p@\kern\p@\vrule \@width\p@\kern\p@
	\leaders\vrule \@height\ht\z@ \@depth\z@\hfill
	\kern\p@\vrule \@width\p@\kern\p@\vrule \@width\p@\kern\p@\vrule \@width\p@
	$}
\def\upbracefilld{$\m@th \setbox\z@\hbox{$\braceld$}%
	\vrule \@width\p@\kern\p@\vrule \@width\p@\kern\p@\vrule \@width\p@\kern\p@
	\leaders\vrule \@height\ht\z@ \@depth\z@\hfill\braceru$}
\def\underbracebd{\underbracex\upbracefillbd}
\def\upbracefillbd{$\m@th \setbox\z@\hbox{$\braceld$}%
	\vrule \@width\p@\kern\p@\vrule \@width\p@\kern\p@\vrule \@width\p@\kern\p@
	\bracerd\braceld
	\leaders\vrule \@height\ht\z@ \@depth\z@\hfill\braceru$}
\DeclareMathOperator*{\E}{\mathbb{E}}
\DeclareMathOperator*{\argmax}{arg\,max}
\DeclareMathOperator*{\argmin}{arg\,min}
\newtheorem{corollary}{Corollary}
\newtheorem{theorem}{\bf Theorem}
\newtheorem{proposition}{\bf Proposition}
\newtheorem{definition}{\bf Definition}
\newtheorem{remark}{Remark}
\begin{document}
	%
	\title{Cyber-Physical Security and Safety of Autonomous Connected Vehicles: Optimal Control Meets Multi-Armed Bandit Learning}
	\IEEEoverridecommandlockouts
	\author{\IEEEauthorblockN{\normalsize Aidin Ferdowsi, \emph{Student Member, IEEE}, Samad Ali, \emph{Student Member, IEEE}, \\
	Walid Saad, \emph{Fellow, IEEE}, and Narayan B. Mandayam, \emph{Fellow, IEEE}\vspace{-20mm}}
		\thanks{This research was supported by the U.S. National Science Foundation under Grants OAC-1541105 and IIS-1633363.}
		\thanks{Aidin Ferdowsi and Walid Saad are with Wireless@VT, Bradley Department of Electrical and Computer Engineering, Virginia Tech, Blacksburg, VA, USA, {\tt\small \{aidin, walids\}@vt.edu}. Samad Ali is with Centre for Wireless Communications (CWC), University of Oulu, Finland, {\tt\small samad.ali@oulu.fi}. Narayan B. Mandayam is with WINLAB, Dept. of ECE, Rutgers University, New Brunswick, NJ, USA, {\tt\small narayan@winlab.rutgers.edu}}
	}
	\maketitle
	
	\maketitle
	

	%
	\IEEEpeerreviewmaketitle
	
\begin{abstract}	
	Autonomous connected vehicles (ACVs) rely on intra-vehicle sensors such as camera and radar as well as inter-vehicle communication to operate effectively. This reliance on cyber components exposes ACVs to cyber and physical attacks in which an adversary can manipulate sensor readings and physically take control of an ACV. In this paper, a comprehensive framework is proposed to thwart cyber and physical attacks on ACV networks. First, an optimal safe controller for ACVs is derived to maximize the street traffic flow while minimizing the risk of accidents by optimizing ACV speed and inter-ACV spacing. It is proven that the proposed controller is robust to physical attacks which aim at making ACV systems instable. To improve the cyber-physical security of ACV systems, next, data injection attack (DIA) detection approaches are proposed to address cyber attacks on sensors and their physical impact on the ACV system. To comprehensively design the DIA detection approaches, ACV sensors are characterized in two subsets based on the availability of a-priori information about their data. For sensors having a prior information, a DIA detection approach is proposed and an optimal threshold level is derived for the difference between the actual and estimated values of sensors data which enables ACV to stay robust against cyber attacks. For sensors having no prior information, a novel multi-armed bandit (MAB) algorithm is proposed to enable ACV to securely control its motion. Collectively, the proposed DIA detection approaches minimize the vulnerability of ACV sensors against cyber attacks while maximizing the ACV system's physical robustness. Simulation results show that the proposed optimal safe controller outperforms current state of the art controllers by maximizing the robustness of ACVs to physical attacks. The results also show that the proposed DIA detection approaches, compared to Kalman filtering, can improve the security of ACV sensors against cyber attacks and ultimately improve the physical robustness of an ACV system.
\end{abstract}
\section{Introduction}
Intelligent transportation systems (ITS) will encompass autonomous connected vehicles (ACVs), roadside smart sensors (RSSs), vehicular communications, and even drones \cite{ferdowsi2017deep,Mozaffari2016,zeng2018joint,challita2018artificial}. To operate autonomously in future ITS, ACVs must process a large volume of data collected via sensors and communication links. Maintaining reliability of this data is crucial for road safety and smooth traffic flow \cite{Amoozadeh2015,Parvez2018,Husak,Ferdowsicolonel}. However, this reliance on communications and data processing renders ACVs highly susceptible to cyber-physical attacks. In particular, an attacker can possibly interject the ACV data processing stage, inject faulty data, and ultimately induce accidents or compromise the road traffic flow\cite{Kargl2008}. As demonstrated in a real-world experiment on a Jeep Cherokee in \cite{Jeep2015Greenberg}, ACVs are largely vulnerable to cyber attacks that can control their critical systems, including braking and acceleration. Naturally, by taking control of ACVs, an adversary can not only impact the compromised ACV, but it can also reduce the flow of other vehicles and cause a non-optimal ITS operation. This, in turn, motivates a holistic study for joint cyber and physical impacts of attacks on ACV systems. 

Recently, a number of security solutions have been proposed for addressing intra-vehicle network and vehicular communication cyber security problems \cite{Woo2015,obd2016Narayanan,Calandriello2011,Kim2010,Sun2017,2017Eltayeb,PETRILLO2018,ferdowsi2018deep,Tuohy2015}. In \cite{Woo2015}, the authors showed that long-range wireless attacks on the current security protocols of ACVs can disrupt their controller area network (CAN). Furthermore, the work in \cite{obd2016Narayanan}, proposed a data analytics approach for the intrusion detection problem by applying a hidden Markov model. In \cite{Calandriello2011}, the security vulnerabilities of current vehicular communication architectures are identified. The work in \cite{Kim2010} proposed the use of multi-source filters to secure a vehicular network against data injection attacks (DIAs). Furthermore, the authors in \cite{Sun2017} introduced a new framework to improve the trustworthiness of beacons by combining two physical measurements (angle of arrival and Doppler effect) from received wireless signals. In \cite{2017Eltayeb}, the authors designed a multi-antenna technique for improving the physical layer security of vehicular millimeter-wave communications. Moreover, in \cite{PETRILLO2018}, the authors proposed a collaborative control strategy for vehicular platooning to address spoofing and denial of service attacks. The work in \cite{ferdowsi2018deep} developed a deep learning algorithm for authenticating sensor signals. Finally, an overview of current research on advanced intra-vehicle networks and the smart components of ITS is presented in \cite{Tuohy2015}.

In addition to cyber security in ITSs, physical safety and optimal control of ACVs have been studied in \cite{Kleberger2011,2015Bradley, XUE2014852, Sadraddini2017, Lefèvre2016, ferdowsidRL2018, 2016Schoitsch}. In \cite{Kleberger2011}, the authors identified the key vulnerabilities of a vehicle's controller and secured them using intrusion detection algorithms. The work in \cite{2015Bradley} analyzed the ACVs as cyber-physical systems and developed an optimal controller for their motion. The authors in \cite{XUE2014852} studied the safe operation of ACV networks in presence of an adversary that tries to estimate the dynamics of ACVs by its own observations. The authors in \cite{Sadraddini2017} proposed centralized and decentralized safe cruise control approaches for ACV platoons. A learning-based approach is proposed in \cite{Lefèvre2016} to control the velocity of ACVs. Furthermore, in \cite{ferdowsidRL2018}, the authors have proposed a robust deep reinforcement learning (RL) algorithm which mitigates cyber attacks on ACV sensors and maintains the safety of ACV system. In \cite{2016Schoitsch}, the authors studied the essence of secure and safe codesign for ACV systems.

However, despite their importance, the architecture and solutions in \cite{Woo2015,obd2016Narayanan,Calandriello2011,Kim2010,Sun2017,2017Eltayeb,PETRILLO2018,ferdowsi2018deep,Tuohy2015,Kleberger2011,2015Bradley, XUE2014852, Sadraddini2017, Lefèvre2016, ferdowsidRL2018, 2016Schoitsch} do not take into account the interdependence between the cyber and physical layers of ACVs while designing their security solutions. Moreover, the prior art in \cite{Woo2015,obd2016Narayanan,Calandriello2011,Kim2010,Sun2017,2017Eltayeb,PETRILLO2018,ferdowsi2018deep,Tuohy2015,Kleberger2011,2015Bradley, XUE2014852, Sadraddini2017, Lefèvre2016, ferdowsidRL2018, 2016Schoitsch}, does not provide solutions that can enhance the robustness of ACV motion control to malicious attacks. Nevertheless, designing an optimal and safe ITS requires robustness to attacks on intra-vehicle sensors as well as inter-vehicle communication. In addition, these existing works do not properly model the attacker's action space and goal (physical disruption in ITS) while providing their security solutions. In this context, the cyber-physical interdependence of the attacker's actions and goals will help providing better security solutions. Finally, the existing literature lacks a fundamental analysis of physical attacks as in the Jeep hijacking scenario\cite{Jeep2015Greenberg} in which the attacker aims at disrupting the ITS operation by causing non-optimality in a compromised ACV's speed.

The main contribution of this paper is, thus, a comprehensive study of joint cyber-physical security challenges and solutions in ACV networks which can be summarized as follows:
 
\begin{itemize}
	\item To address both safety and optimality of an ACV system, first an optimal safe controller is proposed so as to maximize the traffic flow and minimize the risk of accidents by optimizing the speed and spacing of ACVs. To the best of our knowledge, this work will be the first to analyze the physical attack on a ACV network and to prove that the proposed controller can maximize the stability and robustness of ACV systems against physical attacks such as in the Jeep hijacking scenario \cite{Jeep2015Greenberg}.
	\item  To improve the cyber-physical security study of ACV systems, next, new DIA detection approaches are proposed to address cyber attacks on ACV sensors and to analyze the physical impact of DIAs on an ACV system. To efficiently design the DIA detection approaches, ACV sensors are characterized in two subsets based on the availability of a priori information about their readings.
	\item For the first subset of sensors which have a priori information, a DIA detection approach is proposed derive an optimal threshold level for sensor errors which enables ACV to detect DIAs. For the second subset of sensors that lack a priori information, a novel multi-armed bandit (MAB) algorithm is proposed to learn which sensors are attacked. The proposed MAB algorithm uses the so-called Mahalanobis distance between the sensor data and an a-posteriori prediction to calculate a regret value and optimize the ACV's sensor fusion process by applying an upper confidence bound (UCB) algorithm. The proposed detection approaches maximize both the cyber security and physical robustness of ACV systems against DIAs.
\end{itemize}
  Simulation results show that the proposed optimal safe controller has higher safety, optimality, and robustness against physical attacks compared to other state-of-the-art approaches. In addition, our results show that the proposed DIA detection approaches yield an improved performance compared to Kalman filtering in mitigating the cyber attacks. Therefore, the proposed solutions improve the stability of ACV networks against DIAs.

The rest of the paper is organized as follows. Section \ref{sec:model} introduces our system model while Section \ref{sec:optimalcontroller} derives the optimal safe controller. Section \ref{sec:physicalattack} proves that the proposed optimal safe controller is robust against physical attacks.
Section \ref{sec:cyberattack} proposes approaches to mitigate cyber attacks on ACV while reducing the risk of accidents. Finally, simulation results are shown in Section \ref{sec:sim} and conclusions are drawn in Section \ref{sec:conc}.
\section{System Model}\label{sec:model}
\subsection{ACV Physical Model}
Consider an ACV, $ f $, that follows a leading ACV, $ l $ and tries to maintain a spacing from ACV $ l $ as shown in Fig. \ref{Fig:sysmodel}. Maintaining a spacing between ACVs is important to maximize the traffic flow and minimize the risk of accidents\cite{Kargl2008}. Let $ v_f $ be ACV $ f $'s speed in m/s. Then, ACV $ f $'s speed deviation can be written as $
	\dot{v}_f(t) = \frac{F_f(t)}{m_f} \triangleq u_f(t),$
where $ F_f(t) $ is $ f $'s engine force in Newtons (N), $ m_f $ is $ f $'s mass in kilograms (kg), and $ u_f(t) $ is $ f $'s physical controller input in N/kg . Moreover, letting $ v_l $ be $ l $'s speed, the spacing $ d(t) $ between $ f $ and $ l $ can be written as $	\dot{d}(t) = v_l(t) - v_f(t).$  Note that this model can be easily generalized to multiple ACVs by repeating the same set of equations for every pair of ACVs to capture any ACV network as shown in Fig. \ref{Fig:sysmodel}. 
\begin{figure}
	\centering
	\includegraphics[width=0.84\textwidth]{./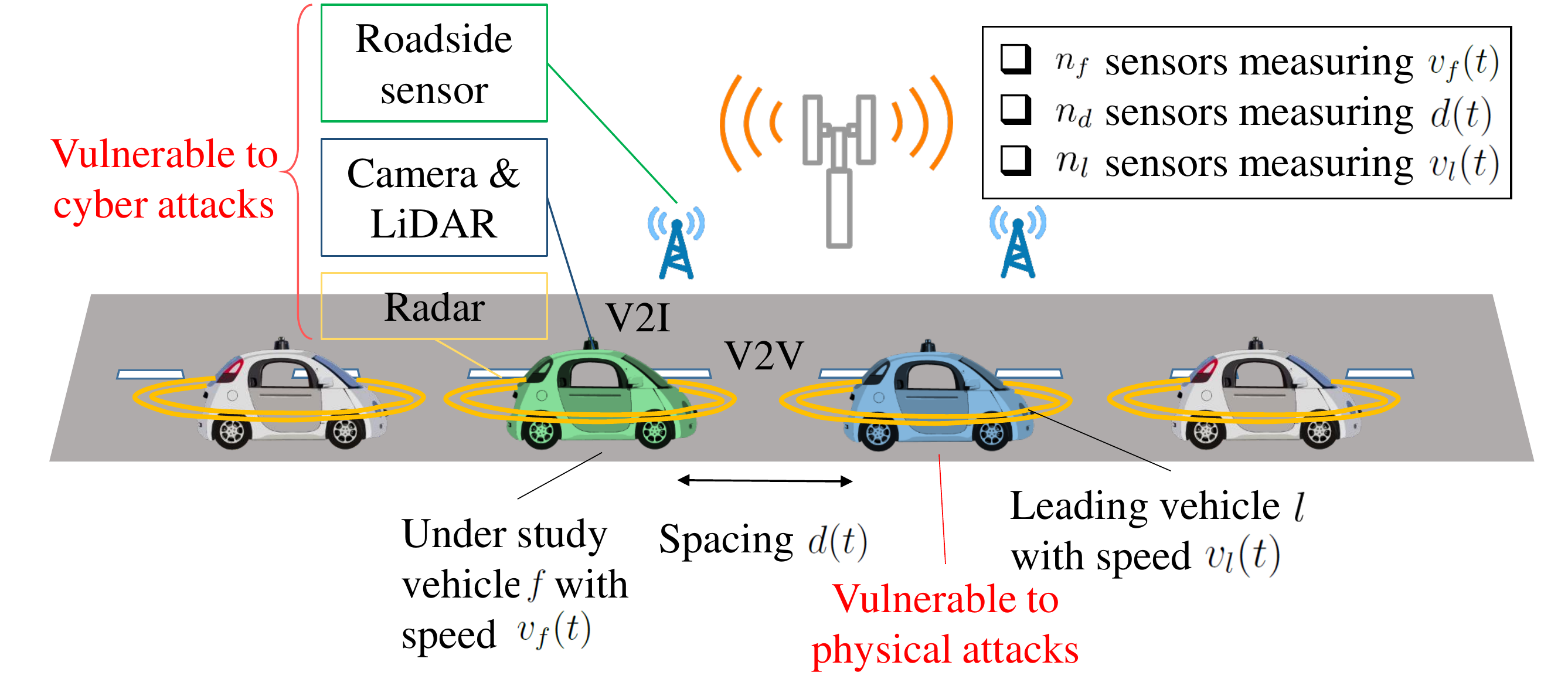}
	\vspace{-4mm}
	\caption{Illustration of the considered ACV system model.}
	\label{Fig:sysmodel}
	\vspace{-10mm}
\end{figure} 
Due to discrete time sensor readings in ACVs, we convert the aforementioned continuous system model to a discrete one using a linear transformation as follows\cite{williams2007linear}:
\begin{align}
	v_f(t+1) = v_f(t)+Tu_f (t),\,\,\,
	d(t+1) = d(t) + T v_l(t) - T v_f(t)\label{eq:spacingdisc},
\end{align}
where $ T $ is the sampling period of the sensors in seconds. The model can be summarized as:
\begin{align}\label{eq:system}
	\boldsymbol{x}(t+1) = \boldsymbol{A}\boldsymbol{x}(t)+\boldsymbol{B}u_f(t)+\boldsymbol{F}v_l(t) ,
\end{align}
where
\begin{align}
	\scriptstyle
		\boldsymbol{x}(t) = \left[\begin{array}{c}
		v_f(t) \\
		d(t)
		\end{array}\right],\,\,\boldsymbol{A}=\left[\begin{array}{c c}
		1 & 0 \\
		- T & 1
		\end{array}\right],\,\,\boldsymbol{B}=\left[\begin{array}{c}
		T \\
		0
		\end{array}\right],\,\,\boldsymbol{F}=\left[\begin{array}{c}
		0 \\
		T
		\end{array}\right].
\end{align}
To validate the practicality of the proposed system model, we need to show that $ u_f(t) $ can control the speed and spacing of $ f $ , i.e., $ u_f $ can take state vector $ \boldsymbol{x}(t_0) $ to any desired state $   \boldsymbol{x}(t_1)  $. The following remark shows that $ u_f $ can control system \eqref{eq:system}.
\begin{remark}
	If $ T>0  $, then the system in \eqref{eq:system} is controllable. To illustrate the reason, we know that
	the system \eqref{eq:system} is controllable if the rank of controllability matrix $ \boldsymbol{C} = \left[\begin{array}{c c}
	\boldsymbol{B} & \boldsymbol{A}\boldsymbol{B}
	\end{array}\right] $ is 2 (number of state variables)\cite{williams2007linear}. Thus, we have:
	\begin{align}
	\scriptstyle
		 \boldsymbol{C} = \left[\begin{array}{c c}
		 \begin{array}{c}
		 T \\
		 0
		 \end{array} &\left[\begin{array}{c c}
		 1 & 0 \\
		 - T & 1
		 \end{array}\right].\left[\begin{array}{c}
		 T \\
		 0
		 \end{array}\right]
		 \end{array}\right]=\left[\begin{array}{c c}
			T & T\\
			0 & -T^2
		 \end{array}\right].
	\end{align} 
	Therefore, for any $ T>0 $ the columns of $ \boldsymbol{C} $ are linearly independent which implies that the rank of $ \boldsymbol{C} $ will be $ 2 $.
\end{remark}
\subsection{ACV cyber model}\label{subsec:cybermodel}
In order to navigate, as shown in Fig. \ref{Fig:sysmodel}, ACV $ f $ relies on $ n_f $, $ n_l $, and $ n_d $ sensors which measure $ v_f(t) $, $ d(t) $, and $ v_l(t) $, respectively. For instance, multiple intra-vehicle inertial measurement units (IMUs) measure $ v_f(t) $, multiple cameras, radars, and LiDAR can measure $ d(t) $, and roadside sensors and ACV $ l $ measure $ v_l(t) $ and transmit the measurements to ACV $ f $ using vehicle-to-vehicle (V2V) and vehicle-to-infrastructure (V2I) communication links.
Thus, we model the sensor readings as follows:
\begin{align}\label{eq:sensors}
\boldsymbol{z}_f(t) = \boldsymbol{h}_f v_f(t) + \boldsymbol{e}_f(t),\,\, \boldsymbol{z}_l(t) = \boldsymbol{h}_l v_l(t) + \boldsymbol{e}_l(t),\,\,\boldsymbol{z}_d(t) = \boldsymbol{h}_d d(t) + \boldsymbol{e}_d(t),
\end{align} 
where $ \boldsymbol{z}_f(t) $, $ \boldsymbol{z}_l(t) $, and $ \boldsymbol{z}_d(t) $ are sensor vectors with $ n_f $, $ n_l $, and $ n_d $ elements which measure $ v_f(t) $, $ v_l(t) $, and $ d(t) $, respectively. Also, $ \boldsymbol{h}_f $, $ \boldsymbol{h}_l $, and $ \boldsymbol{h}_d $ are vectors with $ n_f $, $ n_l $, and $ n_d $ elements equal to $ 1 $. Moreover as assumed in \cite{williams2007linear} and \cite{nonlinear}, $ \boldsymbol{e}_f(t) $, $ \boldsymbol{e}_l(t) $, and $ \boldsymbol{e}_d(t) $ are noise vectors that follow a white Gaussian distribution with zero mean and variance vectors $ \boldsymbol{\sigma}_f^2 = \left[\sigma^2_{f_1},\dots,\sigma^2_{f_{n_f}}\right]^T $, $ \boldsymbol{\sigma}_l^2 = \left[\sigma^2_{l_1},\dots,\sigma^2_{l_{n_l}}\right]^T $, and $ \boldsymbol{\sigma}_d^2 = \left[\sigma^2_{d_1},\dots,\sigma^2_{d_{n_d}}\right]^T $. Since the sensor readings in \eqref{eq:sensors} are noisy, we need to optimally estimate $ v_f(t) $, $ v_l(t) $ and $ d(t) $ from $ \boldsymbol{z}_f(t) $, $ \boldsymbol{z}_l(t) $, and $ \boldsymbol{z}_d(t) $ by minimizing the estimation error. To find the optimal estimations $  \hat{v}_f(t) $, $  \hat{v}_l(t) $, and $  \hat{d}(t) $ (the estimations of $ v_f(t) $, $ v_l(t), $ and $ d(t) $), we use two types of estimators. A \emph{static estimator} to estimate the variables at the initial state, $ t_0 $, (the time step that $ f $ starts following $ l $) and a \emph{dynamic estimator} to estimate $  \hat{v}_f(t) $ and $  \hat{d}(t) $ using the state equations in \eqref{eq:spacingdisc}. We use a static estimator at the initial state because ACV $ f $ does not follow the speed of ACV $ l $ using \eqref{eq:spacingdisc} before $ t_0 $  and, hence, a dynamic estimator cannot be used due to lack of information about the dynamics of $ f $ before $ t_0 $. Moreover, for $ \hat{v}_l(t) $, we always use a static estimator since we do not have any information on the dynamics of $ v_l(t) $.

For the static estimator, we define a least-square (LS) cost function for each variable as follows:
\begin{align}
J_f(\hat{v}_f(t)) &= \frac{1}{2}\sum_{t=0}^{\infty}\left(\boldsymbol{z}_f(t)-\boldsymbol{h}_f\hat{v}_f(t)\right)^T\boldsymbol{R}_f^{-1}\left(\boldsymbol{z}_f(t)-\boldsymbol{h}_f\hat{v}_f(t)\right),
\\
J_l(\hat{v}_d(t)) &= \frac{1}{2}\sum_{t=0}^{\infty}\left(\boldsymbol{z}_l(t)-\boldsymbol{h}_l\hat{v}_l(t)\right)^T\boldsymbol{R}_l^{-1}\left(\boldsymbol{z}_l(t)-\boldsymbol{h}_l\hat{v}_l(t)\right),
\\
J_d(\hat{d}(t))& = \frac{1}{2}\sum_{t=0}^{\infty}\left(\boldsymbol{z}_d(t)-\boldsymbol{h}_d\hat{d}(t)\right)^T\boldsymbol{R}_d^{-1}\left(\boldsymbol{z}_d(t)-\boldsymbol{h}_f\hat{d}(t)\right),
\end{align}
where $ \boldsymbol{R}_f, \boldsymbol{R}_l $, and $ \boldsymbol{R}_d $ are the measurement covariance matrices associated with sensors measure $ v_f(t) $, $ v_l(t) $, and $ d(t) $, respectively. Moreover, since the sensors independently measuring the three variables, they will not have any noise covariance and $ \boldsymbol{R}_f, \boldsymbol{R}_l $, and $ \boldsymbol{R}_d $ will be diagonal. Therefore, explicit solution for the optimal estimation can be derived as:
\begin{align}
{\scriptsize \hat{v}_f(t)}&{\scriptsize= \left(\boldsymbol{h}_f^T\boldsymbol{R}_f^{-1}\boldsymbol{h}_f\right)^{-1}\boldsymbol{h}_f^T\boldsymbol{R}_f^{-1}\boldsymbol{z}_f(t)=\sum_{i=1}^{n_f} \underbrace{\frac{\frac{1}{\sigma_{f_i}^2}}{\sum_{i=1}^{n_f}\frac{1}{\sigma_{f_i}^2}}}_{w_{f_i}}z_{f_i}(t)=v_f(t)+\sum_{i=1}^{n_f}w_{f_i}e_{f_i}(t),}\\ {\scriptsize\hat{v}_l(t)}&{\scriptsize= \left(\boldsymbol{h}_l^T\boldsymbol{R}_l^{-1}\boldsymbol{h}_l\right)^{-1}\boldsymbol{h}_l^T\boldsymbol{R}_l^{-1}\boldsymbol{z}_l(t)=\sum_{i=1}^{n_l}\underbrace{ \frac{\frac{1}{\sigma_{l_i}^2}}{\sum_{i=1}^{n_l}\frac{1}{\sigma_{l_i}^2}}}_{w_{l_i}}z_{l_i}(t)=v_l(t)+\sum_{i=1}^{n_l}w_{l_i}e_{l_i}(t),}\\{\scriptsize\hat{d}(t)}&{\scriptsize= \left(\boldsymbol{h}_d^T\boldsymbol{R}_d^{-1}\boldsymbol{h}_d\right)^{-1}\boldsymbol{h}_d^T\boldsymbol{R}_f^{-1}\boldsymbol{z}_d(t)=\sum_{i=1}^{n_d}\underbrace{ \frac{\frac{1}{\sigma_{d_i}^2}}{\sum_{i=1}^{n_d}\frac{1}{\sigma_{d_i}^2}}}_{w_{d_i}}z_{d_i}(t) =d(t)+\sum_{i=1}^{n_d}w_{d_i}e_{d_i}(t).}
\end{align}
For the dynamic estimator, we use a Kalman filter which uses the state equation in the estimation process. To this end, we define an output equation as follows:
\begin{align}
\boldsymbol{z}(t) \hspace{-0.5mm}=\hspace{-0.5mm}\boldsymbol{H}\boldsymbol{x}(t) \hspace{-0.5mm}+\hspace{-0.5mm} \boldsymbol{e}(t), \text{s.t. } \boldsymbol{H}\hspace{-0.5mm} =\hspace{-0.5mm} \left[\begin{array}{c c}
\boldsymbol{h}_f & \boldsymbol{0}_{n_f\times 1}\\
\boldsymbol{0}_{n_d\times 1} & \boldsymbol{h}_d
\end{array}\right], \boldsymbol{z}(t) = \left[\begin{array}{c}
\boldsymbol{z}_f(t)\\
\boldsymbol{z}_d(t)
\end{array}\right],\boldsymbol{e}(t) = \left[\begin{array}{c}
\boldsymbol{e}_f(t)\\
\boldsymbol{e}_d(t)
\end{array}\right].
\end{align}

Note that we cannot apply the dynamic estimator on $ \hat{v}_l(t) $ since we do not have a-priori information about the dynamics of $ l $. To dynamically estimate $ \hat{d}(t) $ and $ \hat{v}_f(t) $, we use an a priori estimation derived from the state equations as well as a weighted residual of output error to correct the a priori estimation as follows\cite{kailath1980linear}:
\begin{align}
\underbrace{\hat{\boldsymbol{x}}(t)}_{\text{estimation}} = \underbrace{\hat{\boldsymbol{x}}^{(-)}(t)}_{\text{a priori estimation}} + \underbrace{\boldsymbol{K}(t) \overbrace{\left[\boldsymbol{z}(t) - \boldsymbol{H}\hat{\boldsymbol{x}}^{(-)}(t)\right]}^{\text{residual}}}_{\text{correction}},
\end{align}
where $ \boldsymbol{K}(t) $ is the \emph{Kalman gain}. By defining an a posteriori error covariance matrix $ \boldsymbol{P}(t) = \E \left[\boldsymbol{r}(t)\boldsymbol{r}^T(t)\right] $, where $ \boldsymbol{r}(t) = \boldsymbol{x}(t) - \hat{\boldsymbol{x}}(t) $, we can find a $ \boldsymbol{K}(t) $ to minimize $ \text{trace}\left[\boldsymbol{P}(t)\right] = \E\left[\boldsymbol{r}^T(t)\boldsymbol{r}(t)\right] $. The solution for such $ \boldsymbol{K}(t) $ can be given by\cite{kailath1980linear}:
\begin{align}
\scriptsize\hat{\boldsymbol{x}}^{(-)}(t) &\scriptsize= \boldsymbol{A} \hat{\boldsymbol{x}}(t-1) + \boldsymbol{B}u_f(t-1) + \boldsymbol{F} \hat{v}_l(t-1),\label{eq:aprioristate}\\
\scriptsize\boldsymbol{P}^{(-)}(t) &\scriptsize\triangleq \boldsymbol{A}\boldsymbol{P}(t-1)\boldsymbol{A}^T,\label{eq:Pgainpre}\\
\scriptsize\boldsymbol{K}(t) &\scriptsize= \boldsymbol{P}^{(-)}(t) \boldsymbol{H}^T\left[\boldsymbol{H}\boldsymbol{P}^{(-)}(t)\boldsymbol{H}^T+\boldsymbol{R}\right]^{-1},\label{eq:Kgain}\\
\scriptsize\hat{\boldsymbol{x}}(t) &\scriptsize= \left[\boldsymbol{I} -\boldsymbol{K}(t) \boldsymbol{H} \right] \hat{\boldsymbol{x}}^{(-)}(t)+ \boldsymbol{K}(t)\boldsymbol{z}(t),\label{eq:Kalmanestimtion}\\
\scriptsize\boldsymbol{P}(t) &\scriptsize=\left[\boldsymbol{I} -\boldsymbol{K}(t) \boldsymbol{H} \right] \boldsymbol{P}^{(-)}(t).\label{eq:Pgain}
\end{align}
where \small $ \boldsymbol{R}\hspace{-0.5mm} =\hspace{-1mm}\left[\begin{array}{c c}
\boldsymbol{R}_f & \boldsymbol{0}_{n_f\times n_d}\\
\boldsymbol{0}_{n_d\times n_f} & \boldsymbol{R}_d
\end{array}\right]\hspace{-0.5mm}$\normalsize is a block diagonal matrix. As can be seen from \eqref{eq:Pgainpre}, \eqref{eq:Kgain}, and \eqref{eq:Pgain}, the update processes for $ \boldsymbol{P}(t) $ and $ \boldsymbol{K}(t) $ are independent of the states and controller. Thus, $ \boldsymbol{P}(t) $ and $ \boldsymbol{K}(t) $ will converge to constant matrices, $ \tilde{\boldsymbol{P}} $ and $ \tilde{\boldsymbol{K}} $.

For the studied system, we now define the cyber-physical security problems for ACV systems that we will study next. We will address three main problems: 1) What is the optimal safe ACV controller for the system in \eqref{eq:system} that minimizes the risk of accidents while maximizing the traffic flow on the roads? 2) Is the proposed optimal safe controller for ACV systems robust and stable against physical attacks? and 3) How to securely fuse the sensor readings to mitigate DIAs on ACVs and minimize the impact of such attacks on the control of ACVs? Addressing these problems is particularly important because the model in \eqref{eq:system} identifies the microscopic characteristics of an ACV network and, thus, to achieve large-scale security and safety in ACV networks we must secure every ACV against cyber-physical attacks.

Addressing these three problem requires a comprehensive study of the interdependencies between the cyber and physical characteristics of ACV systems. Such interdependent cyber-physical study helps to find the vulnerabilities of the ACV systems against both cyber and physical attacks. Thus, we can derive an optimal controller that minimizes the risk of accidents and we can design cyber attack detection approaches that not only take into account the cyber characteristics of the ACV system, but also aims at minimizing the likelihood of collisions in ACV networks. Unlike the works in \cite{Woo2015,obd2016Narayanan,Calandriello2011,Kim2010,Sun2017,2017Eltayeb,PETRILLO2018,ferdowsi2018deep,Tuohy2015}, we consider the physical characteristics of ACV systems while developing DIA detection approaches. Moreover, the combined optimal and safe ACV controller design has not been studied previously in \cite{Kleberger2011,2015Bradley, XUE2014852, Sadraddini2017, Lefèvre2016, ferdowsidRL2018, 2016Schoitsch}.

\section{Optimal Safe ACV Controller} \label{sec:optimalcontroller}
Our first task is thus to derive an optimal safe controller for ACV systems. To analyze ACV $ f $'s optimal control input $ u_f(t) $, we define an \emph{optimal safe spacing} value as $ o(v_f(t)) \triangleq \frac{v_f^2(t)}{2b_fT} $, where $ b_f $ is ACV $ f $'s maximum braking deceleration. This value is defined so as to guarantee that if the leading ACV $l$ stops suddenly, the following ACV $f$ will stop completely before hitting ACV $ l $ as long as $ f $ starts braking immediately after observing $ l $'s braking process. This can be captured by the following energy equivalence condition:
\begin{align}
	\underbrace{\frac{m_f}{2}(v_f^2(t) - 0 )}_{\text{Kinetic Energy}} = \underbrace{m_fb_fd(t)}_{\text{Potential Energy}} \Rightarrow o(v_f(t)) = \frac{v_f^2(t)}{2b_f}.
\end{align} 
Our goal here is to maintain an optimal safe spacing between ACVs $ f $ and $ l $. Thus, we define a \emph{physical regret} $ R(t) $ as the square of difference between the optimal safe spacing and the actual spacing. This regret quantifies the safety and optimality of ACV motion by preventing any collisions and minimizing the spacing between the ACVs and can be written as follows:
\begin{align}
	R(t) &=  \left(o(v_f(t+1)) - d(t+1)\right)^2 = \left(\frac{v_f^2(t+1)}{2b_f} - d(t+1)\right)^2\nonumber\\\label{eq:costfunction}
	&=\left(\frac{({v}_f(t)+Tu_f(t))^2}{2b_f}-{d}(t)-T{v}_l(t)+T{v}_f(t)\right)^2
\end{align}
In addition, each ACV only have access to estimation of  $ v_f(t) $, $ v_l(t) $, and $ d(t) $. Thus, ACV $ f $ must design an input $ u_f(t) $ to minimize an estimation of physical regret which is defined as follows:
\begin{align}
	\hat{R}(t)&= \left(\frac{(\hat{v}_f(t)+Tu_f(t))^2}{2b_f}-\hat{d}(t)-T\hat{v}_l(t)+T\hat{v}_f(t)\right)^2.
\end{align}
This problem is challenging to solve because $ \hat{v}_l(t) $ is an independent parameter and ACV $ f $ cannot be sure about the future values of $ \hat{v}_l(t) $. To solve this problem, we consider two scenarios: a) ACV $ f $ has no prediction about ACV $ l $'s future speed values (\emph{One-step ahead controller}) and b) ACV $ f $ has a predictor which can predict ACV $ l $'s future speed value for $ N $ time steps (\emph{$ N $-step ahead controller})(such predictors have attracted recent attention in the transportation literature, e.g., see \cite{Tian2015} and \cite{MA2015187}). Next, we propose an optimal controller for these two cases.
\subsection{One-step ahead controller}

To solve the one-step ahead controller problem, we consider some physical limitations on the speed and the control input. We prohibit the speed from being greater than the free-flow speed of a road, $ \tilde{v} $. Moreover, due to the physical capabilities of the vehicle and for maintaining passengers' comfort, we must have a limitation on the control input and speed deviation. Thus, the optimization problem of the ACV $ f $ can be written as follows:
\begin{align}
	u^*_f(t) &= \argmin_{u_f(t)} \hat{R}(t)\label{eq:1step}\\
	\text{s.t.}\,\,\,\,\,& u_f^{\text{min}}\leq u_f(t)\leq u_f^{\text{max}},\label{eq:1stepconst1} \\
	&0\leq v_f(t+1)\leq\tilde{v}\label{eq:1stepconst2},\\
	& |u_f(t)-u_f(t-1)| \leq \Delta u, \label{eq:1stepconst3}
\end{align}
where $ u_f^{\text{min}}<0 $ and $ u_f^{\text{max}}>0 $ are the minimum and maximum allowable control input and $ \Delta u $ is the maximum allowable change in the controller to yield a comfortable ride.
\begin{theorem}\label{Theorem1step}
	The one-step ahead optimal controller is:
	\begin{align}\label{eq:1stepopt}
		u_f^*(t) = 
		\min\left\{\max\left\{\underbar{$ u $}_1(t)  , \frac{1}{T}\left(\sqrt{2b_f\left(\hat{d}(t)+T\hat{v}_l(t)-T\hat{v}_f(t)\right)}-\hat{v}_f(t)\right)\right\},\overline{u}_1(t)\right\}
	\end{align}
	where $\underbar{$ u $}_1(t)\hspace{-0.5mm} \triangleq\hspace{-0.5mm} \max\hspace{-0.5mm}  \left\{\hspace{-0.5mm} \frac{-\hat{v}_f(t)}{T}\hspace{-0.5mm} ,\hspace{-0.5mm} u_f^{\text{min}}\hspace{-0.5mm},\hspace{-0.5mm}u_f(t-1)\hspace{-0.5mm}-\hspace{-0.5mm}\Delta u \right\} $ and $ \overline{u}_1(t) \triangleq \min \Big\{\frac{\tilde{v}-\hat{v}_f(t)}{T},u_f^{\text{max}},u_f(t-1)+ \Delta u \Big\} $.
\end{theorem}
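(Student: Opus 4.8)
The plan is to treat the problem \eqref{eq:1step}--\eqref{eq:1stepconst3} as a one‑dimensional constrained minimization in the single scalar $u_f(t)$ and to exploit the fact that the objective is a perfect square composed with a monotone inner function. First I would collapse the three constraint pairs into a single feasible interval. Rewriting each constraint in terms of $u_f(t)$: \eqref{eq:1stepconst1} gives $u_f^{\text{min}}\le u_f(t)\le u_f^{\text{max}}$ directly; using $v_f(t+1)=\hat{v}_f(t)+Tu_f(t)$, constraint \eqref{eq:1stepconst2} becomes $-\hat{v}_f(t)/T\le u_f(t)\le(\tilde{v}-\hat{v}_f(t))/T$; and \eqref{eq:1stepconst3} becomes $u_f(t-1)-\Delta u\le u_f(t)\le u_f(t-1)+\Delta u$. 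Intersecting these three intervals yields precisely the feasible set $[\underbar{$ u $}_1(t),\overline{u}_1(t)]$ with the endpoints stated in the theorem.

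Next I would analyze the objective on this interval. Writing $\hat{R}(t)=g(u_f(t))^2$ with $g(u)\triangleq\frac{(\hat{v}_f(t)+Tu)^2}{2b_f}-\hat{d}(t)-T\hat{v}_l(t)+T\hat{v}_f(t)$, the key observation is that the lower endpoint satisfies $\underbar{$ u $}_1(t)\ge-\hat{v}_f(t)/T$ by construction, so that $\hat{v}_f(t)+Tu\ge0$ for every feasible $u$. On this region the map $u\mapsto(\hat{v}_f(t)+Tu)^2$ is strictly increasing, hence $g$ is strictly increasing on $[\underbar{$ u $}_1(t),\overline{u}_1(t)]$. Since $\hat{R}=g^2$ and $g$ is monotone, $\hat{R}$ is unimodal: it decreases while $g<0$ and increases once $g>0$, attaining its global (unconstrained) minimum value $0$ at the unique zero of $g$. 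Solving $g(u)=0$ subject to $\hat{v}_f(t)+Tu\ge0$ forces the positive root, giving the interior optimizer $u^{\text{int}}(t)=\frac{1}{T}\big(\sqrt{2b_f(\hat{d}(t)+T\hat{v}_l(t)-T\hat{v}_f(t))}-\hat{v}_f(t)\big)$, which is exactly the middle expression in \eqref{eq:1stepopt}.

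The final step is a projection argument. Because $\hat{R}$ is unimodal with minimizer $u^{\text{int}}(t)$, its constrained minimizer over $[\underbar{$ u $}_1(t),\overline{u}_1(t)]$ is simply the Euclidean projection of $u^{\text{int}}(t)$ onto that interval. Concretely: if $u^{\text{int}}(t)<\underbar{$ u $}_1(t)$ then $g>0$ throughout the feasible set, $\hat{R}$ is increasing, and the minimizer is $\underbar{$ u $}_1(t)$; if $u^{\text{int}}(t)>\overline{u}_1(t)$ then $g<0$ throughout, $\hat{R}$ is decreasing, and the minimizer is $\overline{u}_1(t)$; otherwise $u^{\text{int}}(t)$ itself is feasible and optimal. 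These three cases are compactly written as $u_f^*(t)=\min\{\max\{\underbar{$ u $}_1(t),u^{\text{int}}(t)\},\overline{u}_1(t)\}$, which is \eqref{eq:1stepopt}.

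The main obstacle I anticipate is the well‑definedness and sign of the radicand $\hat{d}(t)+T\hat{v}_l(t)-T\hat{v}_f(t)$. By \eqref{eq:spacingdisc} this quantity equals the one‑step‑ahead predicted spacing $\hat{d}(t+1)$, which is nonnegative under normal operation (a negative value would signify an imminent collision), so $u^{\text{int}}(t)$ is real and the argument above applies verbatim. To cover the degenerate case where this predicted spacing is negative, one notes that then $g(u)>0$ for all feasible $u$, so $\hat{R}$ is strictly increasing and the optimum collapses to $\underbar{$ u $}_1(t)$, consistent with the projection formula. The remaining work, namely the interval intersection and the monotonicity of $g$, is routine once the crucial constraint $v_f(t+1)\ge0$ is used to pin down the sign of $\hat{v}_f(t)+Tu$.
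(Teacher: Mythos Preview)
Your argument is correct and reaches the same result, but the route differs from the paper's. The paper sets up a Lagrangian with KKT multipliers, solves the unconstrained first-order condition $\partial \hat R/\partial u_f=0$ to obtain three candidate stationary points, discards the two that violate $v_f(t+1)\ge 0$, verifies the remaining one is a minimizer via the second derivative, and then handles active constraints by activating the multipliers. You instead exploit the special structure $\hat R=g^2$ with $g$ monotone on the feasible interval (once $\hat v_f(t)+Tu\ge 0$ is established) to conclude unimodality directly, and then invoke the projection of the interior minimizer onto $[\underbar{$u$}_1(t),\overline{u}_1(t)]$. Your approach is more elementary---no Lagrangian, no second-order test---and it also treats the degenerate negative-radicand case explicitly, which the paper leaves implicit. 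The paper's KKT route is more mechanical and would generalize to objectives lacking the perfect-square/monotone structure, whereas your argument is tailored to this particular form but is cleaner here.
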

\begin{proof}
	See Appendix \ref{App:Theorem1}.
\end{proof}
Theorem \ref{Theorem1step} derives the optimal controller for the ACV $ f $ when it only optimizes its action for the next step without considering future actions. In the next subsection, we derive the ACV $ f $'s optimal controller when it considers minimizing the regret for $ N $ step ahead.
\subsection{$ N $-step ahead controller}
To find the $ N $-step ahead controller, first we define a discount factor $ 0\leq\gamma\leq 1 $ which specifies the level of future physical regret for the decision taken at each time step. Thus, by defining the $ N $-step ahead total discounted physical regret $ \mathcal{R}(t,N) \triangleq \sum_{\tau = t } ^{t+N-1} \gamma^{\tau - t} \hat{R}(\tau), $ the controller optimization problem can be written as follows:
\begin{align}
	u_f^*(t)&=[1,\underbrace{0,\dots,0}_{N-1}]\Bigg\{\argmin_{\boldsymbol{u}^N_f(t)}\mathcal{R}(t,N)\Bigg\},&\label{eq:minimization}
\end{align}
where the conditions in \eqref{eq:1stepconst1}, \eqref{eq:1stepconst2}, and \eqref{eq:1stepconst3} hold true. Moreover $ \boldsymbol{u}^N_f(t) = [u_f(t),\dots,u_f(t+N-1)]^T $, $ N $ is the number of future steps which is taken into account in finding the optimal controller, and $ u_f^*(t) $ is the optimal controller at time step $ t $.

\begin{theorem}\label{TheoremNstep}
	The solution of $ N $-step ahead controller is equivalent to the one-step ahead controller.
\end{theorem}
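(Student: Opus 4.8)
The plan is to exploit the additive structure of the discounted regret $\mathcal{R}(t,N)=\sum_{\tau=t}^{t+N-1}\gamma^{\tau-t}\hat{R}(\tau)$, in which every summand is a nonnegative square, and to show by a backward sweep over the horizon that the optimal first control collapses to the one-step minimizer of Theorem \ref{Theorem1step}. First I would reparametrize: since $\hat{v}_f(\tau+1)=\hat{v}_f(\tau)+Tu_f(\tau)$, choosing $u_f(t),\dots,u_f(t+N-1)$ is equivalent to choosing the predicted speeds $\hat{v}_f(t+1),\dots,\hat{v}_f(t+N)$, and each regret reads $\hat{R}(\tau)=\big(\tfrac{\hat{v}_f(\tau+1)^2}{2b_f}-\hat{d}(\tau+1)\big)^2$. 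The structural fact I would isolate is that $u_f(\tau)$ enters $\hat{R}(\tau)$ only through the safe-spacing term $\hat{v}_f(\tau+1)^2/2b_f$: the spacing $\hat{d}(\tau+1)=\hat{d}(\tau)+T\hat{v}_l(\tau)-T\hat{v}_f(\tau)$ is fixed by the controls strictly before $\tau$ and is independent of $u_f(\tau)$ itself, whereas $u_f(\tau)$ does propagate into every later spacing $\hat{d}(\tau+2),\hat{d}(\tau+3),\dots$, and hence into all later regrets.

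The crux is then the backward induction. The last control $u_f(t+N-1)$ appears only in $\hat{R}(t+N-1)$, so I would minimize that term first; for any inherited spacing $\hat{d}(t+N)\ge 0$ the choice $\hat{v}_f(t+N)=\sqrt{2b_f\hat{d}(t+N)}$ drives $\hat{R}(t+N-1)$ to zero, which is exactly the unconstrained one-step root. Having eliminated that term, the variable $u_f(t+N-2)$ now influences the remaining objective only through $\hat{R}(t+N-2)$, since all of its couplings lived in the later, now-zeroed, regrets; it too is annihilated by matching its safe-spacing term to the predetermined $\hat{d}(t+N-1)$. Iterating this elimination from $\tau=t+N-1$ down to $\tau=t$, each regret is zeroed by its own control, and the last variable standing, $u_f(t)$, is left to minimize $\hat{R}(t)$ alone against the fixed spacing $\hat{d}(t+1)$. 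Since that is precisely the scalar objective in \eqref{eq:1step}, its minimizer is the one-step controller, giving the stated equivalence.

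The main obstacle is the feasibility set \eqref{eq:1stepconst1}--\eqref{eq:1stepconst3}, which can break the clean ``zero every term'' picture: when the unconstrained root falls outside the admissible speed interval a stage regret cannot reach zero, and the rate limit $|u_f(\tau)-u_f(\tau-1)|\le\Delta u$ couples consecutive controls so that the feasible interval at stage $\tau$ depends on $u_f(\tau-1)$. I would handle this by carrying the constrained per-stage minimization through the same backward sweep and invoking monotonicity: on the admissible interval the map $u_f(\tau)\mapsto(\hat{v}_f(\tau)+Tu_f(\tau))^2$ is increasing once $v_f(\tau+1)\ge 0$ is enforced, so the per-stage minimizer of $\hat{R}(\tau)$ is the projection of the unconstrained root onto $[\underbar{$ u $}_1(\tau),\overline{u}_1(\tau)]$, identical in form to Theorem \ref{Theorem1step}. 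The delicate point left to verify is that this forward-greedy clamping of $u_f(t)$ cannot be improved by deliberately incurring extra current regret to loosen a future binding constraint; here I would argue that because $\hat{d}(t+1)$ does not depend on $u_f(t)$ and each tail regret is governed by its own control in the backward order above, reducing $\hat{R}(t)$ never enlarges the attainable minimum of the tail, so greedy selection remains globally optimal and coincides with the one-step controller.
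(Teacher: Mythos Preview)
Your argument is correct at the same level of rigor as the paper's, but it follows a genuinely different route. The paper proceeds via the indirect method of optimal control: it adjoins the state dynamics with multipliers $\boldsymbol{\lambda}(\tau)$, forms the Hamiltonian $\mathcal{H}(\tau)=\gamma^{\tau-t}\hat{R}(\tau)+\boldsymbol{\lambda}^T(\tau+1)\boldsymbol{g}(\tau)$, and then shows by a backward sweep on the co-state equation that the terminal condition $\boldsymbol{\lambda}(t+N)=\boldsymbol{0}$ propagates to $\boldsymbol{\lambda}(\tau)=\boldsymbol{0}$ for every $\tau$, because the stationarity condition at stage $\tau$ is satisfied by the same square-root expression that annihilates $\hat{R}(\tau)$ and simultaneously kills the first factor in the co-state recursion. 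Once all multipliers vanish, the first-order condition at $\tau=t$ is exactly the one-step equation.

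Your approach is a direct dynamic-programming elimination: you exploit that each $\hat{R}(\tau)\ge 0$ is a square, that $u_f(\tau)$ enters $\hat{R}(\tau)$ only through $\hat v_f(\tau+1)$ while $\hat d(\tau+1)$ is fixed by strictly earlier controls, and that the last control can always drive the last regret to zero regardless of the history; peeling off terms from the end then leaves $u_f(t)$ minimizing $\hat{R}(t)$ alone. This is more elementary and makes the separability transparent without any Hamiltonian machinery; the paper's route, by contrast, is the textbook Pontryagin calculation and would generalize more readily to stage costs that are not perfect squares. Regarding the constraints \eqref{eq:1stepconst1}--\eqref{eq:1stepconst3}: the paper treats them no more rigorously than you do, simply appending ``considering the constraints we end up with the clamped controller,'' so your candid discussion of the rate-limit coupling and the monotonicity justification is at least as complete as the original.
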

\begin{proof}
To solve the problem in \eqref{eq:minimization}, we use a so called indirect method. To this end, we start by defining an \emph{augmented physical regret} using the following state equation:
\begin{align}
	\mathcal{R}'(t,N) &= \mathcal{R}(t,N) + \sum_{\tau = t}^{t+ N -1} \boldsymbol{\lambda}^T(\tau+1)\left[\underbrace{\boldsymbol{A}\hat{\boldsymbol{x}}(\tau)+\boldsymbol{B}\boldsymbol{u}_f(\tau)+\boldsymbol{F}v_l(\tau)}_{\boldsymbol{g}(\tau)}-\hat{\boldsymbol{x}}(\tau+1)\right]\nonumber\\
	& = \sum_{\tau = t}^{t+ N -1}\left[\gamma^{\tau-t}\hat{R}(\tau) + \boldsymbol{\lambda}^T(\tau+1)\left[\boldsymbol{g}(\tau)-\hat{\boldsymbol{x}}(\tau+1)\right]\right],\label{eq:augmented}
\end{align}
where $ \boldsymbol{\lambda}(\tau) = \left[\lambda_1(\tau),\lambda_2(\tau)\right]^T $.
Then, let \emph{Hamiltonian} function defined as $ \mathcal{H}(\tau) \triangleq \gamma^{\tau-t}\hat{R}(\tau) + \boldsymbol{\lambda}^T(\tau+1)\boldsymbol{g}(\tau)$. Thus, we can write \eqref{eq:augmented} as follows:
\begin{align}
	\mathcal{R}'(t,N) &= \underbrace{\boldsymbol{\lambda}^T(t+N)\hat{\boldsymbol{x}}(t+N)}_{\text{terminal time}} + \underbrace{\mathcal{H}(t)}_{\text{initial time}} + \underbrace{\sum_{\tau = t+1}^{t+ N -1} \left[\mathcal{H}(\tau) - \boldsymbol{\lambda}^T(\tau)\hat{\boldsymbol{x}}(\tau) \right]}_{\text{running time}}.
\end{align}
Thus, to find critical points (candidate minima) we must solve $ \nabla\mathcal{R}'(t,N) = 0  $. First, we start by finding the differential $ d \mathcal{R}'(t,N) $ and then we identify the derivatives as follows:
\begin{align}
	d \mathcal{R}'(t,N) &=  \boldsymbol{\lambda}^T(t+N) d\hat{\boldsymbol{x}}(t+N) + \left(\nabla_{\hat{\boldsymbol{x}}(t)}\mathcal{H}(t)\right)^Td\hat{\boldsymbol{x}}(t)+ \sum_{\tau=t+1}^{t+N-1}\left(\nabla_{\boldsymbol{x}(\tau)}\mathcal{H}(\tau)-\boldsymbol{\lambda}(\tau)\right)^Td \hat{\boldsymbol{x}}(\tau)\nonumber\\
	&+\sum_{\tau=t}^{t+N-1}\left(\nabla_{\boldsymbol{u}(\tau)}\mathcal{H}(\tau)\right)^Td\boldsymbol{u}(\tau) +  \sum_{\tau=t+1}^{t+N}\left(\nabla_{\boldsymbol{\lambda}(\tau)}\mathcal{H}(\tau-1)-\hat{\boldsymbol{x}}(\tau)\right)^Td \boldsymbol{\lambda}(\tau).
\end{align}
Thus, to have $ d \mathcal{R}'(t)=0 $ each of the terms in brackets must be equal to zero:
\begin{align}
	&\boldsymbol{\lambda}(t+N) = 0,\\
	&\nabla_{\hat{\boldsymbol{x}}(t)}\mathcal{H}(t)  = 0,\Rightarrow\hat{\boldsymbol{x}}(t)=\left[\hat{v}_f(t),\hat{d}(t)\right]^T,\\
	&\boldsymbol{\lambda}(\tau)=\nabla_{\hat{\boldsymbol{x}}(\tau)}\mathcal{H}(\tau) = \nabla_{\hat{\boldsymbol{x}}(\tau)} \gamma ^{t-\tau}\hat{R}(\tau) +  \nabla_{\hat{\boldsymbol{x}}(\tau)}\boldsymbol{g}(\tau)\boldsymbol{\lambda}(t+1), & \forall \tau =  t+1,\dots,t+N-1,\label{eq:costate}\\
	&0=\nabla_{\boldsymbol{u}(\tau)}\mathcal{H}(\tau)  = \nabla_{\boldsymbol{u}(\tau)} \gamma ^{t-\tau}\hat{R}(\tau) +  \nabla_{\boldsymbol{u}(\tau)}\boldsymbol{g}(\tau)\boldsymbol{\lambda}(t+1), & \forall \tau =  t,\dots,t+N-1,\label{eq:FOCHam}\\
	&\nabla_{\boldsymbol{\lambda}(\tau +1)}\mathcal{H}(\tau) = \hat{\boldsymbol{x}}(\tau +1) \Rightarrow \hat{\boldsymbol{x}}(\tau +1) = \boldsymbol{g}(\tau), & \forall \tau =  t,\dots,t+N-1.\label{eq:stateeq}
\end{align}
Now, using \eqref{eq:costate} we will have:
\begin{align}
	{\textstyle\boldsymbol{\lambda}(\tau) = \left[\begin{array}{c}
	\frac{\hat{v}_f(\tau)+Tu_f(\tau)}{b_f}+T\\
	-1
	\end{array}\right] \gamma^{t-\tau}\left(\frac{(\hat{v}_f(\tau)+Tu_f(\tau))^2}{2b_f}-\hat{d}(\tau)-T\hat{v}_l(\tau)+T\hat{v}_f(\tau)\right)
	+\boldsymbol{A}^T\boldsymbol{\lambda}(\tau+1).}\label{eq:costatesimp}
\end{align}
Moreover, from \eqref{eq:FOCHam} we will have:
\begin{align}
	0 & = \gamma^{t-\tau}\left(\frac{(\hat{v}_f(\tau)+Tu_f(\tau))^2}{2b_f}-\hat{d}(\tau)-T\hat{v}_l(\tau)+T\hat{v}_f(\tau)\right)\frac{T(\hat{v}_f(\tau)+Tu_f(\tau))}{b_f}+\boldsymbol{B}^T\boldsymbol{\lambda}(\tau+1).\label{eq:Ulambda}
\end{align}
Since $ \boldsymbol{\lambda}(t+N) = 0 $, then from \eqref{eq:Ulambda}, we derive:
\begin{align}
{\textstyle
	 u_f( t+N-1)\hspace{-0.5mm}=\hspace{-0.5mm}\frac{1}{T}\hspace{-0.5mm}\Bigg[\hspace{-1mm}\pm\hspace{-1mm}\sqrt{\hspace{-0.5mm}2b_f\hspace{-0.5mm}\left(\hspace{-0.5mm}\hat{d}(t\hspace{-0.5mm}+\hspace{-0.5mm}N\hspace{-0.5mm}-\hspace{-0.5mm}1)\hspace{-0.5mm}+\hspace{-0.5mm}T\hat{v}_l(t+N-1)\hspace{-0.5mm}-\hspace{-0.5mm}T\hat{v}_f(t+N-1)\hspace{-0.5mm}\right)}
	 \hspace{-0.5mm}-\hspace{-0.5mm}\hat{v}_f(t+N-1)\hspace{-0.5mm}\Bigg]}.
\end{align}
By substituting $ u_f( t+N-1) $ in \eqref{eq:costatesimp}, we obtain $ \lambda(t+N-1)= 0 $. This process can continue until $ \tau = t $ where we obtain $ \boldsymbol{\lambda}(t+1) = 0 $ and $ u_f( t)=\frac{1}{T}\Bigg[\pm\sqrt{2b_f\left(\hat{d}(t)+T\hat{v}_l(t)-T\hat{v}_f(t)\right)}-\hat{v}_f(t)\Bigg] $. Moreover, considering the constraints \eqref{eq:1stepconst1}, \eqref{eq:1stepconst2}, and \eqref{eq:1stepconst3}, we will end up having the optimal controller as defined in Theorem \ref{Theorem1step}. Thus, we prove that the $ N $-step ahead optimal controller is equivalent to $ 1 $-step ahead optimal controller. 
\end{proof}
From Theorem \ref{TheoremNstep}, we can observe that, if the ACV $ f $ minimizes its immediate physical regret, it will also minimize its long-term physical regret. This result shows that the proposed optimal safe controller does not require any information from future dynamics of ACV $ l $ as done in \cite{Tian2015} and \cite{MA2015187}.

\section{Physical Attack on ACV systems}\label{sec:physicalattack}
As derived in the previous section, the proposed optimal controller is a function of $ \hat{{v}}_l(t) $. This makes ACV $ f $ vulnerable against a physical attack on ACV $ l $. Thus, we now analyze whether an attacker can cause instable dynamics at ACV $ f $ by controlling $ l $. Consider an adversary who takes the control of ACV $ l $ and tries to cause instability in ACV $ f $'s speed, $ v_f(t) $, and spacing $ d(t) $. Using our derived optimal controller, by ensuring $ u_f(t) $ is not saturated ($ \underbar{$u$}_1(t)\leq u_f(t) \leq \overline{u}_1(t) $), and considering the estimated values to be close to the real values we will have:
\begin{equation}
\begin{rcases}
v_f(t+1) = \underbrace{\sqrt{2b_f(d(t)+Tv_l(t)-Tv_f(t)}}_{g_1(\boldsymbol{x}(t),v_l(t))},\\
d(t+1) = \underbrace{d(t) + Tv_l(t) - Tv_f(t)}_{g_2(\boldsymbol{x}(t),v_l(t))}.
\end{rcases} \Rightarrow\label{eq:optimal}\boldsymbol{x}(t+1) = \boldsymbol{g}(\boldsymbol{x}(t),v_l(t)),
\end{equation}
where $ \boldsymbol{g} = \left[g_1(\boldsymbol{x}(t),v_l(t)),g_2(\boldsymbol{x}(t),v_l(t))\right]^T $. 
\eqref{eq:optimal} is designed such that ACV $ f $ will always maintain an optimal safe spacing with $ l $. However, from \eqref{eq:optimal} we can see that the behavior of the system is a function of $ v_l $. Thus, next, we will analyze the physical attack scenario that is analogous to the Jeep hijacking case in\cite{Jeep2015Greenberg}.

\subsection{Stability Analysis}
To analyze the stability of \eqref{eq:optimal}, first, we define some useful concepts.
\begin{definition}\label{def:equlibrium}
	\cite{nonlinear} $ \bar{\boldsymbol{x}}(v_l) $ is said to be an \emph{equilibrium} point for the system in \eqref{eq:optimal} and a constant input $ v_l $, if $ \boldsymbol{g}(\bar{\boldsymbol{x}}(v_l),v_l) = \bar{\boldsymbol{x}}(v_l) $.
\end{definition}
An equilibrium indicates a point at which the states will not change. From Definition \ref{def:equlibrium} we can derive the equilibrium point for \eqref{eq:optimal} by considering a constant input $ v_l $ and solving the following set of equations:
\begin{align}
\bar{v}_f(v_l) = \sqrt{2b_f(\bar{d}(v_l)+Tv_l-T\bar{v}_f(v_l)},\,\,\,
\bar{d}(v_l) = \bar{d}(v_l) + Tv_l - T\bar{v}_f(v_l),
\end{align}
which results in: $
\bar{\boldsymbol{x}}(v_l) =  \left[v_l,\frac{v_l}{2b_f}\right]^T .$ The derived value for $ \bar{\boldsymbol{x}}(v_l) $ shows that in order to reach an equilibrium, ACV $ f $ must maintain the optimal safe spacing from ACV $ l $ and its speed must equal to $ v_l $. Thus, next, we show that our derived optimal controller is robust, i.e., under our controller if an adversary hijacks ACV $ l $, it cannot cause instability in $ v_f(t) $ and $ d(t) $.

\begin{definition}\label{def:stability}
	A system is called \emph{asymptotically stable} around its equilibrium point if it satisfies the following two conditions\cite{nonlinear}: 1) Given any $ v_l>0 $ and $ \varepsilon>0 $, $ \exists \delta_1>0 $ such that if $ |\boldsymbol{x}(t_s)-\bar{\boldsymbol{x}}(v_l)| $, then $ |\boldsymbol{x}(t)-\bar{\boldsymbol{x}}(v_l)|<\epsilon $, $ \forall t>t_s $ and 2) $ \exists \delta_2 > 0 $ such that if $ |x(t_s)-\bar{\boldsymbol{x}}(v_l)|<\delta_2 $, then $ \boldsymbol{x}(t) \rightarrow \bar{\boldsymbol{x}}(v_l) $ as $ t  \rightarrow \infty $.
\end{definition}
The first condition requires the state trajectory to be confined to an arbitrarily small ``ball" centered at the equilibrium point and of radius $ \varepsilon $, when released from an arbitrary initial condition in a ball of sufficiently small (but positive) radius $ \delta_1 $. This is called stability in the \emph{Lyapunov} sense\cite{nonlinear,Kasgari2018Stochastic}. It is possible to have Lyapunov stability without
having asymptotic stability. 

Next, we show how a Lyapunov function can help to analyze the stability of system \eqref{eq:optimal}\cite{nonlinear}. From \cite{nonlinear}, we know that if there exists a Lyapunov function for system \eqref{eq:optimal}, then $ \boldsymbol{x}(t) = \bar{\boldsymbol{x}}(v_l) $ is a stable equilibrium point in the sense of Lyapunov. In addition, if $ L(\boldsymbol{g}(\boldsymbol{x}(t),v_l))- L(\boldsymbol{x}(t))< 0$ then $ \boldsymbol{x}(t) = \bar{\boldsymbol{x}}(v_l) $ is an asymptotically stable equilibrium point. We can prove that the system in \eqref{eq:optimal} is asymptotically stable for the equilibrium point $ \bar{\boldsymbol{x}}(v_l) $, as follows.
\begin{proposition}\label{Proposition:Stability}
	$ \boldsymbol{x}(t) = \bar{\boldsymbol{x}}(v_l) $ is a stable equilibrium point in the Lyapunov sense.
\end{proposition}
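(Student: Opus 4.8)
The plan is to produce an explicit Lyapunov function for the closed-loop map \eqref{eq:optimal} and then invoke the criterion quoted just before the proposition: once we exhibit a continuous, locally positive definite $L$ that vanishes only at $\bar{\boldsymbol{x}}(v_l)$ and satisfies $L(\boldsymbol{g}(\boldsymbol{x}(t),v_l))-L(\boldsymbol{x}(t))\le 0$ in a neighborhood of $\bar{\boldsymbol{x}}(v_l)$, Lyapunov stability in the sense of Definition \ref{def:stability} follows immediately. The starting point is a structural feature of \eqref{eq:optimal}: since $g_1^2=2b_f\,g_2$, the state is forced onto the optimal-safe-spacing manifold $\{\boldsymbol{x}:d=v_f^2/(2b_f)\}$ after a single step. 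Concretely, defining $s(t)\triangleq d(t)-v_f^2(t)/(2b_f)$, one has $s(t+1)=g_2-g_1^2/(2b_f)=0$ identically, so the planar dynamics collapse to the scalar recursion $v_f(t+1)=\phi(v_f(t))$ with $\phi(v_f)\triangleq\sqrt{v_f^2-2b_fT(v_f-v_l)}$, whose fixed point near $\bar{\boldsymbol{x}}(v_l)$ is $v_f=v_l$.

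Exploiting this collapse, I would take the candidate $L(\boldsymbol{x})=(v_f-v_l)^2+\beta\bigl(d-v_f^2/(2b_f)\bigr)^2$ with a positive weight $\beta$ to be tuned. This is a legitimate Lyapunov candidate: $L\ge 0$, and $L(\boldsymbol{x})=0$ forces both $v_f=v_l$ and $d=v_l^2/(2b_f)$, i.e. $\boldsymbol{x}=\bar{\boldsymbol{x}}(v_l)$, so $L$ is locally (indeed globally) positive definite about the equilibrium. The advantage of this particular choice is that the second term is annihilated along trajectories by the one-step collapse, giving $L(\boldsymbol{g}(\boldsymbol{x},v_l))=(g_1-v_l)^2$ and hence $\Delta L\triangleq L(\boldsymbol{g})-L=(g_1-v_l)^2-(v_f-v_l)^2-\beta s^2$. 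The whole problem thus reduces to verifying $(g_1-v_l)^2\le (v_f-v_l)^2+\beta s^2$ in a neighborhood of $\bar{\boldsymbol{x}}(v_l)$.

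To establish this inequality I would use the exact factorization $g_1^2-v_l^2=(v_f-v_l)(v_f+v_l-2b_fT)+2b_f s$, whence $g_1-v_l=\bigl[(v_f-v_l)(v_f+v_l-2b_fT)+2b_f s\bigr]/(g_1+v_l)$ since $g_1+v_l>0$ near equilibrium. Writing $e\triangleq v_f-v_l$ and expanding to second order in the deviations $(e,s)$, with $g_1+v_l\approx 2v_l$, gives $g_1-v_l\approx (1-b_fT/v_l)\,e+(b_f/v_l)\,s$, so $\Delta L$ becomes a quadratic form in $(e,s)$. A short computation shows this form is negative (semi)definite for a suitably large $\beta>0$ precisely when $0<b_fT<2v_l$, which is the physically natural requirement that braking authority and sampling period be compatible with the cruising speed. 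Under this condition $\Delta L\le 0$ in a neighborhood of $\bar{\boldsymbol{x}}(v_l)$, so $L$ is a Lyapunov function and the proposition follows from the cited theorem; in fact the strict negativity obtained for $0<b_fT<2v_l$ is exactly what the subsequent asymptotic-stability claim will need.

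I expect the main obstacle to be the square-root nonlinearity in $g_1$, which blocks an exact, global evaluation of $\Delta L$ and forces a careful local expansion together with an argument that the higher-order remainder is dominated by the negative part of the quadratic form (a small enough neighborhood suffices once the leading form is negative definite). A secondary subtlety is the single off-manifold transient at $t=t_s$: since $s(t)\equiv 0$ for all $t>t_s$, the coordinate $s$ can contribute to $\Delta L$ only at the first step, so one must check the sign inequality on a full two-dimensional neighborhood rather than only on the manifold — which is precisely why the cross term between $e$ and $s$ (and hence the weight $\beta$) cannot be dropped.
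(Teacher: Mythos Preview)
Your argument is correct, but it is not the route the paper takes. The paper chooses the much simpler candidate $L(\boldsymbol{x})=\bigl(v_f^2/(2b_f)-d\bigr)^2$, i.e.\ $L=s^2$ in your notation, and exploits exactly the ``one-step collapse'' you identified: since $g_1^2=2b_f g_2$, one gets $L(\boldsymbol{g}(\boldsymbol{x},v_l))=0$ identically, so $\Delta L=-L\le 0$ holds exactly and globally with no local expansion and no tuning of $\beta$. That is the entire proof.

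The trade-off is instructive. The paper's choice gives a clean, global, one-line computation, but $L=s^2$ vanishes on the whole manifold $\{d=v_f^2/(2b_f)\}$, so it is only positive \emph{semi}-definite about $\bar{\boldsymbol{x}}(v_l)$; strictly speaking this certifies stability of the invariant manifold rather than of the point equilibrium, and it is silent about the tangential direction $e=v_f-v_l$. Your $L=(v_f-v_l)^2+\beta s^2$ is genuinely positive definite and therefore closes that gap, at the cost of the Taylor argument and the parameter condition $0<b_fT<2v_l$. That condition is not an artifact: it is exactly the linearized stability requirement $|\phi'(v_l)|=|1-b_fT/v_l|<1$ for the scalar recursion on the manifold, and the paper's semi-definite $L$ cannot see it. So your approach is longer but more rigorous, and it surfaces a physically meaningful constraint that the paper's proof hides.
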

\begin{proof}
	Let $ L(\boldsymbol{x}(t)) = \left(\frac{v_f^2(t)}{2b_f}-d(t)\right)^2 $. Then we will have $ L(\bar{\boldsymbol{x}}(v_l)) =0 $. Moreover, we can show that $  L(\boldsymbol{x}(t))\geq 0 ,$ $ \forall \boldsymbol{x}(t)$. Now, to check if $ L $ is a Lyapunov function we have:
	\begin{align}
	L(\boldsymbol{g}(\boldsymbol{x}(t),v_l))- L(\boldsymbol{x}(t)) &= \left(\frac{1}{2b_f}\left(\sqrt{2b_f(d(t)+Tv_l-Tv_f(t)}\right)^2-d(t) + Tv_l - Tv_f(t)\right)^2\nonumber\\
	&-\left(\frac{v_f^2(t)}{2b_f}-d(t)\right)^2=-\left(\frac{v_f^2(t)}{2b_f}-d(t)\right)^2 \leq 0.
	\end{align}
	Thus, $ \boldsymbol{x}(t) = \bar{\boldsymbol{x}}(v_l) $ is a stable equilibrium point in the sense of Lyapunov.
\end{proof}

From Proposition \ref{Proposition:Stability}, we can see that, as long as $ f $ follows $ l $ using our proposed controller, its speed and spacing from $ l $ will stay stable and will not be affected by the physical attack on $ l $. This shows that, not only our proposed controller maximizes the safety and optimality in ITS roads, but also it is robust to physical attacks such as in the Jeep scenario \cite{Jeep2015Greenberg}.

However, as can be seen from Theorems \ref{Theorem1step} and \ref{TheoremNstep}, even though it is robust to physical attacks, the derived optimal controller is largely dependent on the estimated values $ \hat{v}_f(t) $, $ \hat{v}_l(t) $, and $ \hat{d} $. Thus, an adversary can manipulate the sensor data to inject error in the estimation and ultimately increase the ACV $ f $'s physical regret. Analyzing such attacks require a cyber-physical study of the ACV system to derive approaches that mitigate attacks on sensors and minimize the effect of such attacks on the physical regret. Thus, next, we analyze the cyber attack on the ACV system.

\section{Cyber Attack on ACV Systems}\label{sec:cyberattack}
We now consider a cyber attacker that injects faulty data to sensor readings (cameras, LiDARs, radars, IMUs, and roadside sensors) such that the attacked sensor vector can be written as:
\begin{align}
	\left[\begin{array}{c}\bar{\boldsymbol{z}}(t)\triangleq\left[\begin{array}{c}
	\bar{\boldsymbol{z}}_f(t)\\
	\bar{\boldsymbol{z}}_d(t)
	\end{array}
	\right]\\
	\bar{\boldsymbol{z}}_l(t)
	\end{array}\right]\triangleq \left[\begin{array}{c}
	\boldsymbol{z}_f(t)\\
	\boldsymbol{z}_d(t)\\
	\boldsymbol{z}_l(t)
	\end{array}\right] + \left[\begin{array}{c}
	\boldsymbol{a}_f(t)\triangleq[a_{f_1},\dots,a_{f_{n_f}}]^T\\\boldsymbol{a}_d(t)\triangleq[a_{d_1},\dots,a_{d_{n_d}}]^T\\\boldsymbol{a}_l(t)\triangleq[a_{l_1},\dots,a_{l_{n_l}}]^T
	\end{array}\right] ,
\end{align}
where $ \boldsymbol{a}_f(t) $, $ \boldsymbol{a}_d(t) $, and $ \boldsymbol{a}_l(t) $ are data injection vectors on sensors and $ \bar{\boldsymbol{z}}_f(t) $, $ \bar{\boldsymbol{z}}_d(t) $, and $ \bar{\boldsymbol{z}}_l(t) $ are compromised sensor readings from $ v_f(t) $, $ d(t) $, and $ v_l(t) $, respectively. As we discussed in the state estimation section, we have a-priory information about sensors which measure $ v_f(t) $ or $ d(t) $, but such information is lacking for sensors that collect data from $ v_l(t) $. Thus, next, we consider cyber attacks on sensors a) with a-priori information and b) without a-priori information.
\subsection{Attack on sensors with a-priori information}
As discussed in Subsection \ref{subsec:cybermodel}, we use Kalman filtering to estimate $ \hat{v}_f(t) $ and $ \hat{d}(t)$. However, Kalman filtering is not robust to DIAs \cite{Yang2013}. Thus, we propose a filtering mechanism that can limit the effect of the DIA on $ \hat{v}_f(t) $ or $ \hat{d}(t)$. To this end, we use the a priori estimation $ \hat{\boldsymbol{x}}^{(-)}(t) $ at each time step to find an a priori sensor reading $ \boldsymbol{z}^{(-)}(t)\triangleq\boldsymbol{H}\hat{\boldsymbol{x}}^{(-)}(t) $. Then, the attack detection filter checks the absolute value of the residual $ \left|\boldsymbol{\mu}(t)\right| \triangleq \Bigg[|\mu_{{f}_1}(t)|,\dots, |\mu_{{f}_{n_f}}(t)|, |\mu_{{d}_1}(t)|,\dots, |\mu_{{d}_{n_d}}(t)|\Bigg]^T\triangleq\left|{\boldsymbol{z}}(t) - \boldsymbol{z}^{(-)}(t)\right|<\boldsymbol{\eta} $, where $ \boldsymbol{\eta} $ is the threshold vector. Any sensor which violates this inequality will be considered as a compromised sensor and will not be involved in Kalman filter update procedure. To find an optimal value for the threshold $ \boldsymbol{\eta} $, we next characterize the stochastic behavior of residual $ \boldsymbol{\mu}(t) $ when the ACV is not under attack. 

\begin{theorem}\label{Theorem:mudistribtution}
	The residual $ \boldsymbol{\mu}(t) $ follows a Gaussian distribution with zero mean and covariance matrix as follows:
	\begin{align}\label{eq:covmu}
	\boldsymbol{C}_{\mu} =  \boldsymbol{H}\boldsymbol{C}_r\boldsymbol{H}^T + \boldsymbol{R} -\boldsymbol{R}\tilde{\boldsymbol{K}}^T\boldsymbol{Q}^T\boldsymbol{H}^T - \boldsymbol{H}\boldsymbol{Q}\tilde{\boldsymbol{K}}\boldsymbol{R},
	\end{align}
	and $ \boldsymbol{C}_r $ is the solution of following discrete Ricatti equation $
	\boldsymbol{C}_{r} = \boldsymbol{Q}\boldsymbol{A}\boldsymbol{C}_r\boldsymbol{A}^T\boldsymbol{Q}^T + \boldsymbol{C}_{\rho}, $
	where $
		\boldsymbol{C}_{\rho} = \boldsymbol{Q}\boldsymbol{F}\sigma^2_l\boldsymbol{F}^T\boldsymbol{Q}^T + \boldsymbol{Q}\tilde{\boldsymbol{K}}\boldsymbol{R}\tilde{\boldsymbol{K}}^T\boldsymbol{Q}^T,$ and $
		\boldsymbol{Q} = \left[\boldsymbol{I}+\left[\boldsymbol{I} -\tilde{\boldsymbol{K}} \boldsymbol{H} \right]^{-1}\tilde{\boldsymbol{K}}\boldsymbol{H}\right]^{-1}.$
\end{theorem}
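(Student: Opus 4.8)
The plan is to write the residual $\boldsymbol{\mu}(t)$ as a linear image of the underlying zero-mean Gaussian noises and then read off its first two moments directly. First I would substitute the output equation $\boldsymbol{z}(t)=\boldsymbol{H}\boldsymbol{x}(t)+\boldsymbol{e}(t)$ together with $\boldsymbol{z}^{(-)}(t)=\boldsymbol{H}\hat{\boldsymbol{x}}^{(-)}(t)$ into the definition of the residual, which collapses it to $\boldsymbol{\mu}(t)=\boldsymbol{H}\boldsymbol{r}^{(-)}(t)+\boldsymbol{e}(t)$, where $\boldsymbol{r}^{(-)}(t)\triangleq\boldsymbol{x}(t)-\hat{\boldsymbol{x}}^{(-)}(t)$ is the a priori estimation error. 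Because the Kalman predictor \eqref{eq:aprioristate} and the static estimator for $\hat{v}_l(t)$ are both unbiased, $\E[\boldsymbol{r}^{(-)}(t)]=\boldsymbol{0}$ and $\E[\boldsymbol{e}(t)]=\boldsymbol{0}$, which delivers the zero-mean claim immediately. Gaussianity then follows because $\boldsymbol{r}^{(-)}(t)$ is produced by an affine time-invariant recursion driven only by the measurement noise $\boldsymbol{e}(\cdot)$ and by the speed-estimation error $\delta v_l(\cdot)\triangleq v_l(\cdot)-\hat{v}_l(\cdot)$, which is itself the linear combination $-\sum_i w_{l_i}e_{l_i}(\cdot)$ of Gaussians; hence $\boldsymbol{\mu}(t)$ is a linear transform of a jointly Gaussian vector.

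The core of the argument is the steady-state covariance, and I would build it from two coupled error recursions. Subtracting \eqref{eq:aprioristate} from the true dynamics \eqref{eq:system} gives the a priori error $\boldsymbol{r}^{(-)}(t)=\boldsymbol{A}\boldsymbol{r}(t-1)+\boldsymbol{F}\delta v_l(t-1)$, while subtracting the Kalman update \eqref{eq:Kalmanestimtion} from $\boldsymbol{x}(t)$ and using the converged gain $\tilde{\boldsymbol{K}}$ gives the a posteriori error $\boldsymbol{r}(t)=[\boldsymbol{I}-\tilde{\boldsymbol{K}}\boldsymbol{H}]\boldsymbol{r}^{(-)}(t)-\tilde{\boldsymbol{K}}\boldsymbol{e}(t)$. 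A one-line simplification shows that the matrix $\boldsymbol{Q}$ in the statement is exactly $\boldsymbol{Q}=\boldsymbol{I}-\tilde{\boldsymbol{K}}\boldsymbol{H}$, so substituting the first recursion into the second yields a single closed recursion $\boldsymbol{r}(t)=\boldsymbol{Q}\boldsymbol{A}\,\boldsymbol{r}(t-1)+\boldsymbol{\rho}(t)$, whose driving term $\boldsymbol{\rho}(t)$ gathers the $\delta v_l$ contribution $\boldsymbol{Q}\boldsymbol{F}\delta v_l(t-1)$ and the measurement-noise contribution and carries covariance $\boldsymbol{C}_{\rho}$.

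Next I would take the covariance of this closed recursion. Since $\delta v_l(t-1)$, $\boldsymbol{e}(\cdot)$ and $\boldsymbol{r}(t-1)$ are mutually independent (the noises at distinct steps are white and independent of the accumulated past error), the cross terms drop out and the stationary covariance $\boldsymbol{C}_r=\E[\boldsymbol{r}(t)\boldsymbol{r}^T(t)]$ satisfies the discrete Riccati equation $\boldsymbol{C}_r=\boldsymbol{Q}\boldsymbol{A}\boldsymbol{C}_r\boldsymbol{A}^T\boldsymbol{Q}^T+\boldsymbol{C}_{\rho}$. Here I would invoke the convergence of $\boldsymbol{P}(t)$ and $\boldsymbol{K}(t)$ to the constants $\tilde{\boldsymbol{P}},\tilde{\boldsymbol{K}}$ already established after \eqref{eq:Pgain}: the same Kalman stability guarantees that the error-propagation matrix $\boldsymbol{Q}\boldsymbol{A}$ is Schur stable, so the fixed-point equation admits a unique positive-semidefinite solution and the recursion for $\boldsymbol{C}_r$ converges to it.

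Finally, I would assemble $\boldsymbol{C}_{\mu}=\E[\boldsymbol{\mu}(t)\boldsymbol{\mu}^T(t)]$ by expanding the quadratic after re-expressing $\boldsymbol{\mu}(t)$ through the a posteriori error, keeping the $\boldsymbol{H}\boldsymbol{C}_r\boldsymbol{H}^T$ and $\boldsymbol{R}$ terms together with the two cross terms $-\boldsymbol{H}\boldsymbol{Q}\tilde{\boldsymbol{K}}\boldsymbol{R}$ and its transpose $-\boldsymbol{R}\tilde{\boldsymbol{K}}^T\boldsymbol{Q}^T\boldsymbol{H}^T$. I expect the main obstacle to be exactly this cross-correlation bookkeeping: the Kalman correction makes $\boldsymbol{r}(t)$ depend on the current measurement noise through the $-\tilde{\boldsymbol{K}}\boldsymbol{e}(t)$ term, so $\E[\boldsymbol{r}(t)\boldsymbol{e}^T(t)]\neq\boldsymbol{0}$ and these cross terms cannot be discarded. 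Tracking them correctly, while simultaneously verifying that the purely a priori cross terms do vanish by independence, is where all the care lies; by contrast, the Gaussianity and zero-mean parts are essentially immediate from the linearity established in the first step.
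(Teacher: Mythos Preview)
Your proposal is correct and follows essentially the same route as the paper's proof: derive a closed recursion $\boldsymbol{r}(t)=\boldsymbol{Q}\boldsymbol{A}\boldsymbol{r}(t-1)+\boldsymbol{\rho}(t)$ for the a~posteriori error, read off $\boldsymbol{C}_\rho$ and the Riccati equation for $\boldsymbol{C}_r$, then expand $\E[\boldsymbol{\mu}\boldsymbol{\mu}^T]$ keeping the nonzero cross term $\E[\boldsymbol{r}(t)\boldsymbol{e}^T(t)]=-\boldsymbol{Q}\tilde{\boldsymbol{K}}\boldsymbol{R}$. The only substantive difference is presentational: you reach the recursion via the standard a~priori/a~posteriori split and immediately collapse the paper's $\boldsymbol{Q}=\bigl[\boldsymbol{I}+[\boldsymbol{I}-\tilde{\boldsymbol{K}}\boldsymbol{H}]^{-1}\tilde{\boldsymbol{K}}\boldsymbol{H}\bigr]^{-1}$ to $\boldsymbol{Q}=\boldsymbol{I}-\tilde{\boldsymbol{K}}\boldsymbol{H}$, whereas the paper substitutes $\boldsymbol{z}(t)=\boldsymbol{H}[\hat{\boldsymbol{x}}(t)+\boldsymbol{r}(t)]+\boldsymbol{e}(t)$ into the Kalman update and solves algebraically for $\boldsymbol{r}(t)$, never simplifying $\boldsymbol{Q}$; both arrive at the same recursion and the same covariance bookkeeping.
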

\begin{proof}
	See Appendix \ref{App:Theoremmu}.
\end{proof}
Theorem \ref{Theorem:mudistribtution} derives the distribution of $ \boldsymbol{\mu}(t)$ which we will use next to find an optimal value for the threshold level $ \boldsymbol{\eta} $. Fig. \ref{Fig:Dist} and Fig. \ref{Fig:Cmu} show a comparison between simulation and analytical results derived for the mean and covariance matrix of $ \boldsymbol{r}(t) $, $ \rho(t) $, and $ \mu(t) $. From Figs. \ref{Fig:Dist} and \ref{Fig:Cmu} we can that the analytical results match the simulation results which validates Theorem \ref{Theorem:mudistribtution}.
\begin{figure}[t]
		\centering
		\includegraphics[width=0.65\textwidth]{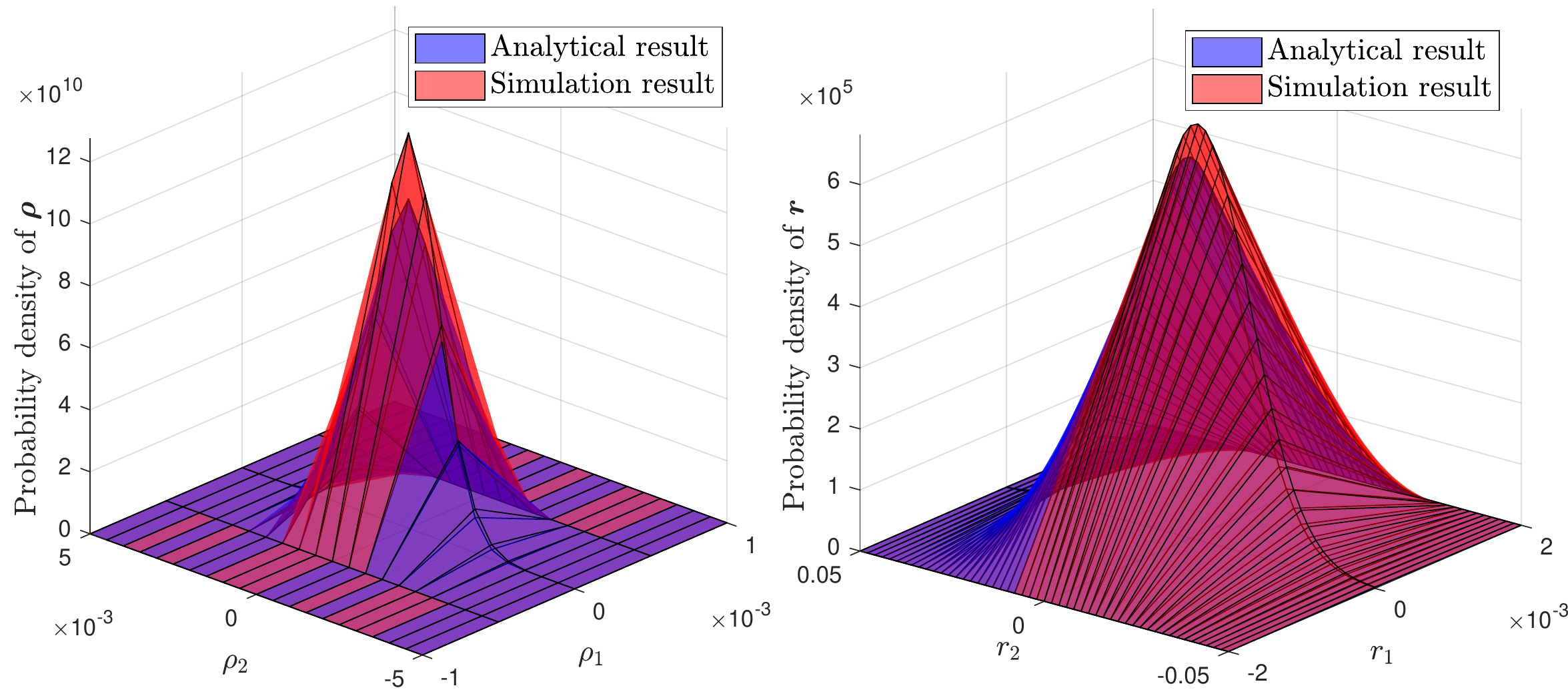}
		\vspace{-4mm}
		\caption{Analytical and simulation result for cumulative density function of $ \boldsymbol{\rho}(t) $ and $ \boldsymbol{r}(t) $.}
		\label{Fig:Dist}\vspace{-6mm}
\end{figure}
\begin{figure}[t]
	\centering
	\includegraphics[width=0.65\textwidth]{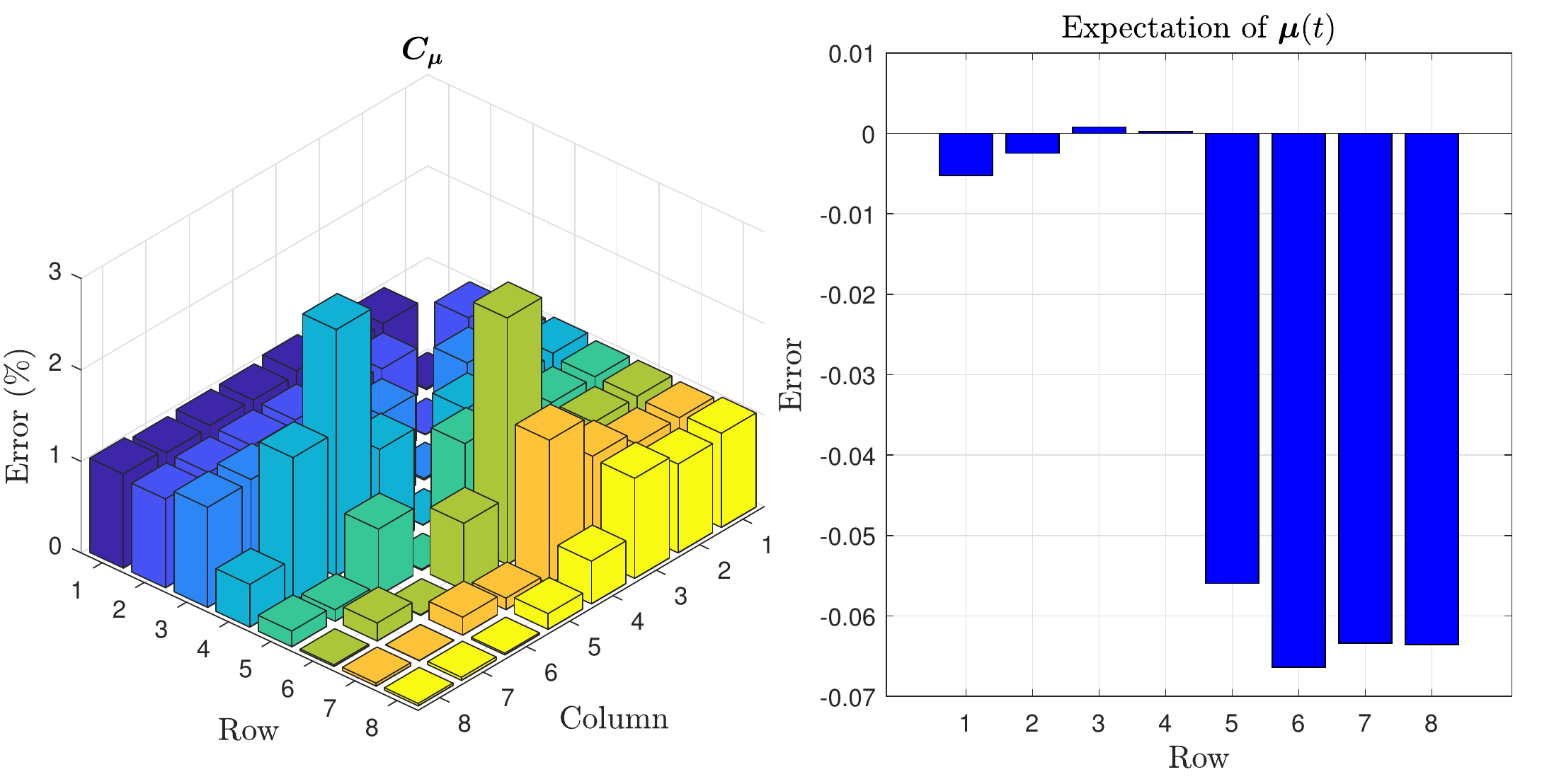}
	\vspace{-4mm}
	\caption{Analytical and simulation result for the mean and variance of $ \boldsymbol{\mu}(t) $.}
	\label{Fig:Cmu}
	\vspace{-10mm}
\end{figure}

We can now find the probability with which $ \mu_i(t) $ (element $i$ of $\boldsymbol{\mu}(t)$) remains below the threshold value $ \eta_i $ as $
	\text{Pr}\left(|\mu_i|\leq \eta_i\right) = \text{Pr}\left(-\eta_i\leq\mu_i\leq\eta_i\right) = \Phi_{\mu_i}(\eta_i)- \Phi_{\mu_i}(-\eta_i),$
where $ \Phi_{\mu_i}(.) $ is the cumulative density function of $ \mu_i $ which follows a Gaussian distribution with zero mean and variance $ {C}_{\mu}(i,i) $ (element in $ i $-th row and $ i $-th column of $ \boldsymbol{C}_{\mu} $). Thus, we can derive the optimal value for $ \boldsymbol{eta} $ by defining $ \text{Pr}\left(|\mu_i|\leq \eta_i\right) $ for every sensor. For instance, choosing values  $ \eta_i = {C}_{\mu}(i,i), 2{C}_{\mu}(i,i),$ or $ 3{C}_{\mu}(i,i) $ will result in $ \text{Pr}\left(|\mu_i|\leq \eta_i\right) = 0.68, 0.95, $ or $ 0.997 $. Even by defining the threshold value, the attacker might stay stealthy in some cases if it controls the amount of injected data to the sensors. Next, we find a relationship between the maximum value of DIA and the probability of staying stealthy.
\begin{proposition}\label{Proposition:probattack}
	 The probability with which an attack vector $ \tilde{\boldsymbol{a}} \triangleq \left[\boldsymbol{a}_f,\boldsymbol{a}_d\right]^T $ will not trigger the $ i $-th element of attack detection filter, $ \mu_i $, (stealthy attack) will be given by:
	 \begin{align}
	 	{p}_{i}({\tilde{\boldsymbol{a}}})\triangleq\textrm{Pr}\left(|\mu_i|\leq \eta_i|\tilde{\boldsymbol{a}}\right)= \Phi_{\mu_i}(\boldsymbol{\varPsi}_i\tilde{\boldsymbol{a}}+\eta_i)-\Phi_{\mu_i}(\boldsymbol{\varPsi}_i\tilde{\boldsymbol{a}}-\eta_i),\label{eq:attacksuccess}
	 	\end{align}
	 	where $ \boldsymbol{\varPsi}_i $ is the $ i $-th row of $
	 	\boldsymbol{\varPsi} = \left[ \boldsymbol{I}  - \boldsymbol{H}\left[\boldsymbol{I} - \boldsymbol{Q}\boldsymbol{A}\right]^{-1} \boldsymbol{Q}\tilde{\boldsymbol{K}}\right].$
\end{proposition}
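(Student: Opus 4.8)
The plan is to exploit the \emph{linearity} of the injection together with the Gaussian characterization already proved in Theorem \ref{Theorem:mudistribtution}. The relevant attack here is $\tilde{\boldsymbol{a}}=[\boldsymbol{a}_f,\boldsymbol{a}_d]^T$, which enters only the sensors feeding the Kalman filter, so $\bar{\boldsymbol{z}}(t)=\boldsymbol{z}(t)+\tilde{\boldsymbol{a}}$ and every step of the recursion \eqref{eq:aprioristate}--\eqref{eq:Kalmanestimtion} is affine in the data. Consequently the attacked residual $\bar{\boldsymbol{\mu}}(t)$ remains Gaussian with the \emph{same} covariance $\boldsymbol{C}_\mu$ of \eqref{eq:covmu}; only its mean is displaced by a deterministic vector that is linear in $\tilde{\boldsymbol{a}}$. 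The whole proposition therefore reduces to computing that steady-state mean shift and then evaluating a recentred symmetric-interval probability.

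First I would isolate how the injected bias circulates through the filter. Defining the estimate deviation $\boldsymbol{\delta}(t)\triangleq \hat{\bar{\boldsymbol{x}}}(t)-\hat{\boldsymbol{x}}(t)$ and subtracting the attack-free recursion from the attacked one, the data-independent terms $\boldsymbol{B}u_f$ and $\boldsymbol{F}\hat{v}_l$ cancel; using $\boldsymbol{Q}=\boldsymbol{I}-\tilde{\boldsymbol{K}}\boldsymbol{H}$ (which, exactly as in Theorem \ref{Theorem:mudistribtution}, is the simplified value of $[\boldsymbol{I}+[\boldsymbol{I}-\tilde{\boldsymbol{K}}\boldsymbol{H}]^{-1}\tilde{\boldsymbol{K}}\boldsymbol{H}]^{-1}$) this collapses to the driven linear recursion $\boldsymbol{\delta}(t)=\boldsymbol{Q}\boldsymbol{A}\,\boldsymbol{\delta}(t-1)+\tilde{\boldsymbol{K}}\tilde{\boldsymbol{a}}$. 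Notice that the closed-loop matrix $\boldsymbol{Q}\boldsymbol{A}$ driving the bias is the same one appearing in the error-covariance Riccati recursion of Theorem \ref{Theorem:mudistribtution}; since $\boldsymbol{P}(t)$ and $\boldsymbol{K}(t)$ converge (as noted after \eqref{eq:Pgain}), $\boldsymbol{Q}\boldsymbol{A}$ is stable, so I may pass to the fixed point $\boldsymbol{\delta}=(\boldsymbol{I}-\boldsymbol{Q}\boldsymbol{A})^{-1}\tilde{\boldsymbol{K}}\tilde{\boldsymbol{a}}$, equivalently summing the Neumann series $\sum_{k\ge0}(\boldsymbol{Q}\boldsymbol{A})^{k}\tilde{\boldsymbol{K}}\tilde{\boldsymbol{a}}$.

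Next I would push this steady-state bias into the residual. Writing $\bar{\boldsymbol{\mu}}(t)=\bar{\boldsymbol{z}}(t)-\boldsymbol{H}\hat{\bar{\boldsymbol{x}}}^{(-)}(t)$ and taking expectations, the zero-mean measurement noise and the attack-free innovation vanish by Theorem \ref{Theorem:mudistribtution}, leaving $\E[\bar{\boldsymbol{\mu}}]=\tilde{\boldsymbol{a}}-\boldsymbol{H}\cdot(\text{the steady-state bias of the a priori prediction }\hat{\bar{\boldsymbol{x}}}^{(-)})$. Substituting the fixed point for that prediction bias and collecting the injection on the right then factors the mean as $\boldsymbol{\varPsi}\tilde{\boldsymbol{a}}$, where the leading $\boldsymbol{I}$ is the direct feedthrough of $\tilde{\boldsymbol{a}}$ into the residual and the second term $\boldsymbol{H}[\boldsymbol{I}-\boldsymbol{Q}\boldsymbol{A}]^{-1}\boldsymbol{Q}\tilde{\boldsymbol{K}}$ is the share of the bias that the filter's own prediction has already absorbed after it has circulated once through the loop, yielding the stated $\boldsymbol{\varPsi}=\boldsymbol{I}-\boldsymbol{H}[\boldsymbol{I}-\boldsymbol{Q}\boldsymbol{A}]^{-1}\boldsymbol{Q}\tilde{\boldsymbol{K}}$.

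Finally, since $\bar{\mu}_i$ is Gaussian with mean $\boldsymbol{\varPsi}_i\tilde{\boldsymbol{a}}$ and variance $C_\mu(i,i)$, that is, the zero-attack density of Theorem \ref{Theorem:mudistribtution} merely recentred, I would write $p_i(\tilde{\boldsymbol{a}})=\Pr(-\eta_i\le\bar{\mu}_i\le\eta_i)$, standardize, and invoke the reflection symmetry $\Phi_{\mu_i}(-x)=1-\Phi_{\mu_i}(x)$ of the zero-mean CDF so that the two tail terms reassemble into $\Phi_{\mu_i}(\boldsymbol{\varPsi}_i\tilde{\boldsymbol{a}}+\eta_i)-\Phi_{\mu_i}(\boldsymbol{\varPsi}_i\tilde{\boldsymbol{a}}-\eta_i)$, which is the claim. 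I expect the main obstacle to be the second step: rigorously justifying convergence of the bias recursion (inheriting the stability of $\boldsymbol{Q}\boldsymbol{A}$ from the convergence of $\boldsymbol{K}(t)$ and $\boldsymbol{P}(t)$), and then carrying the matrix bookkeeping carefully enough that the factors $[\boldsymbol{I}-\boldsymbol{Q}\boldsymbol{A}]^{-1}$ and $\boldsymbol{Q}$ land in exactly the placement stated for $\boldsymbol{\varPsi}$.
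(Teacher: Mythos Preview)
Your plan is correct and is essentially the same argument the paper gives. Both proofs (i) write the attacked filter as a linear recursion driven by the constant injection, (ii) pass to the steady state of that recursion to obtain the bias, (iii) read off the resulting mean shift of the residual, and (iv) evaluate the Gaussian interval probability with the unchanged covariance of Theorem~\ref{Theorem:mudistribtution}. The only cosmetic difference is the state variable tracked: the paper propagates the \emph{estimation error} $\boldsymbol{r}(t)=\boldsymbol{x}(t)-\hat{\boldsymbol{x}}(t)$ and obtains $\boldsymbol{r}(t)=\boldsymbol{Q}\boldsymbol{A}\boldsymbol{r}(t-1)-\boldsymbol{Q}\tilde{\boldsymbol{K}}\tilde{\boldsymbol{a}}+\boldsymbol{\rho}(t)$, then reads $\E[\boldsymbol{\mu}]=\boldsymbol{H}\bar{\boldsymbol{r}}+\tilde{\boldsymbol{a}}$ using the steady-state identification $\hat{\boldsymbol{x}}^{(-)}(t)=\hat{\boldsymbol{x}}(t)$; you instead propagate the \emph{estimate deviation} $\boldsymbol{\delta}(t)=\hat{\bar{\boldsymbol{x}}}(t)-\hat{\boldsymbol{x}}(t)$ and subtract its a~priori version from the residual. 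Since $\E[\boldsymbol{r}]=-\boldsymbol{\delta}$ at steady state, the two routes are equivalent. Your closing caveat is apt: the only place to be careful is the matrix bookkeeping that pins down where $\boldsymbol{Q}$ and $\boldsymbol{A}$ sit relative to $(\boldsymbol{I}-\boldsymbol{Q}\boldsymbol{A})^{-1}$; the paper's placement of $\boldsymbol{Q}$ in $\boldsymbol{\varPsi}$ comes from combining its form of the $\boldsymbol{r}$-recursion with the approximation $\hat{\boldsymbol{x}}^{(-)}=\hat{\boldsymbol{x}}$, so when you carry out your version explicitly you should expect to reconcile via that same identification (or the push-through identity) rather than by naive substitution.
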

\begin{proof}
	Suppose that the attacker initiates attack after the Kalman filter converges. Thus, from \eqref{eq:aprioristate} and \eqref{eq:Kalmanestimtion} we have:
	\begin{align}
	\hat{\boldsymbol{x}}(t) &= \left[\boldsymbol{I} -\tilde{\boldsymbol{K}} \boldsymbol{H} \right] \hat{\boldsymbol{x}}^{(-)}(t)+ \tilde{\boldsymbol{K}}\hspace{-2mm}\underbrace{\boldsymbol{z}(t)}_{\left[\boldsymbol{H}x(t)+\boldsymbol{e}(t)\right]}\hspace{-2mm}+\tilde{\boldsymbol{K}}\underbrace{\left[\begin{array}{c}
		\boldsymbol{a}_f\\\boldsymbol{a}_d
		\end{array}\right]}_{\tilde{\boldsymbol{a}}}=\left[\boldsymbol{I} -\tilde{\boldsymbol{K}} \boldsymbol{H} \right]\Big[\boldsymbol{A} \hat{\boldsymbol{x}}(t-1) + \boldsymbol{B}u_f(t-1) \nonumber\\
	&+ \boldsymbol{F} \hat{v}_l(t-1)\Big]+\tilde{\boldsymbol{K}}\boldsymbol{H}\left[\hat{\boldsymbol{x}}(t)+\boldsymbol{r}(t)\right]+\tilde{\boldsymbol{K}}\left[\boldsymbol{e}(t)+\tilde{\boldsymbol{a}}\right].
	\end{align}
	Thus, by simplifications analogous to the proof of Theorem \ref{Theorem:mudistribtution}, we can find $
	\boldsymbol{r}(t) = \boldsymbol{Q}\boldsymbol{A}\boldsymbol{r}(t-1)-\boldsymbol{Q}\tilde{\boldsymbol{K}}\tilde{\boldsymbol{a}}+\boldsymbol{\rho}(t)$. The expectation of $  \boldsymbol{r}(t) $ will be $
	\E\left\{\boldsymbol{r}(t)\right\} = \boldsymbol{Q}\boldsymbol{A}\E\left\{\boldsymbol{r}(t-1)\right\} - \boldsymbol{Q}\tilde{\boldsymbol{K}}\E\left\{\tilde{\boldsymbol{a}}\right\}.$ 
	Since the attack vector is constant, $ \tilde{\boldsymbol{a}} $, and $ \E\left\{\boldsymbol{r}(t)\right\} = \E\left\{\boldsymbol{r}(t-1)\right\} $ for $ t \rightarrow \infty $ we will have the steady state expected estimation error as $
	\E\left\{\boldsymbol{r}(t)\right\} \triangleq \bar{\boldsymbol{r}} = - \left[\boldsymbol{I} - \boldsymbol{Q}\boldsymbol{A}\right]^{-1} \boldsymbol{Q}\tilde{\boldsymbol{K}}\tilde{\boldsymbol{a}}.$
	Thus, considering such steady state expected estimation error, $ \bar{\boldsymbol{r}} $, we can derive the mean of $ \boldsymbol{\mu}(t) $:
	\begin{align}
	\E \left\{\boldsymbol{\mu}(t)\right\} = \E \left\{\boldsymbol{z}(t)-\boldsymbol{z}^{(-)}(t)\right\} = \E \left\{\boldsymbol{H}\boldsymbol{r}(t)+\tilde{\boldsymbol{a}}+\boldsymbol{e}(t)\right\} =\underbrace{\left[ \boldsymbol{I}  - \boldsymbol{H}\left[\boldsymbol{I} - \boldsymbol{Q}\boldsymbol{A}\right]^{-1} \boldsymbol{Q}\tilde{\boldsymbol{K}}\right]}_{\boldsymbol{\varPsi}}\tilde{\boldsymbol{a}}.\nonumber
	\end{align}
	Since $ \boldsymbol{a}_d $ is constant, then $ \boldsymbol{\sigma}_{\mu} $ will not change. Therefore, we can find the probability of not triggering the attack detection filter for $ \mu_i $ as in \eqref{eq:attacksuccess}.
\end{proof}
Proposition \ref{Proposition:probattack} derives the probability of staying stealthy for any particular DIA vector. Next we will find how much our defined attack detection filter is robust against a particular DIA. To this end, first we define a stealthy attack probability vector as $
	\boldsymbol{p}({\tilde{\boldsymbol{a}}})= \left[p_{1}({\tilde{\boldsymbol{a}}}),\dots, p_{n_f+n_d}({\tilde{\boldsymbol{a}}})\right]^T
$, which is a vector that shows the probability of staying stealthy by initiating $ \tilde{\boldsymbol{a}} $ on all of the $ v_f $ and $ d $ type sensors. The attacker has to find the optimal value of $ \tilde{\boldsymbol{a}} $ which maximizes the physical regret while staying stealthy with probability below a defined value, $ \boldsymbol{p} $, i.e., attack vector must be chosen such that the physical regret is maximized and every sensor $ i $ stays stealthy with the probability below the $ i $-th element of $ \boldsymbol{p} $.
\begin{corollary}\label{Theorem:MaximumErrorType1}
	 The attacker's optimal attack vector $ \boldsymbol{a}^* $ which maximizes the steady state physical regret by attacking $ v_f $ or $ d $ type sensors is the solution of following optimization problem:
	 \begin{align}\label{eq:theta}
	 &\argmax_{\tilde{\boldsymbol{a}}}\,\,\, \tilde{\theta}^2 \triangleq \left[\bar{\boldsymbol{r}}^T\boldsymbol{\varTheta}\bar{\boldsymbol{r}}+\boldsymbol{c}^T\bar{\boldsymbol{r}}\right]^2,\\
	  &\text{s.t.} \,\,\,\bar{\boldsymbol{r}} = - \left[\boldsymbol{I} - \boldsymbol{Q}\boldsymbol{A}\right]^{-1} \boldsymbol{Q}\tilde{\boldsymbol{K}}\tilde{\boldsymbol{a}},\label{eq:steady}\\
	  & \,\,\,\,\, \boldsymbol{\varTheta} \triangleq \left[\begin{array}{c c} 
	  \frac{1}{2_{bf}} & 0\\
	  0 & 0
	  \end{array}\right], \boldsymbol{c} = \left[\begin{array}{c}
	  \frac{\sqrt{\tilde{d}}}{b_f}+T \\
	  -1
	  \end{array}\right],
	  \\
	  &\boldsymbol{p}({\tilde{\boldsymbol{a}}}) \geq \boldsymbol{p},
	 \end{align}
	 where $ \tilde{d}$ is the average spacing between the ACVs.
\end{corollary}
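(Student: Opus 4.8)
The plan is to express the quantity inside the physical regret, $\theta(t)\triangleq\frac{v_f^2(t+1)}{2b_f}-d(t+1)$, purely in terms of the steady-state estimation error $\bar{\boldsymbol{r}}$ characterized in Proposition \ref{Proposition:probattack}, to recognize it as the quadratic form $\bar{\boldsymbol{r}}^T\boldsymbol{\varTheta}\bar{\boldsymbol{r}}+\boldsymbol{c}^T\bar{\boldsymbol{r}}$, and then to note that maximizing the steady-state physical regret $R=\theta^2$ is exactly the maximization of $\tilde{\theta}^2$ over the feasible attack vectors. The feasibility constraints then come directly from the attack model: the steady-state error relation and the stealthiness probability are both inherited from Proposition \ref{Proposition:probattack}.

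First I would substitute the unsaturated optimal controller of Theorem \ref{Theorem1step}, namely $u_f^*(t)=\frac{1}{T}\big(\sqrt{2b_f s(t)}-\hat{v}_f(t)\big)$ with $s(t)\triangleq\hat{d}(t)+T\hat{v}_l(t)-T\hat{v}_f(t)$, into the update $v_f(t+1)=v_f(t)+Tu_f^*(t)$. Decomposing the true state as $\boldsymbol{x}(t)=\hat{\boldsymbol{x}}(t)+\boldsymbol{r}(t)$, i.e. $v_f(t)=\hat{v}_f(t)+r_1(t)$ and $d(t)=\hat{d}(t)+r_2(t)$, and using that $v_l$ is not attacked (so $\hat{v}_l(t)=v_l(t)$), yields $v_f(t+1)=r_1(t)+\sqrt{2b_f s(t)}$ and $d(t+1)=s(t)+r_2(t)-Tr_1(t)$. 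The crucial feature is that, after expanding $\frac{v_f^2(t+1)}{2b_f}$, the attack-free constant term $\frac{2b_f s(t)}{2b_f}=s(t)$ cancels the $-s(t)$ contributed by $d(t+1)$; this cancellation is expected, since in the absence of estimation error the controller of Theorem \ref{Theorem1step} drives the regret to zero. What remains is $\theta(t)=\frac{r_1^2(t)}{2b_f}+\big(\kappa(s(t))+T\big)r_1(t)-r_2(t)$, where $\kappa(s(t))$ is the cross-term coefficient produced by $\sqrt{2b_fs(t)}$; this expression has no constant term and is quadratic-plus-linear in $\boldsymbol{r}(t)$.

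Next I would pass to steady state, replacing $\boldsymbol{r}(t)$ by $\bar{\boldsymbol{r}}$ and the slowly varying spacing term $s(t)$ by the average spacing $\tilde{d}$, which freezes the coefficients and gives $\theta=\bar{\boldsymbol{r}}^T\boldsymbol{\varTheta}\bar{\boldsymbol{r}}+\boldsymbol{c}^T\bar{\boldsymbol{r}}$ with $\boldsymbol{\varTheta}$ reading off the $\frac{r_1^2}{2b_f}$ term and $\boldsymbol{c}$ reading off the linear coefficients $\big(\kappa(\tilde d)+T,\,-1\big)^T$, matching the stated matrices. Squaring produces $\tilde{\theta}^2$. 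Imposing the steady-state relation $\bar{\boldsymbol{r}}=-[\boldsymbol{I}-\boldsymbol{Q}\boldsymbol{A}]^{-1}\boldsymbol{Q}\tilde{\boldsymbol{K}}\tilde{\boldsymbol{a}}$ and the stealthiness requirement $\boldsymbol{p}(\tilde{\boldsymbol{a}})\geq\boldsymbol{p}$, both from Proposition \ref{Proposition:probattack}, then recovers precisely the optimization in the statement, with the maximizing $\tilde{\boldsymbol{a}}$ being the attacker's optimal vector $\boldsymbol{a}^*$.

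The step I expect to be the main obstacle is the replacement of the time-varying $s(t)$ by the constant $\tilde{d}$: this is what lets $\boldsymbol{\varTheta}$ and $\boldsymbol{c}$ be treated as constants and reduces the regret to a clean quadratic in $\bar{\boldsymbol{r}}$. Justifying it requires arguing that once the Kalman filter has converged and the injected attack is constant, the system operates close to the equilibrium $\bar{\boldsymbol{x}}(v_l)$ established in Proposition \ref{Proposition:Stability}, so that $s(t)$ concentrates near its mean $\tilde{d}$. A secondary care point is confirming that the $v_l$-type estimation contributes no term here, which is consistent with restricting the attack to the $v_f$ and $d$ sensors in $\tilde{\boldsymbol{a}}=[\boldsymbol{a}_f,\boldsymbol{a}_d]^T$.
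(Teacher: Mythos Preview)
Your proposal is correct and follows essentially the same route as the paper: substitute the unsaturated optimal controller of Theorem \ref{Theorem1step} into the physical regret, expand in the estimation errors $r_f,r_d$ so that the attack-free term $s(t)$ cancels, freeze the residual coefficient $\sqrt{2b_f s(t)}/b_f$ at the average spacing $\tilde d$, and then invoke the steady-state error and stealthiness constraint from Proposition \ref{Proposition:probattack}. The only cosmetic difference is that the paper retains the $-Tr_l(t)$ contribution in $\theta(t)$ and lets it vanish when passing to $\bar{\boldsymbol r}$ (which is two-dimensional), whereas you suppress it up front by setting $\hat v_l(t)=v_l(t)$; both lead to the same steady-state expression.
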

\begin{proof}
From the definition of physical regret we have:
	\begin{align}
		R(t)\Big|_{u_f^*(t)} &= \left(\frac{\left(v_f(t)+Tu^*_f(t)\right)^2}{2b_f}-d(t)-Tv_l(t)+Tv_f(t)\right)^2\\ \nonumber
		& =\left(\underbrace{\frac{r_f^2(t)}{2b_f}+\frac{r_f(t)}{b_f}\sqrt{\underbrace{\hat{d}(t)+T\hat{v}_l(t)-T\hat{v}_f(t)}_{\simeq\tilde{d}}}-r_d(t)-Tr_l(t)+Tr_f(t)}_{\theta(t)}\right)^2,
	\end{align}
Moreover, the steady state value of $ \theta(t) $ can be derived by using the steady state expected value of $ \bar{\boldsymbol{r}} $ as $
		\bar{\theta} = \bar{\boldsymbol{r}}^T\boldsymbol{\varTheta}\bar{\boldsymbol{r}}+\boldsymbol{c}^T\bar{\boldsymbol{r}}.$
Also, the attacker must choose $ \boldsymbol{a}_d $ to satisfy \eqref{eq:attacksuccess}. Thus, the attacker must solve the optimization problems in \eqref{eq:theta}.
\end{proof}
Corollary \ref{Theorem:MaximumErrorType1} derives a robustness level for our proposed attack detection filter. It allows us to find the maximum regret which is caused by any stealthy attack vector. This allows us to design a secure cyber-physical ACV system since we take into account the physical regret of DIAs on sensors. However, the proposed attack detection filter works for sensors with a-priori information only. For ACV $ f $'s other sensors which measure $ v_l(t) $ we do not have a-priori information. Thus, we next address this case using MAB.

\subsection{Multi-armed bandit learning for attack detection in sensors without a-priori information}
To detect the attack on $ v_l $ type sensors, we cannot use any a priori estimation for $ v_l(t) $ as done for $ v_f $ and $ d $ type sensors because ACV $ f $ cannot have any information about evaluation of $ v_l(t) $ since it does not know how ACV $ l $ is being controlled. Such lack of a priori information makes the attack detection challenging. To overcome this challenge, we propose to use a MAB learning approach because such approach does not require a-priori information and finds the optimal action (detecting attacked sensors) only by interacting with the sensors and ACV's dynamics. Before applying the MAB algorithm, we define an a posteriori estimation for $ v_l(t) $:
\begin{align}\label{eq:vlapost}
	v_l^{(+)}(t) = \frac{\hat{d}(t+1)-\hat{d}(t)}{T} + \hat{v}_f(t).
\end{align}
Here, we also define an a posteriori residual as $ \left|\boldsymbol{\mu}_l^{(+)}(t)\right|\hspace{-1mm} \triangleq \hspace{-1mm} \left[|\mu_{{l}_1}^{(+)}(t)|,\dots, |\mu_{{l}_{n_l}}^{(+)}(t)|\right]^T\hspace{-1.5mm} \triangleq \hspace{-1mm} \Big|\boldsymbol{h}_lv_l^{(+)}(t)-\boldsymbol{z}_l(t)\Big| $. Next, we analyze the distribution of $ \boldsymbol{\mu}_l^{(+)}(t) $ to use it to design an attack detection filter.
\begin{theorem}\label{Theorem:aposteriori}
	$ \boldsymbol{\mu}_l^{(+)}(t) $ follows a Gaussian distribution with zero mean and covariance matrix $ \boldsymbol{C}_{\mu_l} $:
	\begin{align}
		\boldsymbol{C}_{\mu_l} =  \boldsymbol{\varUpsilon}\boldsymbol{R}_l\boldsymbol{\varUpsilon}^T+\boldsymbol{h}_l\boldsymbol{J}\left[ \boldsymbol{H}\boldsymbol{A}\boldsymbol{C}_r\boldsymbol{A}^T \boldsymbol{H}^T -\boldsymbol{R}\tilde{\boldsymbol{K}}^T\boldsymbol{Q}^T\boldsymbol{A}^T\boldsymbol{H}^T - \boldsymbol{H}\boldsymbol{A}\boldsymbol{Q}\tilde{\boldsymbol{K}}\boldsymbol{R}+\boldsymbol{R} \right]\boldsymbol{J}^T\boldsymbol{h}^T_l,
	\end{align} 
	where $ \boldsymbol{\varUpsilon} = \left[-\frac{s_1}{T} \boldsymbol{h}_l \boldsymbol{w}_l^T-\boldsymbol{I}\right]$ and $ \boldsymbol{J} = \left[\begin{array}{c c}
	0 & 1
	\end{array}\right]\tilde{\boldsymbol{K}} $.
\end{theorem}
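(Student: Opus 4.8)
The plan is to show that $\boldsymbol{\mu}_l^{(+)}(t)$ is an affine image of the independent white Gaussian noise streams driving the filter, and then to read off its covariance. First I would substitute the $v_l$ sensor model $\boldsymbol{z}_l(t)=\boldsymbol{h}_l v_l(t)+\boldsymbol{e}_l(t)$ into the definition of the a posteriori residual to get $\boldsymbol{\mu}_l^{(+)}(t)=\boldsymbol{h}_l\big(v_l^{(+)}(t)-v_l(t)\big)-\boldsymbol{e}_l(t)$. The exact kinematic identity $v_l(t)=\frac{d(t+1)-d(t)}{T}+v_f(t)$ coming from \eqref{eq:spacingdisc}, together with the definition \eqref{eq:vlapost} of $v_l^{(+)}(t)$, rewrites the estimation part purely in terms of Kalman errors, $v_l^{(+)}(t)-v_l(t)=\frac{1}{T}\big(r_d(t)-r_d(t+1)\big)-r_f(t)$, where $r_f,r_d$ are the components of $\boldsymbol{r}(\cdot)=\boldsymbol{x}(\cdot)-\hat{\boldsymbol{x}}(\cdot)$ and I write $\boldsymbol{s}_f=[1\ 0]$, $\boldsymbol{s}_d=[0\ 1]$ for the two selector rows.

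Next I would expand the one-step-ahead error $r_d(t+1)$ through the steady-state Kalman update \eqref{eq:aprioristate}--\eqref{eq:Pgain}, writing $\boldsymbol{r}(t+1)=\boldsymbol{r}^{(-)}(t+1)-\tilde{\boldsymbol{K}}\boldsymbol{\mu}(t+1)$ with a priori error $\boldsymbol{r}^{(-)}(t+1)=\boldsymbol{A}\boldsymbol{r}(t)-\boldsymbol{F}\boldsymbol{w}_l^T\boldsymbol{e}_l(t)$ (using $v_l(t)-\hat v_l(t)=-\boldsymbol{w}_l^T\boldsymbol{e}_l(t)$) and innovation $\boldsymbol{\mu}(t+1)=\boldsymbol{H}\boldsymbol{A}\boldsymbol{r}(t)-\boldsymbol{H}\boldsymbol{F}\boldsymbol{w}_l^T\boldsymbol{e}_l(t)+\boldsymbol{e}(t+1)$. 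The decisive algebraic step is that the time-$t$ error terms cancel identically: for the specific $\boldsymbol{A}$ of \eqref{eq:system} one has $\boldsymbol{s}_d\boldsymbol{A}=[-T\ 1]$, so $\frac{1}{T}\boldsymbol{s}_d-\boldsymbol{s}_f-\frac{1}{T}\boldsymbol{s}_d\boldsymbol{A}=\boldsymbol{0}$, which wipes out the whole coefficient of $\boldsymbol{r}(t)$. Since in addition $\boldsymbol{s}_d\tilde{\boldsymbol{K}}=\boldsymbol{J}$ and $\boldsymbol{s}_d\boldsymbol{F}=T$, the residual collapses to $\boldsymbol{\mu}_l^{(+)}(t)=\boldsymbol{\varUpsilon}\boldsymbol{e}_l(t)+\frac{1}{T}\boldsymbol{h}_l\boldsymbol{J}\,\boldsymbol{\nu}(t+1)$, where $\boldsymbol{\nu}(t+1)=\boldsymbol{H}\boldsymbol{A}\boldsymbol{r}(t)+\boldsymbol{e}(t+1)$ is the fresh part of the $(t+1)$ innovation and $\boldsymbol{\varUpsilon}=-\frac{s_1}{T}\boldsymbol{h}_l\boldsymbol{w}_l^T-\boldsymbol{I}$ collects every contribution of the $v_l$ noise $\boldsymbol{e}_l(t)$, namely the direct $-\boldsymbol{e}_l(t)$ together with the part that enters $\hat d(t+1)$ through $\hat v_l(t)$ in the prediction. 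Zero mean and Gaussianity are then immediate, since each of $\boldsymbol{e}_l(t)$, $\boldsymbol{r}(t)$ and $\boldsymbol{e}(t+1)$ is a linear functional of the underlying independent zero-mean white Gaussian noises and $\E[\boldsymbol{r}(t)]=\boldsymbol{0}$ in steady state.

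For the covariance I would split into the two statistically independent groups $\boldsymbol{\varUpsilon}\boldsymbol{e}_l(t)$ and $\frac{1}{T}\boldsymbol{h}_l\boldsymbol{J}\boldsymbol{\nu}(t+1)$, using that $\boldsymbol{r}(t)$ is measurable with respect to the noise up to step $t$ on the $v_f,d$ channels and up to $t-1$ on the $v_l$ channel, whence $\boldsymbol{r}(t)\perp\boldsymbol{e}_l(t)$ and $\boldsymbol{e}(t+1)\perp\boldsymbol{e}_l(t)$, so the $v_l$-sensor group decouples and contributes exactly $\boldsymbol{\varUpsilon}\boldsymbol{R}_l\boldsymbol{\varUpsilon}^T$. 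For the innovation group I would compute $\mathrm{Cov}(\boldsymbol{\nu}(t+1))$ exactly as in the proof of Theorem \ref{Theorem:mudistribtution}: the variance of $\boldsymbol{H}\boldsymbol{A}\boldsymbol{r}(t)$ supplies $\boldsymbol{H}\boldsymbol{A}\boldsymbol{C}_r\boldsymbol{A}^T\boldsymbol{H}^T$ with $\boldsymbol{C}_r$ the steady-state posterior error covariance solving the stated discrete Riccati equation, the fresh measurement noise supplies $\boldsymbol{R}$, and the cross-covariance between the posterior error and the measurement noise embedded in it, $\E[\boldsymbol{r}\,\boldsymbol{e}^T]=-\boldsymbol{Q}\tilde{\boldsymbol{K}}\boldsymbol{R}$, produces the two corrections $-\boldsymbol{H}\boldsymbol{A}\boldsymbol{Q}\tilde{\boldsymbol{K}}\boldsymbol{R}$ and $-\boldsymbol{R}\tilde{\boldsymbol{K}}^T\boldsymbol{Q}^T\boldsymbol{A}^T\boldsymbol{H}^T$. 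Summing the two independent contributions then reproduces precisely the claimed $\boldsymbol{C}_{\mu_l}=\boldsymbol{\varUpsilon}\boldsymbol{R}_l\boldsymbol{\varUpsilon}^T+\boldsymbol{h}_l\boldsymbol{J}\big[\boldsymbol{H}\boldsymbol{A}\boldsymbol{C}_r\boldsymbol{A}^T\boldsymbol{H}^T-\boldsymbol{R}\tilde{\boldsymbol{K}}^T\boldsymbol{Q}^T\boldsymbol{A}^T\boldsymbol{H}^T-\boldsymbol{H}\boldsymbol{A}\boldsymbol{Q}\tilde{\boldsymbol{K}}\boldsymbol{R}+\boldsymbol{R}\big]\boldsymbol{J}^T\boldsymbol{h}_l^T$.

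The step I expect to be the main obstacle is this last covariance bookkeeping: correctly identifying which measurement-noise realization is correlated with $\boldsymbol{r}(t)$ and with what sign, so that the off-diagonal $\boldsymbol{Q}\tilde{\boldsymbol{K}}\boldsymbol{R}$ corrections emerge with the right placement of the extra $\boldsymbol{A}$ factors. This is exactly why the carry-over from Theorem \ref{Theorem:mudistribtution}, where the same $\boldsymbol{r}$--$\boldsymbol{e}$ correlation generates the analogous $-\boldsymbol{R}\tilde{\boldsymbol{K}}^T\boldsymbol{Q}^T\boldsymbol{H}^T$ terms in $\boldsymbol{C}_\mu$, should be invoked rather than re-derived from scratch. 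The algebraic identity $\frac{1}{T}\boldsymbol{s}_d-\boldsymbol{s}_f-\frac{1}{T}\boldsymbol{s}_d\boldsymbol{A}=\boldsymbol{0}$ is the other load-bearing fact, since everything downstream hinges on the time-$t$ error terms disappearing so that the residual routes entirely through the single row $\boldsymbol{h}_l\boldsymbol{J}$; I would verify it explicitly before assembling the covariance.
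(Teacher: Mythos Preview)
Your overall route is the same as the paper's: both push $\boldsymbol{\mu}_l^{(+)}(t)$ through the steady-state Kalman update to reach a decomposition $\boldsymbol{\varUpsilon}\boldsymbol{e}_l(t)+\boldsymbol{h}_l\boldsymbol{J}\big[\boldsymbol{H}\boldsymbol{A}\boldsymbol{r}(t)+\text{(measurement noise)}\big]$, then compute the covariance by separating the $v_l$-noise block from the innovation block. Your selector-row identity $\tfrac{1}{T}\boldsymbol{s}_d-\boldsymbol{s}_f-\tfrac{1}{T}\boldsymbol{s}_d\boldsymbol{A}=\boldsymbol{0}$ is just a tidy rewriting of the cancellation the paper gets implicitly by expanding $\hat d(t{+}1)$ and introducing $s_1,s_2$ via $[\boldsymbol{I}-\tilde{\boldsymbol{K}}\boldsymbol{H}]\boldsymbol{F}$ and $\tilde{\boldsymbol{K}}\boldsymbol{H}\boldsymbol{F}$.

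The genuine gap is exactly where you flagged the ``main obstacle.'' You write $\boldsymbol{\nu}(t{+}1)=\boldsymbol{H}\boldsymbol{A}\boldsymbol{r}(t)+\boldsymbol{e}(t{+}1)$ with the \emph{fresh} noise at time $t{+}1$, and then invoke $\E[\boldsymbol{r}\,\boldsymbol{e}^T]=-\boldsymbol{Q}\tilde{\boldsymbol{K}}\boldsymbol{R}$ to generate the two cross corrections. That identity is for the \emph{same-time} pair $(\boldsymbol{r}(t),\boldsymbol{e}(t))$, because $\boldsymbol{e}(t)$ enters $\boldsymbol{r}(t)$ through the time-$t$ update; the pair $(\boldsymbol{r}(t),\boldsymbol{e}(t{+}1))$ is independent, so your $\mathrm{Cov}(\boldsymbol{\nu}(t{+}1))$ would be just $\boldsymbol{H}\boldsymbol{A}\boldsymbol{C}_r\boldsymbol{A}^T\boldsymbol{H}^T+\boldsymbol{R}$ with \emph{no} cross terms, and the stated $\boldsymbol{C}_{\mu_l}$ would not be reproduced. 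The paper's proof places $\boldsymbol{e}(t)$ (not $\boldsymbol{e}(t{+}1)$) in that bracket, and it is precisely the same-time correlation $\E[\boldsymbol{r}(t)\boldsymbol{e}^T(t)]=-\boldsymbol{Q}\tilde{\boldsymbol{K}}\boldsymbol{R}$ that produces the $-\boldsymbol{H}\boldsymbol{A}\boldsymbol{Q}\tilde{\boldsymbol{K}}\boldsymbol{R}$ and $-\boldsymbol{R}\tilde{\boldsymbol{K}}^T\boldsymbol{Q}^T\boldsymbol{A}^T\boldsymbol{H}^T$ pieces. A second, smaller mismatch: your decomposition carries an explicit $\tfrac{1}{T}$ on the $\boldsymbol{h}_l\boldsymbol{J}$ block, which would put a $\tfrac{1}{T^2}$ in front of the bracketed term in $\boldsymbol{C}_{\mu_l}$; neither the paper's derivation of $v_l^{(+)}(t)$ nor the theorem carries that factor. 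You need to reconcile both the time index on the measurement noise and this scaling before you can claim the covariance comes out ``precisely'' as stated.
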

\begin{proof}
See Appendix \ref{App:apost}.
\end{proof}
Theorem \ref{Theorem:aposteriori} derives the covariance matrix for $ \boldsymbol{\mu}_l^{(+)}(t) $ which we will use to detect anomaly in the $ v_l(t) $ type sensors. To this end, first, we define the \emph{squared Mahalanobis} (SM) distance, a measure which quantifies the distance between the a posteriori residual of sensors in a subset $ \mathcal{L} $ of $ v_l $ type sensors, $ {\boldsymbol{\mu}_l^{(+)}}(t,\mathcal{L}) $, and the distribution of a posteriori residuals which is derived in Theorem \ref{Theorem:aposteriori}. SM will help us find how much every subset $ \mathcal{L} $ deviates from its distribution.
\begin{definition}
	The \emph{squared Mahalanobis} distance between subset $ \mathcal{L} $'s a posteriori residual at time $ t $ and its distribution is defined as $
	D_{\mathcal{L}}(t) = {\boldsymbol{\mu}_l^{(+)}}^T(t,\mathcal{L})\boldsymbol{C}^{-1}_{\mu_l}(\mathcal{L}) {\boldsymbol{\mu}_l^{(+)}}(t,\mathcal{L}),$
	where $ D_{\mathcal{L}}(t) $ is the subset $ \mathcal{L} $'s SM and $ \boldsymbol{C}_{\mu_l}(\mathcal{L}) $ is the covariance matrix associated with $ \mathcal{L} $. 
\end{definition}
Next, we derive the expected value and variance of $ D_{\mathcal{L}}(t) $. We know from \cite[(378)]{matrixCookBook} that $ \E\left\{D_{\mathcal{L}}(t)\right\} = \E\left\{{\boldsymbol{\mu}_l^{(+)}}^T(t,\mathcal{L})\boldsymbol{C}^{-1}_{\mu_l}(\mathcal{L}) {\boldsymbol{\mu}_l^{(+)}}(t,\mathcal{L})\right\} = \text{Tr}\left(\boldsymbol{C}^{-1}_{\mu_l}(\mathcal{L})\boldsymbol{C}_{\mu_l}(\mathcal{L})\right) = \text{Tr}\left(\boldsymbol{I}\right) = |\mathcal{L}|, $ 	where $ |\mathcal{L}| $ is the number of sensors in $ \mathcal{L} $. In addition, for a subset $ \mathcal{L} $ of $ v_l $ type sensors and an attack vector $ \boldsymbol{a}_l(t) $, we will have:
 \begin{align}
 	\E\left\{D_{\mathcal{L}}(t)\right\} & = \E\left\{{\boldsymbol{\mu}_l^{(+)}}^T(t,\mathcal{L})\boldsymbol{C}^{-1}_{\mu_l}(\mathcal{L}) {\boldsymbol{\mu}_l^{(+)}}(t,\mathcal{L})+\boldsymbol{a}^T_l(t)\boldsymbol{C}^{-1}_{\mu_l}\boldsymbol{a}_l(t)\right\}\\ \nonumber
 	& = |\mathcal{L}|+\E\left\{\boldsymbol{a}^T_l(t,\mathcal{L})\boldsymbol{C}^{-1}_{\mu_l}(\mathcal{L})\boldsymbol{a}_l(t,\mathcal{L})\right\},
 \end{align}
 where $ \boldsymbol{a}_l(t,\mathcal{L}) $ consists of only elements of $ \boldsymbol{a}_l(t) $ which are in $ \mathcal{L} $. Since $ \boldsymbol{C}_{\mu}(\mathcal{L}) $ is positive semi definite then $ \boldsymbol{C}^{-1}_{\mu}(\mathcal{L}) $ is also positive semi definite. Thus, we will have $ \E\Big\{\boldsymbol{a}^T_l(t,\mathcal{L})\boldsymbol{C}^{-1}_{\mu_l}(\mathcal{L})\boldsymbol{a}_l(t,\allowbreak\mathcal{L}) \Big\} \geq 0 $.
 Therefore, if there exists a subset $  \mathcal{L} $ such that $ \E\left\{D_{\mathcal{L}}(t)\right\} \geq  |\mathcal{L}| $, then there is at least one sensor under attack in $ \mathcal{L} $. Hence, at each time step, to estimate $ \hat{v}_l(t) $, ACV $ f $ must choose a subset $ \mathcal{L} $ which has the least divergence from $ |\mathcal{L}| $. To capture the security level of a subset, we define a \emph{security divergence (SD)} metric for a subset $ \mathcal{L} $ as $ \varsigma_{\mathcal{L}}(t) \triangleq \frac{\E \left\{D_{\mathcal{L}}(t)\right\}}{|\mathcal{L}| }\geq 1$. Therefore, a subset $ \mathcal{L} $ with a lower $ \varsigma_{\mathcal{L}}(t) $ value has a higher security level.
 
 Moreover, any subset $ \mathcal{L} $ of $ v_l $ type sensors have a cost based on the estimation error induced to $ \hat{v}_l(t) $. Thus, we define the cost of $ \mathcal{L} $, as $ \nu_{\mathcal{L}} = \E\left\{\left(v_l(t) - \hat{v}_l(t,\mathcal{L}) \right)^2\right\}$, where $ \hat{v}_l(t,\mathcal{L}) $ is the estimation of $ v_l $ using the subset $ \mathcal{L} $. Thus, to find $ \nu_{\mathcal{L}} $ we have:
\begin{align}
	 \scriptstyle \E\left\{\left(v_l(t) - \hat{v}_l(t,\mathcal{L}) \right)^2\right\} =  \E\left\{\left(\sum_{i \in \mathcal{L}}w_{l_i}e_{l_i}\right)^2\right\} = \E\left\{\left( \frac{\sum_{i \in \mathcal{L}}\frac{1}{\sigma_{l_i}^2}e_{l_i}}{\sum_{i \in \mathcal{L}}\frac{1}{\sigma_{l_i}^2}}\right)^2\right\}=  \frac{ \sum_{i \in \mathcal{L}}\frac{1}{\sigma_{l_i}^4}\E\left\{e_{l^2_i}\right\}}{ \left(\sum_{i \in \mathcal{L}}\frac{1}{\sigma_{l_i}^2}\right)^2}\nonumber
	 =  \frac{ \sum_{i \in \mathcal{L}}\frac{1}{\sigma_{l_i}^2}}{ \left(\sum_{i \in \mathcal{L}}\frac{1}{\sigma_{l_i}^2}\right)^2} = \frac{1}{ \sum_{i \in \mathcal{L}}\frac{1}{\sigma_{l_i}^2}}.
\end{align}
Therefore, we have $ \nu_{\mathcal{L}} = \frac{1}{ \sum_{i\in \mathcal{L}}\frac{1}{\sigma_{l_i}^2}}$. Clearly, a subset $ \mathcal{L} $ with a higher $ \sum_{i\in \mathcal{L}}\frac{1}{\sigma_{l_i}^2} $ has a lower cost. 

While $ D_{\mathcal{L}} $ represents a security measure for a subset $ \mathcal{L} $, $ \nu_{\mathcal{L}} $ can be considered as an estimation performance metric. Thus, at each time step, ACV $ f $ must choose a subset $ \mathcal{L} $ which has the minimum SD as well as the minimum cost. \emph{Thus, we have to find the secure high performance subset $ \mathcal{L}^*(t) $ which is a subset that is secure and has the lowest estimation cost.} To this end, a minimization problem that finds the secure high performance subset can be defined as $
\mathcal{L}^*(t) = \argmin_{\mathcal{L}} \xi_{\mathcal{L}} (t), $ where, $ \xi_{\mathcal{L}} (t) = \nu_{\mathcal{L}}\varsigma_{\mathcal{L}}(t) $. Although $ \nu_{\mathcal{L}} $ is known for $ f $, $ \varsigma_{\mathcal{L}}(t) $ might change due to an attack on $ v_l $ type sensors. Thus, we propose an MAB learning algorithm which learns the secure high performance subset $ \mathcal{L}^*(t) $\cite{Auer2002,Kleinberg2010}. Our goal here is to choose a safe subset of sensors out of all available sensors by efficiently exploring different subsets and effectively exploiting the optimal subset, thus, we can apply an MAB framework to solve our problem.
 
 In an MAB problem, a decision maker (ACV $ f $), pulls an arm from a set of available arms (selects a subset $ \mathcal{L} $ from $ v_l $ type sensors). Each arm generates a cost after being
 played, based on a distribution that is not known to the decision maker. The aim of ACV $ f $ is to minimize a cumulative \emph{cyber} regret. This regret is defined as the difference between the reward of the best possible arm at each step, and the generated reward of the arm that is played \cite{Auer2002,Kleinberg2010,ali2018sleeping}.

 Let $  \xi^*(t) = \xi_{\mathcal{L}^*(t)}(t) $ be the lowest cost that could be achieved at time $ t $ from $ v_l $ type sensors. Thus, the cyber regret from $ t=t_0 $ up to a time $ t=t_e $ is defined as:
 \begin{align}
 	R^c(t_0,t_e) \triangleq \E \left[\sum_{t=t_0}^{t_e}\xi_{\mathcal{L}} (t)- \sum_{t=t_0}^{t_e}\xi^*(t)\right],
 \end{align}
 where the expectation is taken over the random choices of the $ \mathcal{L} $. Cyber regret implies choosing non-optimal and non-secure subset of sensors. Thus, having a high cyber regret implies choosing a subset of sensors with higher error and possibly higher injected data which can lead to an estimation error for $ \hat{v}_l(t) $. Since our proposed optimal controller which minimizes the physical regret is a function of estimated value, $ \hat{v}_l(t) $, thus, higher cyber regret results in a higher estimation error and can introduce physical regret to ACV $ f $. This clearly shows the cyber-physical aspects of ACV security.
 
 One of the most recognized methods for the MAB problem is to use the concept of UCB \cite{Auer2002}. In this method, the MAB algorithm at each time $t$ chooses $ \mathcal{L} $ such that:
 \begin{equation}\label{ucb}
 \mathcal{I} (t) = \argmax_{\mathcal{L}}  \frac{-\nu_{\mathcal{L}}}{|\mathcal{L}|}\frac{1}{n_{\mathcal{L}}} \sum_{i=1}^{n_{\mathcal{L}}} D_{\mathcal{L}}(t_{\mathcal{L},i}) + \sqrt{\frac{2 \ln t}{n_{\mathcal{L}}}}
 \end{equation}
 where $n_{\mathcal{L}}$ is the number of times that arm $ {\mathcal{L}} $ was played before and $ t_{\mathcal{L},i} $ is the time step when $ \mathcal{L} $ is selected for $ i $-th time. Also, note that $ n_{\mathcal{L}} \rightarrow \infty $, $ \frac{1}{n_{\mathcal{L}}} \sum_{i=1}^{n_{\mathcal{L}}} D_{\mathcal{L}}(t_{\mathcal{L},i}) \rightarrow \E\left\{D_{\mathcal{L}}(t)\right\} $\cite{Auer2002,Kleinberg2010}. The UCB algorithm has an expected cumulative regret of:
 \begin{align}
 \E \left\{R^c(t_0,t_e)\right\} = 8 \sum_{\mathcal{L}|\xi_{\mathcal{L}}<\xi^*} \frac{\ln (t_e-t_0+1)}{\E\left[\xi_{\mathcal{L}}\right]-\E\left[\xi^*\right]}+ \left(1+\frac{\pi^2}{3}\right)\sum_{\mathcal{L}|\xi_{\mathcal{L}}<\xi^*}\E\left[\xi_{\mathcal{L}}\right]-\E\left[\xi^*\right].
 \end{align}
 We use UCB algorithm because it has a logarithmic cyber regret and has been shown that there exists no algorithm that can have a better cyber regret \cite{Kleinberg2010}. Thus, using the UCB algorithm and the defined cyber regret we can address the DIA detection problem for sensors which lack a priori information.
\vspace{-8mm}
\section{Simulation Results and Analysis}\label{sec:sim}
For our simulations, we assume that ACV $ l $ has a sinusoidal speed pattern and ACV $ f $'s initial speed and spacing from the ACV $ l $ are $ v_f(0)=90 $ km/h and $ d(0)=100 $ m. We set $ n_f=n_d=n_l=4 $, $ T=0.1 $ s, $ b_f = 2.5  $ m/$ s^2 $, and $ u_f^{\text{max}}=-u_f^{\text{min}}=0.25 $ N/kg, which are chosen based on practical ACV characteristics\cite{Calandriello2011}.\vspace{-5mm}
\subsection{Optimal Safe Controller}
Fig. \ref{Fig:controller} shows an ACV which uses the proposed optimal controller. From Fig. \ref{Fig:controller}, we make three observations: 1) The physical regret converges to zero and the actual spacing converges to the optimal safe spacing after approximately 20 seconds which shows that the proposed controller works properly, 2) The speed of ACV $ f $, $ v_f(t) $, exhibits, approximately, a 5 seconds delay compared to the ACV $ l $'s speed vehicle $ v_l(t) $. This is because ACV $ f $ should first observe ACV $ l $ and then control its own speed, and 3) The estimated states match with the actual state values which shows that applied state estimation estimation processes have an error close to zero.

\begin{figure}
\begin{subfigure}{0.48\textwidth}
	\centering
	\includegraphics[width=\textwidth]{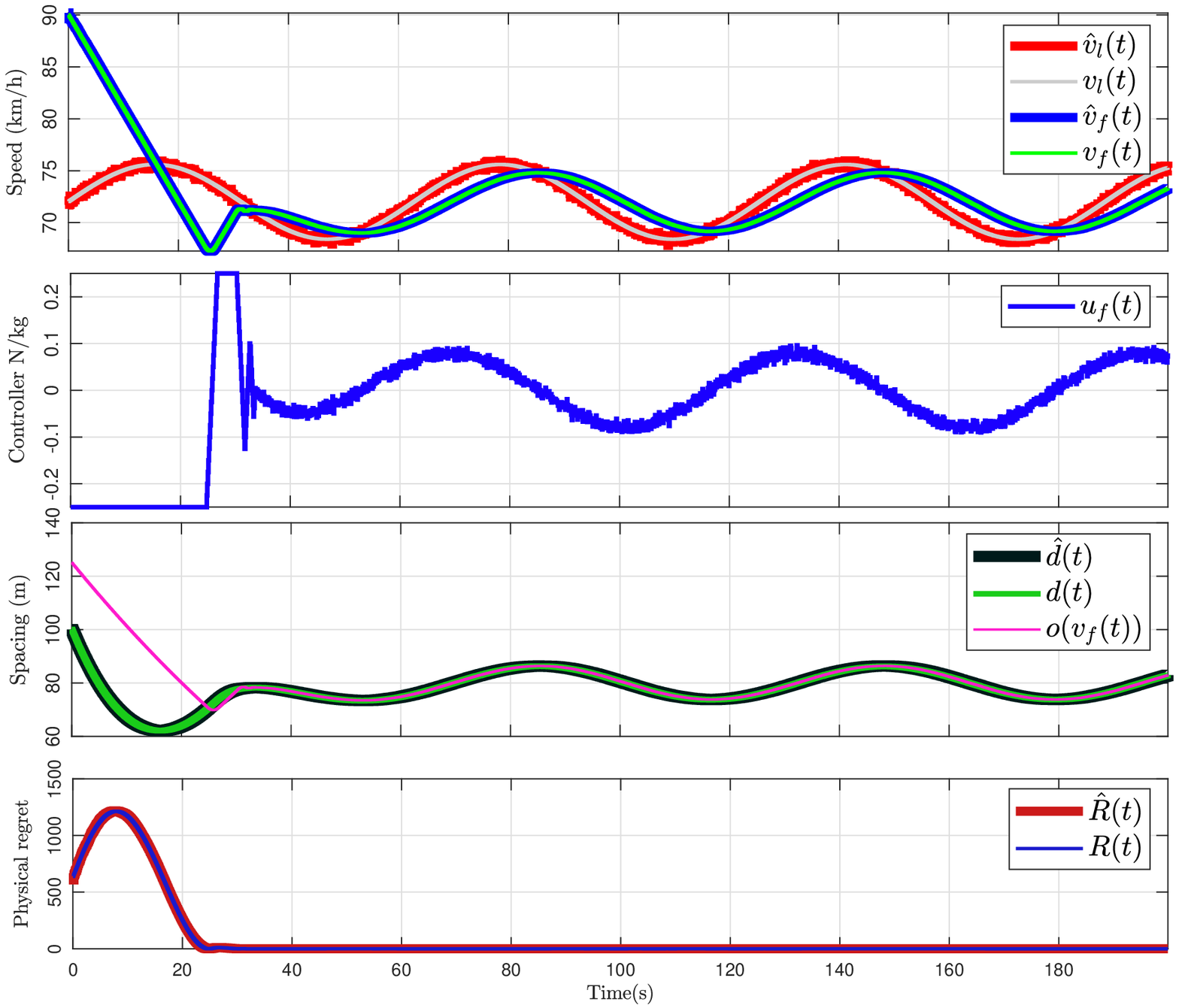}
	\vspace{-4mm}
	\caption{The proposed controller, physical regret, speed and spacing.}
	\label{Fig:controller}
	\vspace{-4mm}
\end{subfigure}~
\begin{subfigure}{0.48\textwidth}
\centering
\includegraphics[width=\textwidth]{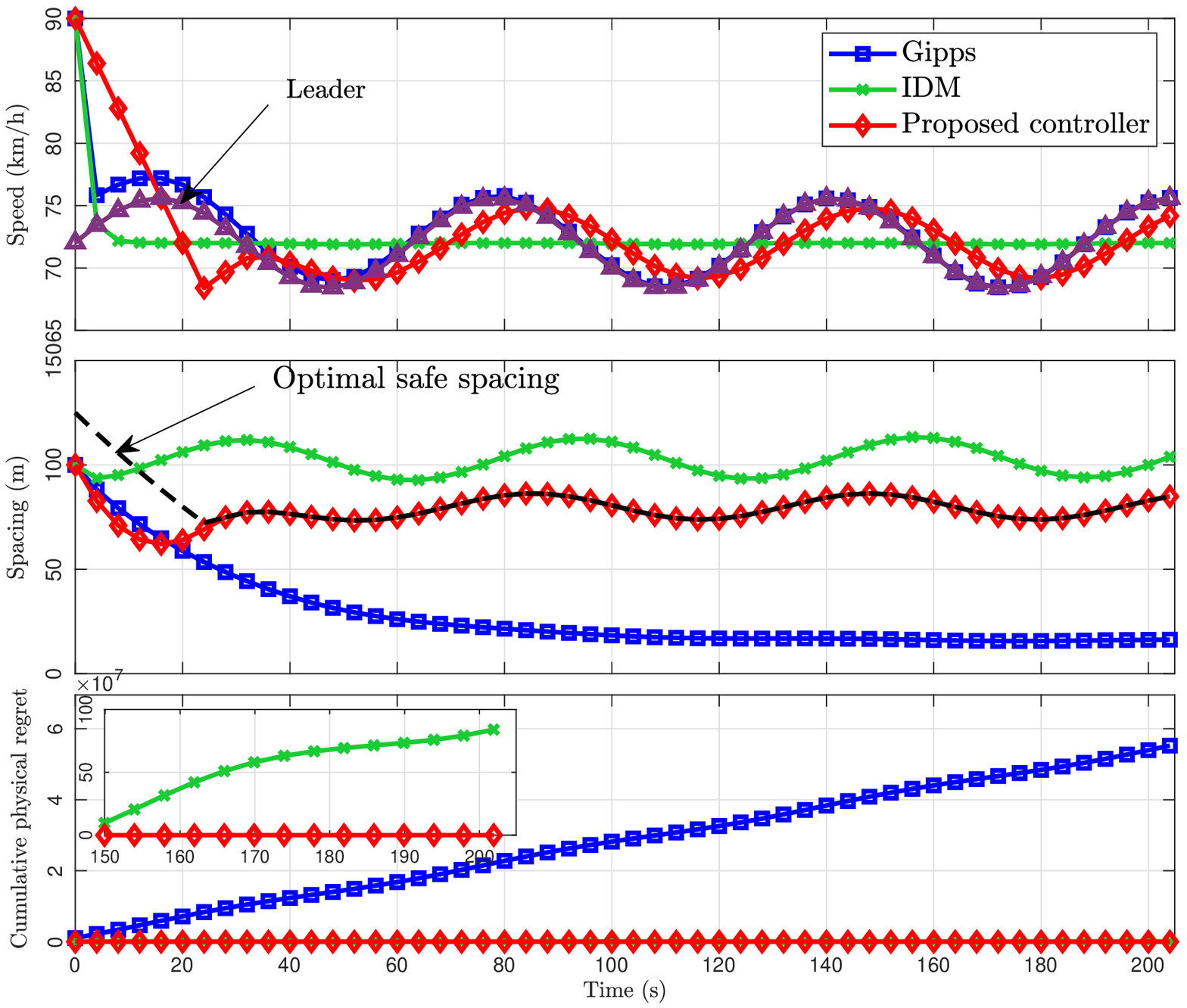}
\vspace{-4mm}
\caption{Comparison of proposed controller, Gipps, and IDM.}
\label{Fig:comp}
\vspace{-4mm}
\end{subfigure}
\caption{The proposed optimal controller's effect on the physical regret.}
\label{Fig:opt}
\vspace{-10mm}
\end{figure}

Fig. \ref{Fig:comp} shows a comparison between our proposed controller with Gipps \cite{gipps1981behavioural} and the intelligent driver model (IDM) \cite{Kesting4585} controllers, two of the well-known controllers for ACV leading-following scenarios. From Fig. \ref{Fig:comp}, we can observe three key points: 1) Our proposed controller can follow the speed of ACV $ l $ better compared to IDM as we can see that IDM converges to an almost constant speed while our proposed controller can track $ v_l(t) $, 2) The proposed controller converges to the optimal safe spacing. However, the IDM model has a positive offset from the optimal safe spacing and the Gipps controller always has a smaller spacing than the optimal safe spacing. This shows that the Gipps model does not take into consideration the safety as we can see from Fig. \ref{Fig:comp} that the spacing can admit small values close to 0 meters which increases the risk of collision. On the other hand, the IDM model does not consider the optimality of traffic flow as it always yields a higher spacing than the optimal spacing, and 3) The cumulative physical regret for our proposed optimal safe controller outperforms the other two models thus demonstrating that the proposed controller can jointly yield optimality and safety in ACV systems.\vspace{-5mm}
\subsection{Robustness Against Physical Attacks}
Next, we simulate an attacker which takes control of ACV $ l $ and after 100 seconds suddenly drops the speed of ACV $ l $ from 75 km/h to 5 km/h. Fig. \ref{Fig:physicalattack} shows the effect of this attack on the ACV $ f $'s speed, spacing and physical regret. Fig. \ref{Fig:physicalattack} shows that our proposed controller always maintains the optimal safe spacing even in presence of an attack. In contrast, the Gipps and IDM controllers always have an offset from the optimal safe spacing. Hence, our proposed controller is more robust against physical attacks. Note that, although Gipps can track the speed faster than our proposed method, however, it does not optimize the spacing. In addition, we can see from Fig. \ref{Fig:physicalattack} that the proposed controller has a physical regret closed to zero while the cumulative regret of IDM and Gipps grow linearly since they have a constant offset from the optimal safe spacing in this attack scenario.
\begin{figure}
	\centering
	\includegraphics[width=0.65\textwidth]{./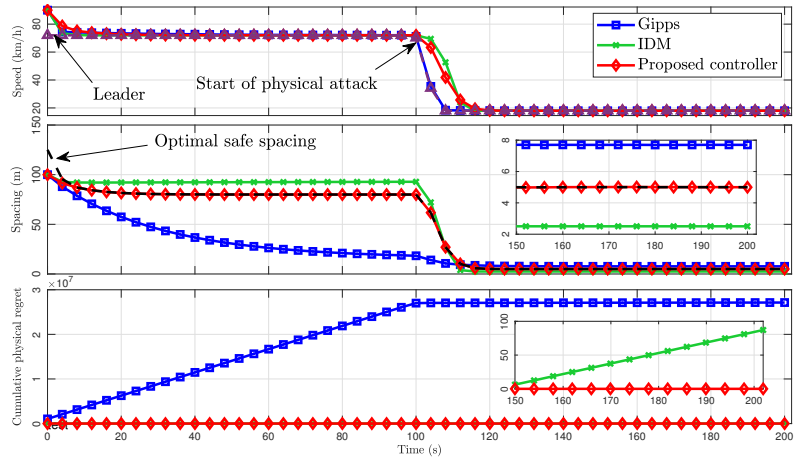}
	\vspace{-4mm}
	\caption{A comparison of the proposed controller, Gipps, and IDM when ACV system is physical attacked.}
	\label{Fig:physicalattack}
	\vspace{-11mm}
\end{figure}\vspace{-5mm}
\subsection{Cyber Security of ACV Systems}
To illustrate the effectiveness of our proposed attack detection approaches, in Fig. \ref{Fig:cyber}, we analyze the impact of DIA on physical regret and estimation errors. Fig. \ref{Fig:AttDom} shows the relationship between the stealthy attack probability and maximum DIA when we apply our proposed attack detection filter on the sensors with a priori information. Fig \ref{Fig:AttDom} shows that as the stealthy attack probability increases the domain of the attack reduces. Moreover the DIA has higher value for the sensors with higher noise variance. To stay stealthy with higher probability, the attacker must reduce the domain of its injected faulty data to the sensors. In addition, Figs. \ref{Fig:AttRegret}, \ref{Fig:Attvf}, and \ref{Fig:Attd} show how the physical regret and steady state errors are affected by the probability of staying stealthy. From these figures, we can see that as the stealthy attack probability increases the regret and the absolute value of steady state errors decrease because the attacker's maximum DIA decreases.

Fig. \ref{Fig:GreedyAttackDetection} shows the estimation error after applying our proposed attack detection filter on sensors with apriori estimation. From Fig. \ref{Fig:GreedyAttackDetection}, we can see that, while the attack on $ v_f(t) $ type sensor does not affect significantly the estimation, an attack on a $ d(t) $ type sensor can cause an estimation error on $ \hat{d}(t) $. In addition, Fig. \ref{Fig:GreedyAttackDetection} shows that the designed attack detection mechanism for sensors with a priori information mitigates DIAs and keeps the estimation error close to zero. 

\begin{figure}
	\begin{subfigure}{0.23\textwidth}
	\centering
	\includegraphics[width=\textwidth]{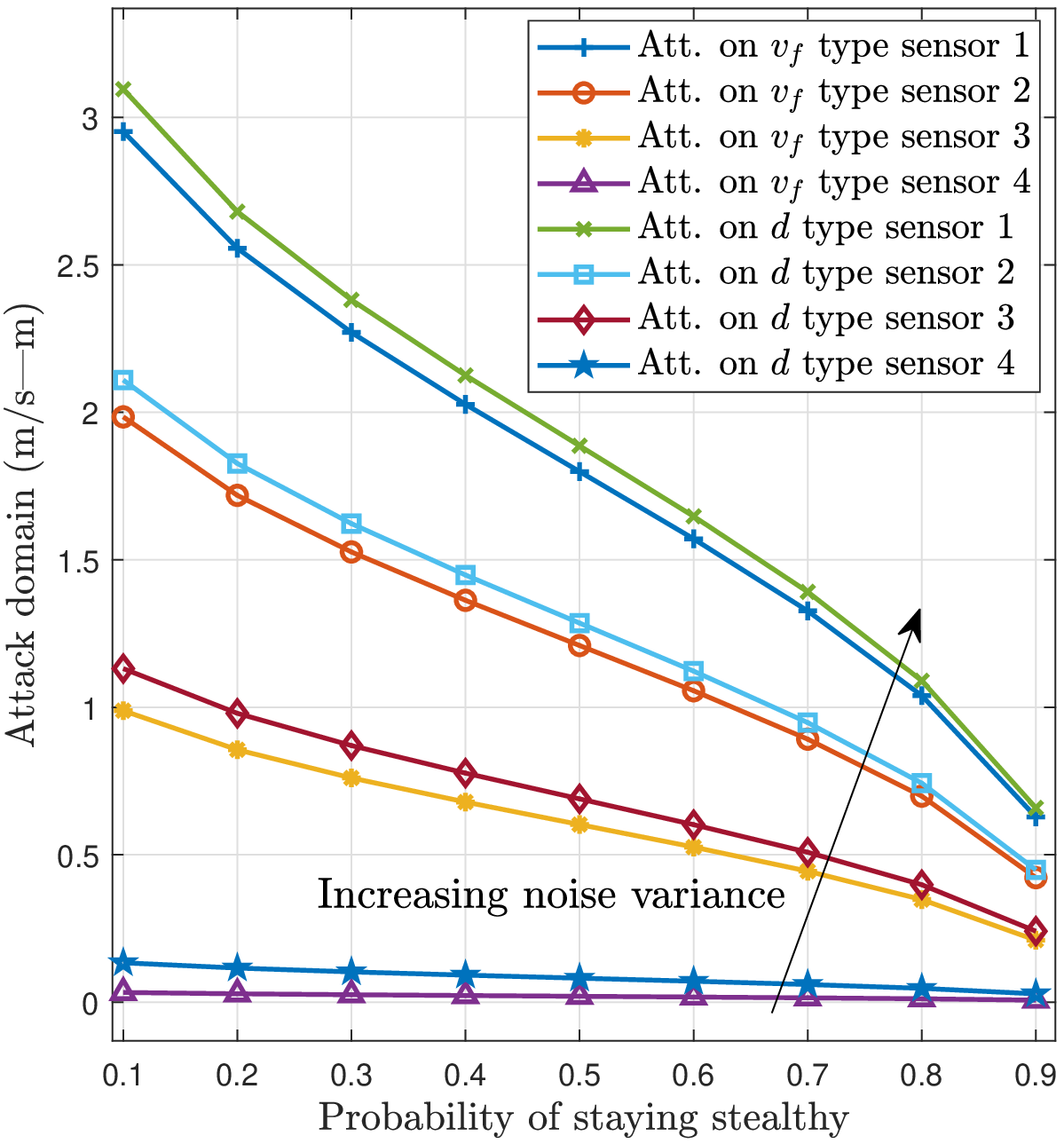}
	\caption{Attack detection filter applied on the cyber attacks on $ \hat{v}_f(t) $ and $ \hat{d}(t) $.}
	\label{Fig:AttDom}
	\vspace{-4mm}
\end{subfigure}~
\begin{subfigure}{0.23\textwidth}
	\centering
	\includegraphics[width=\textwidth]{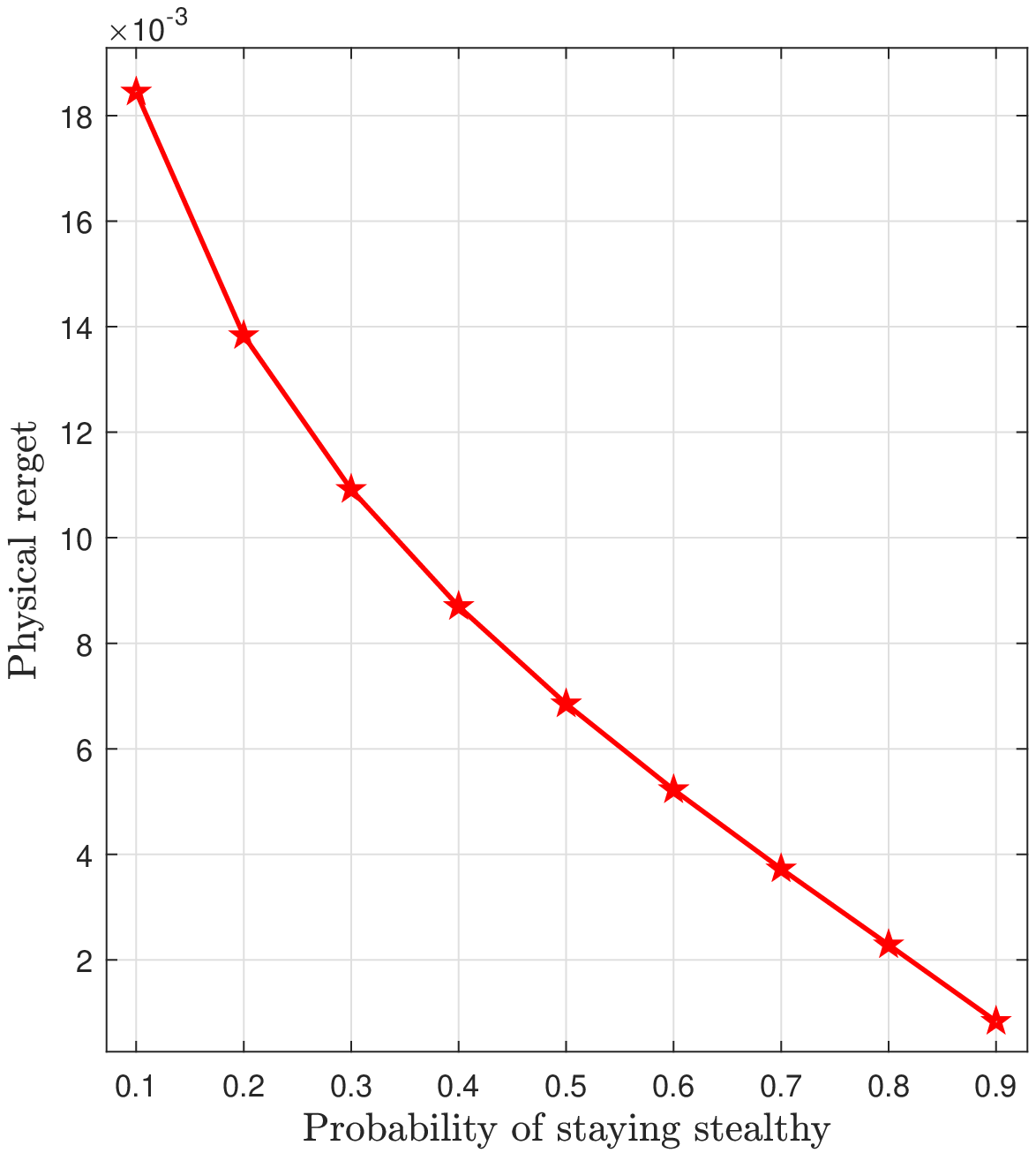}
	\caption{Attack detection filter applied on the cyber attacks on $ \hat{v}_f(t) $ and $ \hat{d}(t) $.}
	\label{Fig:AttRegret}
	\vspace{-4mm}
\end{subfigure}~
\begin{subfigure}{0.23\textwidth}
\centering
\includegraphics[width=\textwidth]{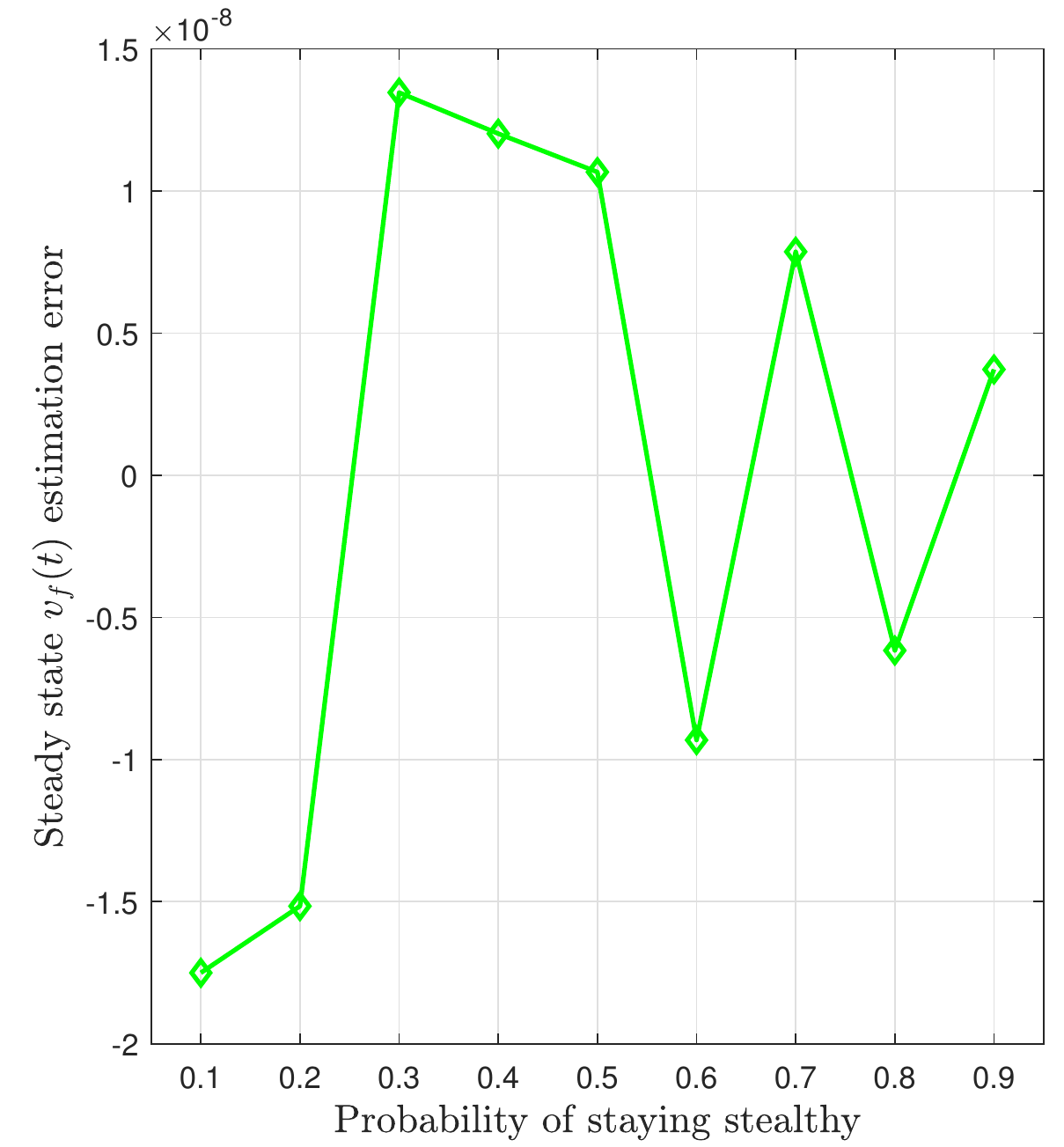}
\caption{Attack detection filter applied on the cyber attacks on $ \hat{v}_f(t) $ and $ \hat{d}(t) $.}
\label{Fig:Attvf}
\vspace{-4mm}
\end{subfigure}
~
\begin{subfigure}{0.23\textwidth}
	\centering
	\includegraphics[width=\textwidth]{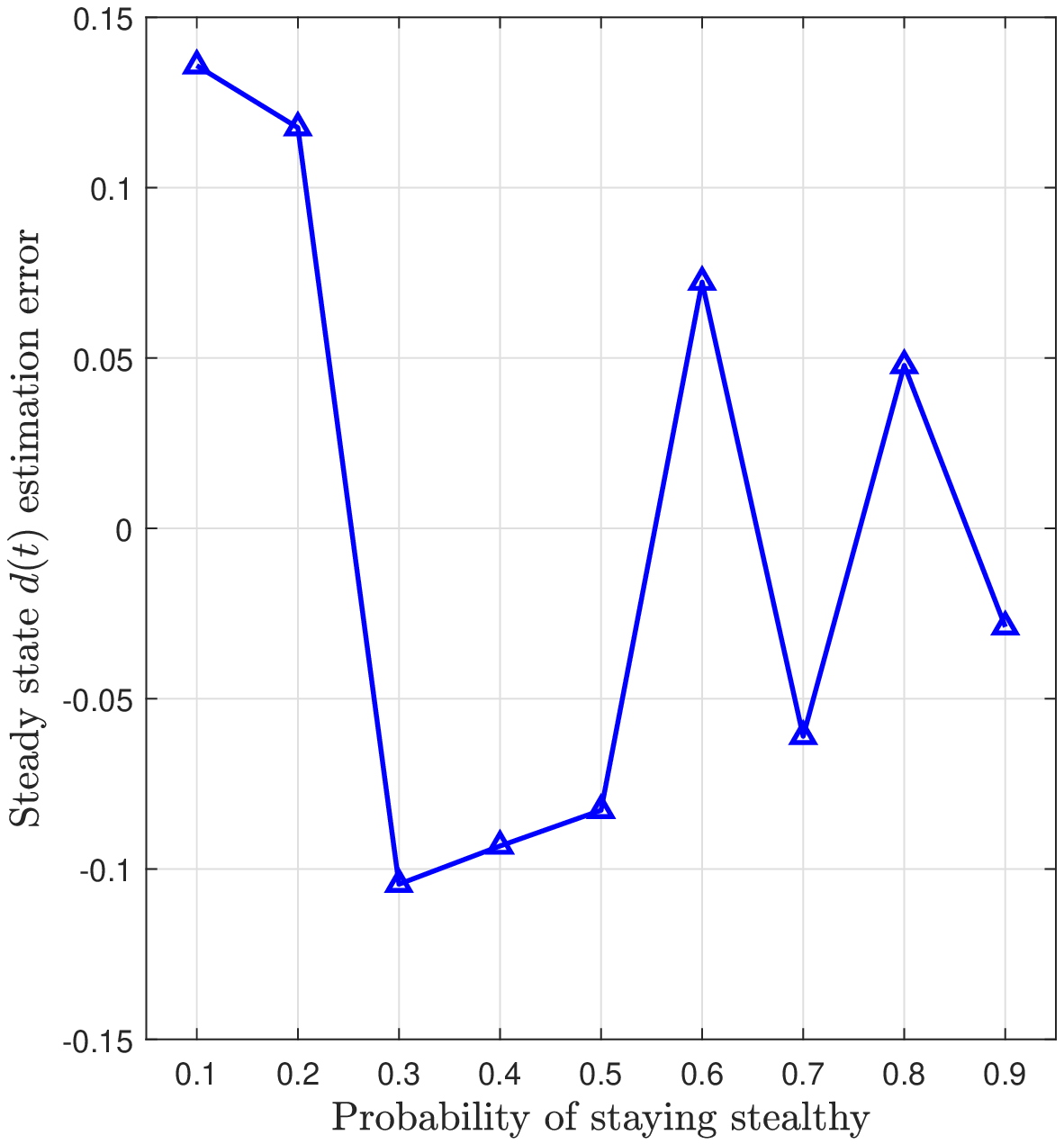}
	\caption{Attack detection filter applied on the cyber attacks on $ \hat{v}_f(t) $ and $ \hat{d}(t) $.}
	\label{Fig:Attd}
	\vspace{-4mm}
\end{subfigure}
\caption{Attack detection filter applied on the cyber attacks on $ \hat{v}_f(t) $ and $ \hat{d}(t) $.}
\label{Fig:cyber}
\vspace{-8mm}
\end{figure}

\begin{figure}[t]
	\centering
	\includegraphics[width=0.8\textwidth]{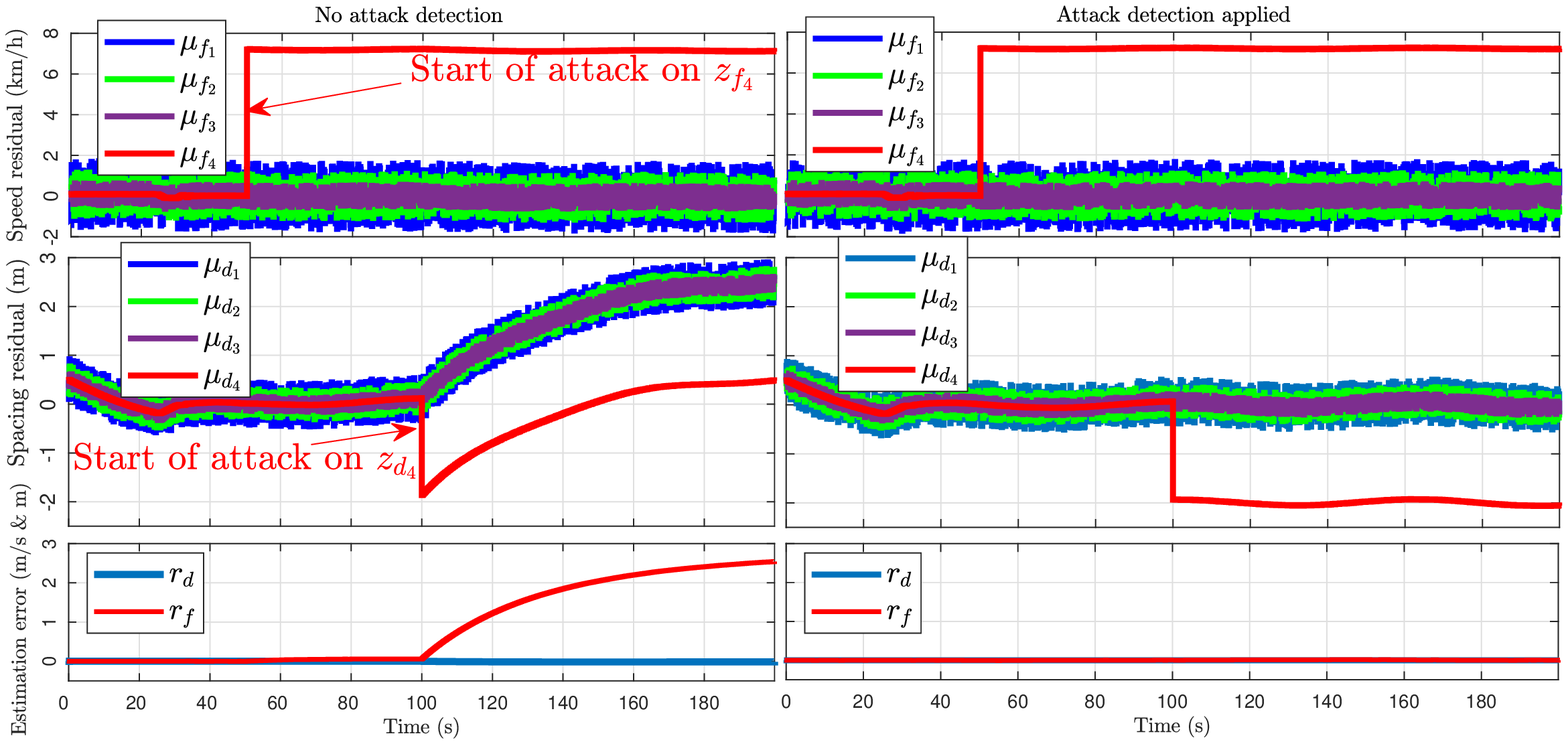}
	\vspace{-4mm}
	\caption{Attack detection filter applied on the cyber attacks on $ \hat{v}_f(t) $ and $ \hat{d}(t) $.}
	\label{Fig:GreedyAttackDetection}
	\vspace{-10mm}
\end{figure}

Fig. \ref{Fig:CumMul} shows how the increase on the number of under attack $ v_l $ type sensors affects the cyber regret. In this simulation, we use the same settings as for Fig. \ref{Fig:opt}. From Fig. \ref{Fig:CumMul}, we can first observe that irrespective of the number of under attack sensors, the attack can be detected in under 4 seconds as the cyber regret goes to zero after 4 second which is acceptable for this scenario as in 4 seconds ACV $ f $ travels for 80 meters while the optimal safe spacing is also 80 meters, thus, there will be no collision even if ACV $ l $ stops suddenly , while ACV $ f $'s sensors are attacked. Moreover, the growth rate of cyber regret for the three attacks is approximately equal regardless of the number of attacked sensors. Furthermore, Fig. \ref{Fig:CumMul} shows that, as the number of attacked sensors increases, the cumulative cyber regret increases faster thus the attack can be detected earlier. This shows that having access to more sensors is not essentially beneficial for the attacker and thus, the attacker must also optimize its set of attacked sensors to stay stealthy for a longer time.

\begin{figure}[t]
	\centering
	\includegraphics[width=0.75\textwidth]{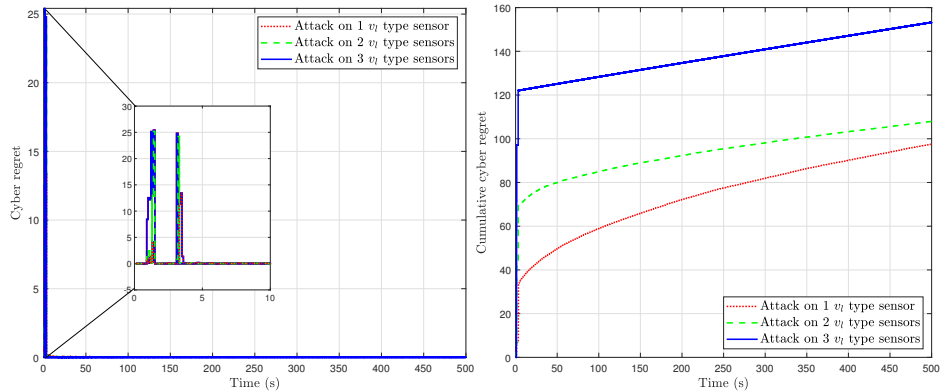}
	\vspace{-4mm}
	\caption{The cyber regret in MAB algorithm applied to $ v_l $ type sensors.}
	\label{Fig:CumMul}
	\vspace{-8mm}
\end{figure}

Fig. \ref{Fig:CumPow} shows how the domain of attack can affect the cyber regret. We can see from Fig. \ref{Fig:CumPow} that, as the domain of injected data to the $ v_l $ type sensor increases, the MAB algorithm can chooses the optimal subset more efficiently. This means that injecting higher values to the sensor increases the chance of detectability and thus, reduces the impact of the attack on the sensors. In addition, we can see that the cumulative cyber regret for the highest DIA (2.5 m/s data injection) is greater than other cases, at the beginning of the simulation. However, after almost 350 seconds, the cumulative cyber regret of the highest DIA becomes the least compared to the other DIAs. This is due to the fact that a larger DIA leads to a higher cyber regret at the beginning of the attack which leads the MAB algorithm to detect it earlier than other DIA cases.
\begin{figure}[t]
	\centering
	\includegraphics[width=0.75\textwidth]{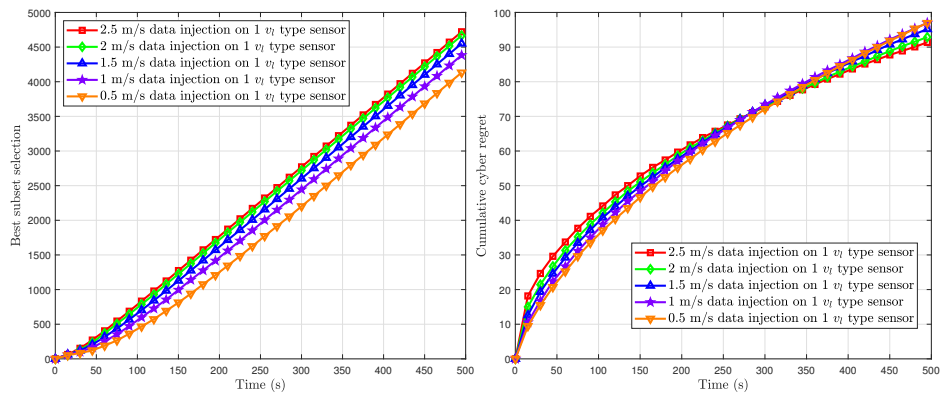}
	\vspace{-4mm}
	\caption{The best subset selection in MAB algorithm applied to $ v_l $ type sensors.}
	\label{Fig:CumPow}
	\vspace{-10mm}
\end{figure}

Fig. \ref{Fig:chosensubset} shows a scenario in which the attacker attacks only the third sensor when $ n_l = 4 $. Note that, $b_4b_3b_2b_1$ shows the state of the sensors such that for $  i=1,\dots,4 $ if $ b_i=1 $ then sensor $ i $ is under attack and otherwise, it is not. Obviously, when the attacker attacks the third sensor, the ACV must only rely on the other sensors. The $b_4b_3b_2b_1$ on the x-axis indicates the chosen subset by the MAB algorithm to use for estimation.  Fig. \ref{Fig:chosensubset} shows that using the MAB algorithm, percentage of the times that subset $ \{1011\} $ is chosen by MAB is higher than other cases, which means that the MAB algorithm can detect the attack on the $ v_l $ type sensors.
\begin{figure}[t]
	\centering
	\includegraphics[width=0.52\textwidth]{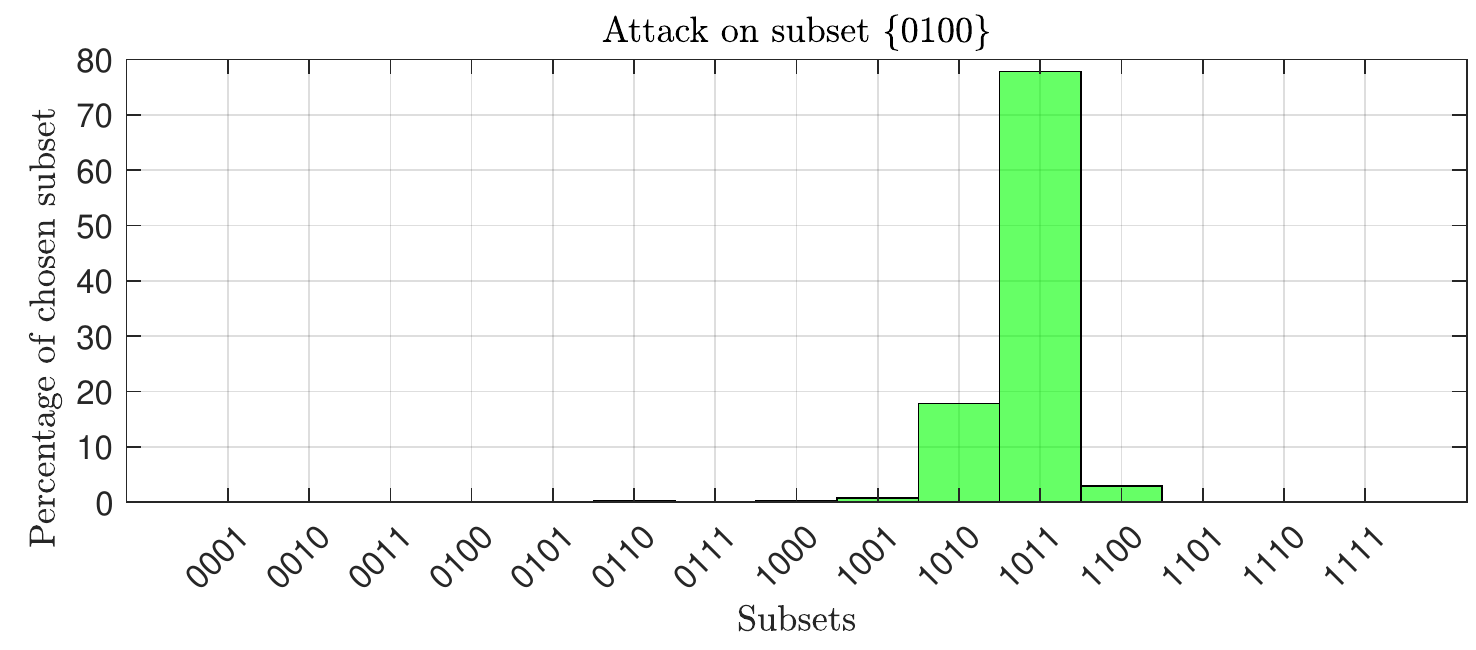}
	\vspace{-4mm}
	\caption{Attack detection with MAB algorithm applied to $ v_l $ type sensors.}
	\label{Fig:chosensubset}
	\vspace{-8mm}
\end{figure}
\begin{figure*}[t]
	\begin{subfigure}{0.32\textwidth}
		\includegraphics[width=\textwidth]{./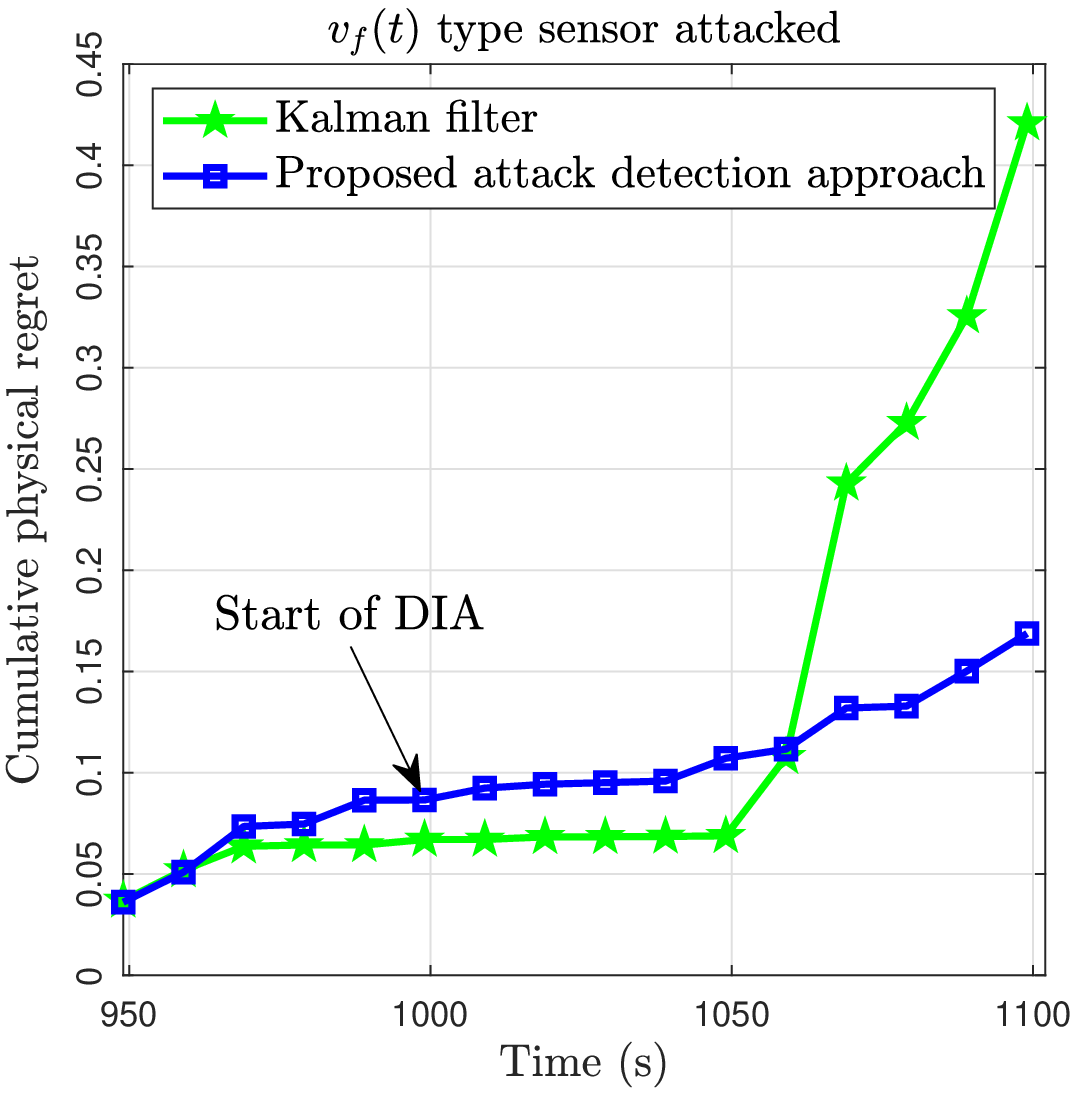}
		\caption{DIA on $ v_f(t) $ type.}
		\label{Fig:DIAvf}
		\vspace{-4mm}
	\end{subfigure}
~
\begin{subfigure}{0.32\textwidth}
	\includegraphics[width=\textwidth]{./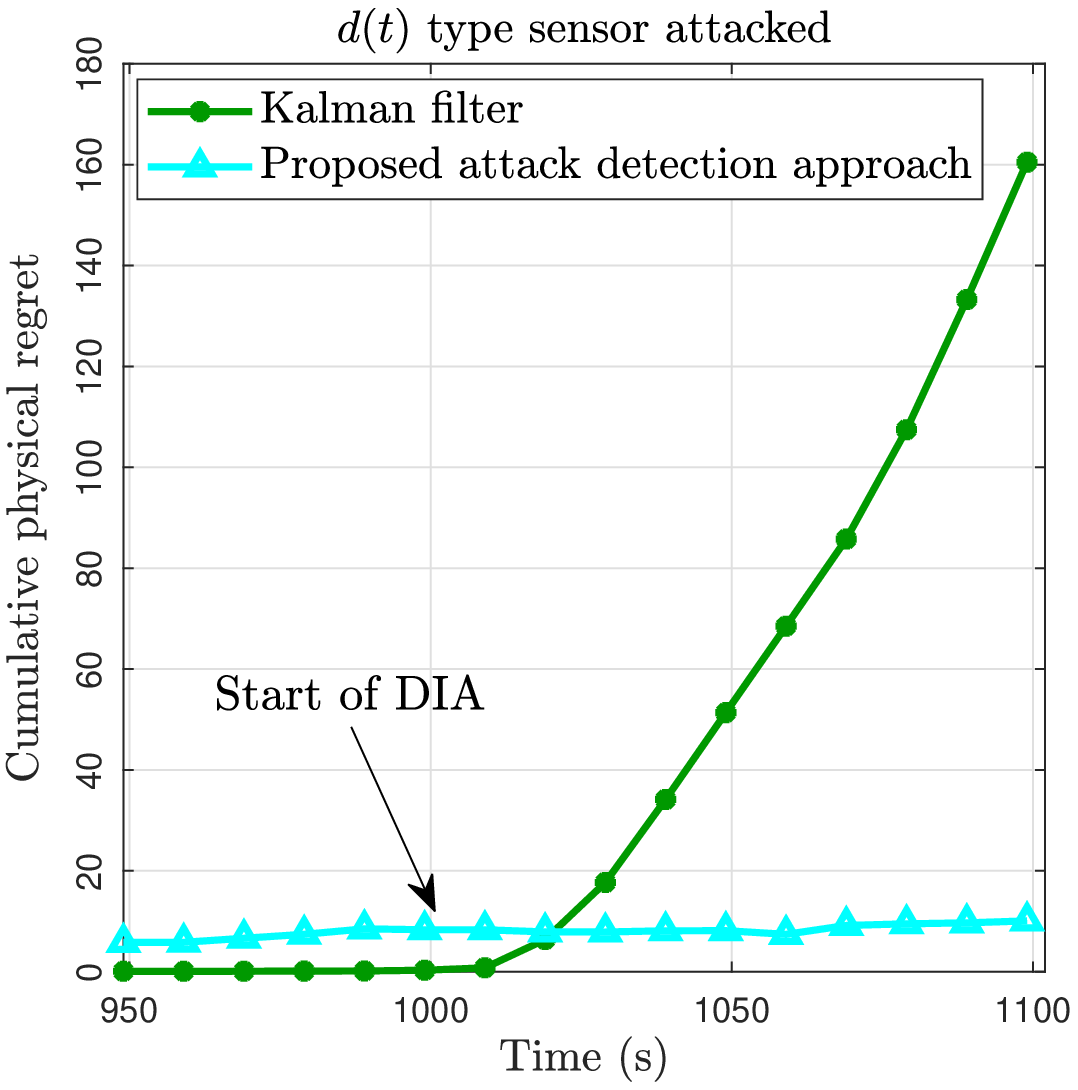}
	\caption{DIA on $ d(t) $ type.}
	\label{Fig:DIAd}
	\vspace{-4mm}
\end{subfigure}
~
\begin{subfigure}{0.32\textwidth}
	\includegraphics[width=\textwidth]{./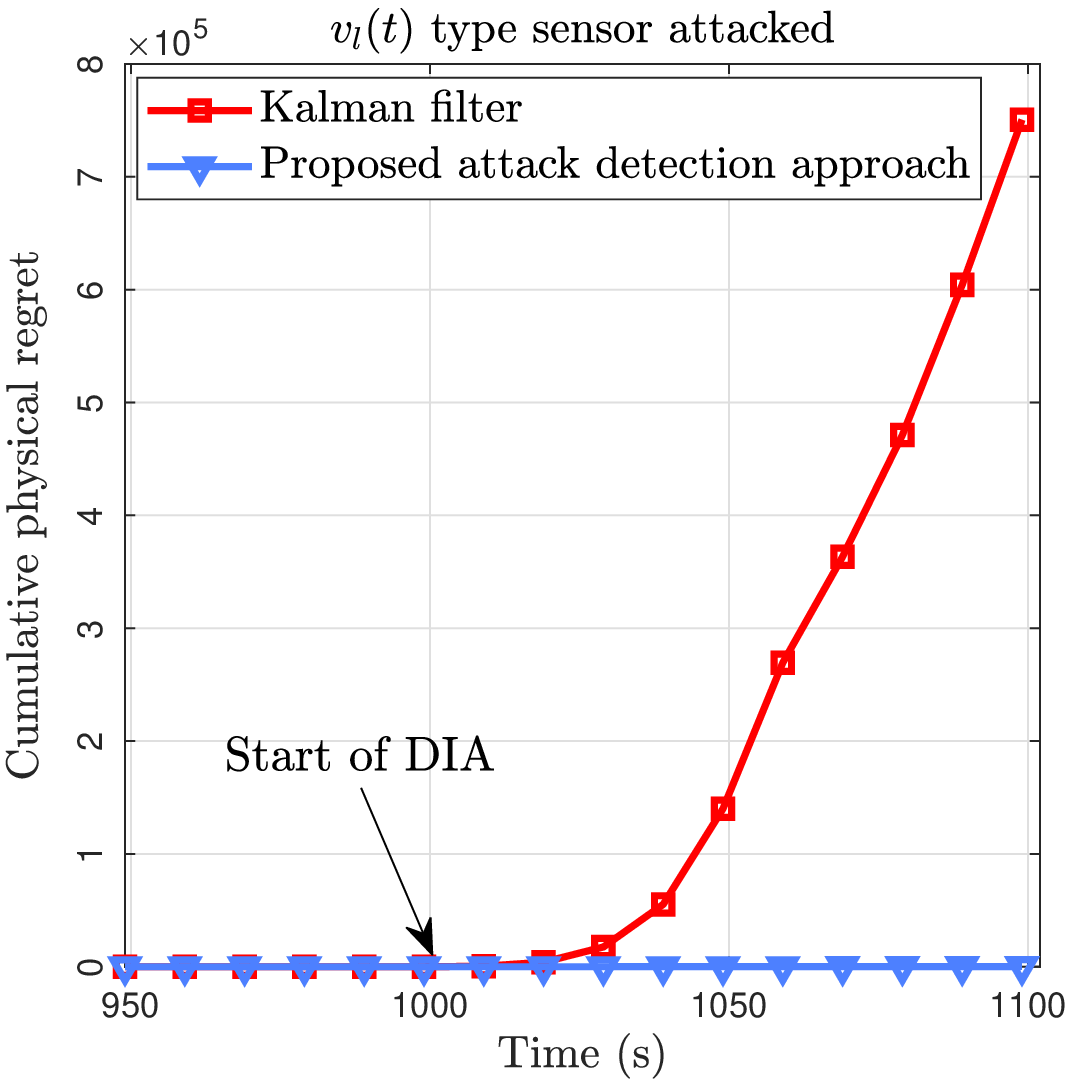}
	\caption{DIA on $ v_l(t) $ type.}
	\label{Fig:DIAvl}
	\vspace{-4mm}
\end{subfigure}
\caption{Comparison of the proposed DIA detection approaches with Kalman filtering.}
\label{Fig:DIA}
\vspace{-10mm}
\end{figure*}

Finally, to show that our proposed DIA detection approaches can reduce the physical regret significantly, in Fig. \ref{Fig:DIA}, we consider that the attacker attacks only one sensor of each type with the lowest noise error variance (most valuable sensor) after 1000 seconds (after ensuring that Kalman filter converged). Then, we compare the cumulative physical regret of the ACV system for a case in which the ACV only uses a Kalman filter for estimation and another case in which the ACV applies both of our proposed DIA detection filters. Moreover, in this attack, we consider that the attacker increases the sensor readings by a factor of $ 50\% $. As can be seen from Fig. \ref{Fig:DIA}, the proposed attack detection filter has lower physical regret compared to a simple Kalman filtering. We can see that while the physical regret increases linearly for Kalman filter, our proposed DIA detection approach keeps the physical regret close to zero. This shows that, our proposed DIA detection approach can successfully detect the attack, mitigate it, and keep the ACV system safe and secure, however the Kalman filter fails to stay robust against the DIA.\vspace{-5mm}
\section{Conclusion}\label{sec:conc}
In this paper, we have comprehensively studied the cyber-physical security and safety of ACV systems. We have proposed an optimal safe controller that maximizes the traffic flow and minimizes the risk of accidents on the roads by optimizing the speed and spacing of ACVs. In addition, the proposed optimal safe controller maximizes the stability of ACV systems and improves their robustness against physical attacks. Moreover, we have improved cyber-physical security of ACV systems by proposing two novel DIA detection approaches. The first approach detects DIAs using a priori information about the sensors. In the second approach, we have proposed an MAB algorithm to learn secure subset of sensors when there exists no a priori information about sensors. Simulation results have shown that the proposed optimal safe controller yields better stability and robustness compared to current state of the art controllers. In addition, the DIA detection approaches improve the security of the sensors and robustness of the ACV system compared to Kalman filtering.
\vspace{-3mm}
\begin{appendix}\vspace{-6mm}
	\subsection{Proof of Theorem \ref{Theorem1step}}\label{App:Theorem1}
	First, using \eqref{eq:spacingdisc} and \eqref{eq:1stepconst2} we can find that:
	\begin{align}\label{eq:constcomb}
	\max \hspace{-0.5mm}\left\{\hspace{-0.5mm}\frac{-v_f(t)}{T},\hspace{-0.5mm}u_f^{\text{min}},\hspace{-0.5mm}u_f(t-1)-\Delta u\hspace{-0.5mm}\right\}\leq u_f(t)\leq \min \left\{\frac{\tilde{v}-v_f(t)}{T},u_f^{\text{max}},u_f(t-1)+\Delta u \right\},
	\end{align}
	and since ACV $ f $ only has an estimation of $ v_f(t) $, then we can rewrite \eqref{eq:constcomb} as follows:
	\begin{align}
	\underbrace{\max\hspace{-0.5mm} \left\{\hspace{-0.5mm}\frac{-\hat{v}_f(t)}{T},u_f^{\text{min}},u_f(t-1)-\Delta u \hspace{-0.5mm}\right\}}_{\underbar{$u$}_1(t)}\leq u_f(t)\leq \underbrace{\min\hspace{-0.5mm} \left\{\hspace{-0.5mm}\frac{\tilde{v}-\hat{v}_f(t)}{T},u_f^{\text{max}},u_f(t-1)+\Delta u \hspace{-0.5mm}\right\}}_{\overline{u}_1(t)}.
	\end{align}
	Now, we can define Lagrangian multipliers and apply the Karush Kuhn Tucker (KKT) conditions to solve the optimization problem in \eqref{eq:1step}. Then, we will have:
	\begin{align}
	g(u_f(t),\lambda_1,\lambda_2) &= \left(\frac{(\hat{v}_f(t)+Tu_f(t))^2}{2b_f}-\hat{d}(t)-T\hat{v}_l(t)+T\hat{v}_f(t)\right)^2 + \lambda_1 (u_f(t)-\overline{u}_1(t)\nonumber\\
	&+\lambda_2(\underbar{$ u $}_1(t)-u_f(t)),
	\end{align}
	with the first order necessary conditions given by:
	$ 
	\frac{\partial g}{\partial u_f}=0,\,\, \frac{\partial g}{\partial \lambda_1}=0,$ and $ \frac{\partial g}{\partial \lambda_2}=0.
	$
	When $ \lambda_1 $ and $ \lambda_2 $ are not active ($ \lambda_1 =0$ and $ \lambda_2=0 $), we will have:
	\begin{align}
	\frac{\partial g}{\partial u_f}&= \frac{4T}{2b_f}\left(\frac{(\hat{v}_f(t)+Tu_f(t))^2}{2b_f}-\hat{d}(t)-T\hat{v}_l(t)+T\hat{v}_f(t)\right)\left(\hat{v}_f(t)+Tu_f(t)\right)=0\nonumber\\\label{eq:solutions}
	&\Rightarrow u^*_f(t) = \frac{1}{T}\left(\pm\sqrt{2b_f\left(\hat{d}(t)+T\hat{v}_l(t)-T\hat{v}_f(t)\right)}-\hat{v}_f(t)\right), \,\, -\frac{1}{T}\hat{v}_f(t).
	\end{align}
	However, $ \frac{1}{T}\left(-\sqrt{2b_f\left(\hat{d}(t)+T\hat{v}_l(t)-T\hat{v}_f(t)\right)}-\hat{v}_f(t)\right)$ and $ -\frac{1}{T}\hat{v}_f(t) $ are not acceptable solutions since they result in $ v_f(t+1)\leq 0 $. The next step is to show that the third solution in \eqref{eq:solutions} is the global minimum. Thus, we apply the second-order condition on $ \hat{R}(t) $ as follows:
	\begin{align}
	&\frac{d^2\hat{R}(t)}{du^2_f(t)}= \frac{4T}{2b_f}\left[\left(\frac{T(\hat{v}_f(t)+Tu_f(t))^2}{b_f}\right)
	+T\left(\frac{(\hat{v}_f(t)+Tu_f(t))^2}{2b_f}-\hat{d}(t)-T\hat{v}_l(t)+T\hat{v}_f(t)\right)\right]\nonumber\\
	&\frac{d^2\hat{R}(t)}{du^2_f(t)}\Bigg|_{u_f^*(t)}=\frac{4T}{2b_f}\left(\frac{T(\hat{v}_f(t)+Tu^*_f(t))^2}{b_f}\right)>0.
	\end{align}
	Thus, $ u_f^*(t) = \frac{1}{T}\left(\sqrt{2b_f\left(\hat{d}(t)+T\hat{v}_l(t)-T\hat{v}_f(t)\right)}-\hat{v}_f(t)\right) $ is a global maximizer of $ \hat{R}(t) $. Now, to consider the constraints in \eqref{eq:constcomb}, if we activate $ \lambda_1 $ or $ \lambda_2 $ the first order condition will result in having $ u^*_f(t)= \overline{u}_1(t) $ or $ \underbar{$ u $}_1(t) $. Therefore, the optimal 1-step ahead controller can be given by \eqref{eq:1stepopt}.
	\vspace{-7mm}\subsection{Proof of Theorem \ref{Theorem:mudistribtution}}\label{App:Theoremmu}
We know that the Kalman gain converges to $ \tilde{\boldsymbol{K}} $. Thus, from \eqref{eq:aprioristate} and \eqref{eq:Kalmanestimtion} we can find the state estimation $ \hat{\boldsymbol{x}}(t) $ as a process which depends on estimation error $ \boldsymbol{r}(t) $ as follows:
		\begin{align}
		\hat{\boldsymbol{x}}(t) &= \left[\boldsymbol{I} -\tilde{\boldsymbol{K}} \boldsymbol{H} \right] \hat{\boldsymbol{x}}^{(-)}(t)+ \tilde{\boldsymbol{K}}\underbrace{\boldsymbol{z}(t)}_{\left[\boldsymbol{H}x(t)+\boldsymbol{e}(t)\right]}\nonumber\\
		&=\left[\boldsymbol{I} -\tilde{\boldsymbol{K}} \boldsymbol{H} \right]\left[\boldsymbol{A} \hat{\boldsymbol{x}}(t-1) + \boldsymbol{B}u_f(t-1) + \boldsymbol{F} \hat{v}_l(t-1)\right]\nonumber\\
		&+\tilde{\boldsymbol{K}}\boldsymbol{H}\left[\hat{\boldsymbol{x}}(t)+\boldsymbol{r}(t)\right]+\tilde{\boldsymbol{K}}\boldsymbol{e}(t),\nonumber\\
		\Rightarrow \left[\boldsymbol{I} -\tilde{\boldsymbol{K}} \boldsymbol{H} \right]\hat{\boldsymbol{x}}(t) & = \left[\boldsymbol{I} -\tilde{\boldsymbol{K}} \boldsymbol{H} \right]\left[\boldsymbol{A} \hat{\boldsymbol{x}}(t-1) + \boldsymbol{B}u_f(t-1) + \boldsymbol{F} \hat{v}_l(t-1)\right]+\tilde{\boldsymbol{K}}\left[\boldsymbol{H}\boldsymbol{r}(t)+\boldsymbol{e}(t)\right],\nonumber\\
		\Rightarrow	\hspace{-0.2mm}\hat{\boldsymbol{x}}(t)\hspace{-0.5mm}  =\hspace{-0.5mm} \boldsymbol{A} \hat{\boldsymbol{x}}(t-1) \hspace{-0.5mm}&+ \hspace{-0.5mm}\boldsymbol{B}u_f(t-1)\hspace{-0.5mm} +\hspace{-0.5mm} \boldsymbol{F} \hat{v}_l(t-1)\hspace{-0.5mm}+\hspace{-0.5mm}\left[\hspace{-0.7mm}\boldsymbol{I} \hspace{-0.5mm}-\hspace{-0.5mm}\tilde{\boldsymbol{K}} \boldsymbol{H} \hspace{-0.5mm}\right]^{-1}\hspace{-1.2mm}\tilde{\boldsymbol{K}}\hspace{-0.5mm}\left[\hspace{-0.5mm}\boldsymbol{H}\boldsymbol{r}(t)\hspace{-0.5mm}+\hspace{-0.5mm}\boldsymbol{e}(t)\hspace{-0.7mm}\right]\hspace{-0.5mm}.\label{eq:estimated}
		\end{align}
		Now, by subtracting \eqref{eq:estimated} from the system model defined in \eqref{eq:system}, we will have:
		\begin{align}
		\boldsymbol{r}(t) &= \boldsymbol{A}\boldsymbol{r}(t-1)+\boldsymbol{F}{r}_l(t)-\left[\boldsymbol{I} -\tilde{\boldsymbol{K}} \boldsymbol{H} \right]^{-1}\tilde{\boldsymbol{K}}\left[\boldsymbol{H}\boldsymbol{r}(t)+\boldsymbol{e}(t)\right],\nonumber\\
		\boldsymbol{r}(t)&=\underbrace{\left[\boldsymbol{I}+\left[\boldsymbol{I} -\tilde{\boldsymbol{K}} \boldsymbol{H} \right]^{-1}\tilde{\boldsymbol{K}}\boldsymbol{H}\right]^{-1}}_{\boldsymbol{Q}}\hspace{-1mm}\boldsymbol{A}\boldsymbol{r}(t-1)	+\underbrace{\left[\boldsymbol{I}+\left[\boldsymbol{I} -\tilde{\boldsymbol{K}} \boldsymbol{H} \right]^{-1}\tilde{\boldsymbol{K}}\boldsymbol{H}\right]^{-1}\hspace{-2mm}\left[\boldsymbol{F}r_l(t)-\tilde{\boldsymbol{K}}\boldsymbol{e}(t)\right]}_{\boldsymbol{\rho}(t)}\nonumber
		\end{align}
		Moreover, we can see that $ \boldsymbol{\rho}(t) $ is a linear combination of independent white Gaussian process with zero mean. To find the covariance matrix $ \boldsymbol{C}_{\rho} $ of $ \boldsymbol{\rho}(t) $ we will have:		
		\begin{align}
		\boldsymbol{C}_{\rho} &= \E \left\{\boldsymbol{\rho}(t)\boldsymbol{\rho}^T(t)\right\} = \E \left\{\left[\boldsymbol{Q}\boldsymbol{F}r_l(t)-\boldsymbol{Q}\tilde{\boldsymbol{K}}\boldsymbol{e}(t)\right]\left[r_l^T(t)\boldsymbol{F}^T\boldsymbol{Q}^T-\boldsymbol{e}^T(t)\tilde{\boldsymbol{K}}^T\boldsymbol{Q}^T\right]\right\} \nonumber\\
		& = \E \Big\{\boldsymbol{Q}\boldsymbol{F}r_l(t)r_l^T(t)\boldsymbol{F}^T\boldsymbol{Q}^T-\boldsymbol{Q}\tilde{\boldsymbol{K}}\underbrace{\boldsymbol{e}(t)r_l^T(t)}_{0}\boldsymbol{F}^T\boldsymbol{Q}^T-\boldsymbol{Q}\boldsymbol{F}\underbrace{r_l(t)\boldsymbol{e}^T(t)}_{0}\tilde{\boldsymbol{K}}^T\boldsymbol{Q}^T\nonumber\\
		& +\boldsymbol{Q}\tilde{\boldsymbol{K}}\boldsymbol{e}(t) \boldsymbol{e}^T(t)\tilde{\boldsymbol{K}}^T\boldsymbol{Q}^T\Big\} = \boldsymbol{Q}\boldsymbol{F}\sigma^2_l\boldsymbol{F}^T\boldsymbol{Q}^T + \boldsymbol{Q}\tilde{\boldsymbol{K}}\boldsymbol{R}\tilde{\boldsymbol{K}}^T\boldsymbol{Q}^T.
		\end{align}
		Note that the fact that $  \boldsymbol{e}(t)$ and $r_l(t) $ are independent leads to having $ \boldsymbol{e}(t)r_l^T(t) = 0 $ .
		Next, we write the dynamic model for the estimation error as follows:
		\begin{align}\label{eq:rsteady}
		\boldsymbol{r}(t) = \boldsymbol{Q}\boldsymbol{A}\boldsymbol{r}(t-1)+\boldsymbol{\rho}(t)=\left[\boldsymbol{Q}\boldsymbol{A}\right]^{t-t_0}\boldsymbol{r}(t_c) + \sum_{k=0}^{t-t_0-1}\left[\boldsymbol{Q}\boldsymbol{A}\right]^{t-t_0-k-1}\boldsymbol{\rho}(t_c+k),
		\end{align}
		where $ t_0 $ is the time step where the Kalman filter converges.
		Thus, for $ t\gg t_0 $ and if $ \boldsymbol{Q}\boldsymbol{A} $ is asymptotically stable then we will have $ \boldsymbol{r}(t) $ as a Gaussian process due to the central limit theorem. Now, to find the mean of $ \boldsymbol{r}(t) $ we can write:
		\begin{align}
		\E\left\{\boldsymbol{r}(t)\right\}=\boldsymbol{Q}\boldsymbol{A}\E\left\{\boldsymbol{r}(t-1)\right\}+\E\left\{\boldsymbol{\rho}(t)\right\} =  \boldsymbol{Q}\boldsymbol{A}\E\left\{\boldsymbol{r}(t)\right\}, \Rightarrow \E\left\{\boldsymbol{r}(t)\right\} = \boldsymbol{0}.
		\end{align}
		Next, since the mean is zero, then the covariance matrix of  $ \boldsymbol{r}(t) $, $ \boldsymbol{C}_r = \E \{\boldsymbol{r}(t)\boldsymbol{r}^T(t)\} $. To find $ \boldsymbol{C}_r$, we have:
		\small
		\begin{align}
		{\textstyle\E \left\{\boldsymbol{r}(t)\boldsymbol{r}^T(t)\right\}} &{\textstyle= \E\left\{\left[\boldsymbol{Q}\boldsymbol{A}\boldsymbol{r}(t-1) + \boldsymbol{\rho}(t)\right]\left[\boldsymbol{r}(t-1)\boldsymbol{A}^T\boldsymbol{Q}^T + \boldsymbol{\rho}^T(t)\right]\right\}}
		\nonumber\\ 
		&{\textstyle=\E \Big\{\boldsymbol{Q}\boldsymbol{A}\boldsymbol{r}(t-1)\boldsymbol{r}^T(t-1)\boldsymbol{A}^T\boldsymbol{Q}^T+ \boldsymbol{Q}\boldsymbol{A}\underbrace{\boldsymbol{r}(t-1)\boldsymbol{\rho}^T(t)}_{0}+\underbrace{\boldsymbol{\rho}(t)\boldsymbol{r}^T(t-1)}_{0}\boldsymbol{A}^T\boldsymbol{Q}^T} + \boldsymbol{\rho}(t)\boldsymbol{\rho}^T(t)\Big\}.\nonumber
		\end{align}
		\normalsize
		Note that since $ \boldsymbol{r}(t-1)$ and $\boldsymbol{\rho}(t) $ are independent $ \boldsymbol{\rho}(t)\boldsymbol{r}^T(t-1)=0 $. In addition, we know that after convergence we will have $ \E \left\{\boldsymbol{r}(t)\boldsymbol{r}^T(t)\right\} = \E \left\{\boldsymbol{r}(t-1)\boldsymbol{r}^T(t-1)\right\} = \boldsymbol{C}_{r} $ then, we will need to solve the following \emph{discrete Ricatti} equation $
		\boldsymbol{C}_{r} = \boldsymbol{Q}\boldsymbol{A}\boldsymbol{C}_r\boldsymbol{A}^T\boldsymbol{Q}^T + \boldsymbol{C}_{\rho}.$
		
		Now, the final step is to find the covariance matrix of the residual $ \boldsymbol{C}_{\mu} $. Since after Kalman filter convergence we will have $ \hat{\boldsymbol{x}}^{(-)}(t) = \hat{\boldsymbol{x}}(t) $ then:
		\begin{align}
		\boldsymbol{z}(t)-\boldsymbol{z}^{(-)}(t)&=\boldsymbol{H}\boldsymbol{x}(t)+\boldsymbol{e}(t)-\boldsymbol{H}\boldsymbol{x}^{(-)}(t)=\boldsymbol{H}\boldsymbol{x}(t)+\boldsymbol{e}(t)-\boldsymbol{H}\hat{\boldsymbol{x}}(t)=\boldsymbol{H}\boldsymbol{r}(t)+\boldsymbol{e}(t).
		\end{align}
		Thus, we will have:
		\begin{align}
		\E\left\{\boldsymbol{\mu}(t)\boldsymbol{\mu}^T(t)\right\} &= \E \left\{\left[\boldsymbol{H}\boldsymbol{r}(t)+\boldsymbol{e}(t)\right]\left[\boldsymbol{r}^T(t)\boldsymbol{H}^T+\boldsymbol{e}^T(t)\right]\right\}\nonumber\\
		& = \E \left\{\boldsymbol{H}\boldsymbol{r}(t)\boldsymbol{r}^T(t)\boldsymbol{H}^T+\boldsymbol{e}(t)\boldsymbol{r}^T(t)\boldsymbol{H}^T+\boldsymbol{H}\boldsymbol{r}(t)\boldsymbol{e}^T(t)+\boldsymbol{e}(t)\boldsymbol{e}^T(t)\right\}\nonumber\\
		&= \boldsymbol{H}\boldsymbol{C}_r\boldsymbol{H}^T + \boldsymbol{R} + \E\left\{\boldsymbol{e}(t)\boldsymbol{r}^T(t)\boldsymbol{H}^T+\boldsymbol{H}\boldsymbol{r}(t)\boldsymbol{e}^T(t)\right\}.
		\end{align}
		Now, we have:\small
		\begin{align}
		\E\left\{\boldsymbol{e}(t)\boldsymbol{r}^T(t)\boldsymbol{H}^T\right\} & = 	\E\left\{\underbrace{\boldsymbol{e}(t)\boldsymbol{r}^T(t-1)}_{0}\boldsymbol{A}^T\boldsymbol{Q}^T\boldsymbol{H}^T+\underbrace{\boldsymbol{e}(t)\Big(r_l(t)}_{0}{F}^T-\boldsymbol{e}^T(t)\tilde{\boldsymbol{K}}^T\Big)\boldsymbol{Q}^T\boldsymbol{H}^T\right\} \nonumber= -\boldsymbol{R}\tilde{\boldsymbol{K}}^T\boldsymbol{Q}^T\boldsymbol{H}^T,\\
		\E\left\{\boldsymbol{H}\boldsymbol{r}(t)\boldsymbol{e}^T(t)\right\} & = 	\E\left\{\boldsymbol{H}\boldsymbol{Q}\boldsymbol{A}\underbrace{\boldsymbol{r}(t-1)\boldsymbol{e}^T(t)}_{0}+\boldsymbol{H}\boldsymbol{Q}\Big(-\tilde{\boldsymbol{K}}\boldsymbol{e}(t)\underbrace{+{\boldsymbol{F}r_l(t)}\Big)\boldsymbol{e}^T(t)}_{0}\right\}\nonumber = - \boldsymbol{H}\boldsymbol{Q}\tilde{\boldsymbol{K}}\boldsymbol{R}.	
		\end{align}\normalsize
		Thus, the covariance matrix can be given by \eqref{eq:covmu}.\vspace{-6mm}
\subsection{Proof of Theorem \ref{Theorem:aposteriori}}\label{App:apost}
From \eqref{eq:Kalmanestimtion}, after Kalman filter convergence we obtain:
\begin{align}
\hat{\boldsymbol{x}}(t+1) =& \left[\boldsymbol{I}-\tilde{\boldsymbol{K}}\boldsymbol{H}\right]\boldsymbol{x}^{(-)}(t) + \tilde{\boldsymbol{K}}\left[\boldsymbol{H}\boldsymbol{x}(t+1)+\boldsymbol{e}(t)\right]\nonumber\\
= & \left[\boldsymbol{I}-\tilde{\boldsymbol{K}}\boldsymbol{H}\right]\left[\boldsymbol{A}\hat{\boldsymbol{x}}(t)+\boldsymbol{B}{u}_f(t)+\boldsymbol{F}\hat{{v}}_l(t)\right]+ \tilde{\boldsymbol{K}}\left[\boldsymbol{H}\left[\boldsymbol{A}{\boldsymbol{x}}(t)+\boldsymbol{B}{u}_f(t)+\boldsymbol{F}{{v}}_l(t)\right]+\boldsymbol{e}(t)\right]\nonumber\\
= & \boldsymbol{A}\hat{\boldsymbol{x}}(t)+\boldsymbol{B}u_f(t)+\left[\boldsymbol{I}-\tilde{\boldsymbol{K}}\boldsymbol{H}\right]\boldsymbol{F}\hat{v}_l(t) + \tilde{\boldsymbol{K}} \boldsymbol{H} \boldsymbol{F} v_l(t) + \tilde{\boldsymbol{K}} \left[\boldsymbol{H}\boldsymbol{A}\boldsymbol{r}(t)+\boldsymbol{e} (t) \right].\label{eq:apost}
\end{align}
Due to the definition of\small $ \boldsymbol{F} = \left[\begin{array}{c c}
0 & T
\end{array}\right]^T $, $ \left[\boldsymbol{I}-\tilde{\boldsymbol{K}}\boldsymbol{H}\right]\boldsymbol{F} $\normalsize~and \small$ \tilde{\boldsymbol{K}}\boldsymbol{H}\boldsymbol{F} $\normalsize~are vectors with two elements where the first elements are zero. We define two parameters $ s_1 $ and $ s_2 $ as \small
$
\left[\boldsymbol{I}-\tilde{\boldsymbol{K}}\boldsymbol{H}\right]\boldsymbol{F} = \left[\begin{array}{c}
0\\
s_1
\end{array}\right]$ and $ \tilde{\boldsymbol{K}}\boldsymbol{H}\boldsymbol{F} = \left[\begin{array}{c}
0\\
s_2
\end{array}\right],
$ \normalsize
and we will have $ s_1+ s_2 =T $. Then using \eqref{eq:vlapost} and \eqref{eq:apost} the a posteriori estimation will be \small$
v_l^{(+)}(t) = \frac{s_1}{T}\hat{v}_l(t)+ \frac{s_2}{T}v_l(t) + \underbrace{\left[\begin{array}{c c}
	0 & 1
	\end{array}\right]\tilde{\boldsymbol{K}}}_{\boldsymbol{J}}\left[ \boldsymbol{H}\boldsymbol{A}\boldsymbol{r}(t)+\boldsymbol{e} (t) \right]. $\normalsize
Thus, we will have:
\begin{align}
\boldsymbol{\mu}_l^{(+)}(t)&= \boldsymbol{h}_lv_l^{(+)}(t)-\boldsymbol{h}_lv_l(t)-\boldsymbol{e}_l(t)\nonumber 
=\boldsymbol{h}_l\left[\frac{s_1}{T}\left(\hat{v}_l(t)-v_l(t)\right)\right]-\boldsymbol{e}_l(t)+\boldsymbol{h}_l\boldsymbol{J}\left[ \boldsymbol{H}\boldsymbol{A}\boldsymbol{r}(t)+\boldsymbol{e} (t) \right]\\
&=\underbrace{-\boldsymbol{h}_l\left[\frac{s_1}{T}\sum_{i=1}^{n_l}w_{l_i}e_{l_i}\right]-\boldsymbol{e}_l(t)}_{\boldsymbol{\varUpsilon}\boldsymbol{e}_l(t)}+\boldsymbol{h}_l\boldsymbol{J}\left[ \boldsymbol{H}\boldsymbol{A}\boldsymbol{r}(t)+\boldsymbol{e} (t) \right].
\end{align}
First, we derive the mean of a posteriori residual:
\begin{align}
\E\hspace{-0.5mm} \left\{\hspace{-0.5mm}\boldsymbol{\mu}_l^{(+)}(t)\hspace{-0.5mm}\right\} = -\boldsymbol{h}_l\hspace{-0.5mm}\left[\hspace{-0.5mm}\frac{s_1}{T}\sum_{i=1}^{n_l}w_{l_i}\E\left\{\hspace{-0.5mm}e_{l_i}\hspace{-0.5mm}\right\}\hspace{-0.5mm}\right]-\E \left\{\boldsymbol{e}_l(t)\right\}+\boldsymbol{h}_l\boldsymbol{J}\left[ \boldsymbol{H}\boldsymbol{A}\E \left\{\boldsymbol{r}(t)\right\}+\E \left\{\boldsymbol{e} (t)\right\} \right] = \boldsymbol{0}.
\end{align}
Since the mean is zero the covariance matrix of $ \boldsymbol{\mu}_l^{(+)}(t) $, $ \boldsymbol{C}_{\mu_l} $, can be derived as follows:
\begin{align}
\boldsymbol{C}_{\mu_l} = & \E\left\{\boldsymbol{\mu}_l^{(+)}(t){\boldsymbol{\mu}_l^{(+)}}^T(t)\right\} = \E\Bigg\{\left[\boldsymbol{\varUpsilon}\boldsymbol{e}_l(t)+\boldsymbol{h}_l\boldsymbol{J}\left[ \boldsymbol{H}\boldsymbol{A}\boldsymbol{r}(t)+\boldsymbol{e} (t) \right]\right]\nonumber \\
\times& \left[\boldsymbol{e}^T_l(t)\boldsymbol{\varUpsilon}^T+\left[\boldsymbol{r}^T(t)\boldsymbol{A}^T \boldsymbol{H}^T+\boldsymbol{e}^T (t) \right]\boldsymbol{J}^T\boldsymbol{h}^T_l\right]\Bigg\}\nonumber\\
=& \E\Bigg\{\boldsymbol{\varUpsilon}\boldsymbol{e}_l(t)\boldsymbol{e}_l^T(t)\boldsymbol{\varUpsilon}^T+\boldsymbol{\varUpsilon}\underbrace{\boldsymbol{e}_l(t)\left[\boldsymbol{r}^T(t)\boldsymbol{A}^T \boldsymbol{H}^T+\boldsymbol{e}^T (t) \right]}_{0}+\boldsymbol{h}_l\boldsymbol{J}\underbrace{\left[ \boldsymbol{H}\boldsymbol{A}\boldsymbol{r}(t)+\boldsymbol{e} (t) \right]\boldsymbol{e}^T_l(t)}_{0}\boldsymbol{\varUpsilon}^T\nonumber\\
+ & \boldsymbol{h}_l\boldsymbol{J}\left[ \boldsymbol{H}\boldsymbol{A}\boldsymbol{r}(t)\boldsymbol{r}^T(t)\boldsymbol{A}^T \boldsymbol{H}^T+\boldsymbol{e} (t)\boldsymbol{r}^T(t)\boldsymbol{A}^T \boldsymbol{H}^T+  \boldsymbol{H}\boldsymbol{A}\boldsymbol{r}(t)\boldsymbol{e}^T(t)+\boldsymbol{e} (t) \boldsymbol{e}^T(t) \right]\boldsymbol{J}^T\boldsymbol{h}^T_l  \Bigg\}\nonumber\\
= & \boldsymbol{\varUpsilon}\boldsymbol{R}_l\boldsymbol{\varUpsilon}^T+\boldsymbol{h}_l\boldsymbol{J}\left[ \boldsymbol{H}\boldsymbol{A}\boldsymbol{C}_r\boldsymbol{A}^T \boldsymbol{H}^T -\boldsymbol{R}\tilde{\boldsymbol{K}}^T\boldsymbol{Q}^T\boldsymbol{A}^T\boldsymbol{H}^T - \boldsymbol{H}\boldsymbol{A}\boldsymbol{Q}\tilde{\boldsymbol{K}}\boldsymbol{R}+\boldsymbol{R} \right]\boldsymbol{J}^T\boldsymbol{h}^T_l .
\end{align}
\end{appendix}\vspace{-5mm}
	\def\baselinestretch{1}
	\bibliographystyle{IEEEtran}
	\bibliography{references}

\begin{thebibliography}{10}
\providecommand{\url}[1]{#1}
\csname url@samestyle\endcsname
\providecommand{\newblock}{\relax}
\providecommand{\bibinfo}[2]{#2}
\providecommand{\BIBentrySTDinterwordspacing}{\spaceskip=0pt\relax}
\providecommand{\BIBentryALTinterwordstretchfactor}{4}
\providecommand{\BIBentryALTinterwordspacing}{\spaceskip=\fontdimen2\font plus
\BIBentryALTinterwordstretchfactor\fontdimen3\font minus
  \fontdimen4\font\relax}
\providecommand{\BIBforeignlanguage}[2]{{%
\expandafter\ifx\csname l@#1\endcsname\relax
\typeout{** WARNING: IEEEtran.bst: No hyphenation pattern has been}%
\typeout{** loaded for the language `#1'. Using the pattern for}%
\typeout{** the default language instead.}%
\else
\language=\csname l@#1\endcsname
\fi
#2}}
\providecommand{\BIBdecl}{\relax}
\BIBdecl

\bibitem{ferdowsi2017deep}
A.~Ferdowsi, U.~Challita, and W.~Saad, ``Deep learning for reliable mobile edge
  analytics in intelligent transportation systems,'' \emph{IEEE Vehicular
  Technology Magazine, Accepted and to Appear}, 2018.

\bibitem{Mozaffari2016}
M.~Mozaffari, W.~Saad, M.~Bennis, and M.~Debbah, ``Unmanned aerial vehicle with
  underlaid device-to-device communications: Performance and tradeoffs,''
  \emph{IEEE Transactions on Wireless Communications}, vol.~15, no.~6, pp.
  3949--3963, June 2016.

\bibitem{zeng2018joint}
T.~Zeng, O.~Semiari, W.~Saad, and M.~Bennis, ``Joint communication and control
  for wireless autonomous vehicular platoon systems,'' \emph{arXiv preprint
  arXiv:1804.05290}, 2018.

\bibitem{challita2018artificial}
U.~Challita, A.~Ferdowsi, M.~Chen, and W.~Saad, ``Artificial intelligence for
  wireless connectivity and security of cellular-connected {UAVs},'' \emph{IEEE
  Wireless Communications Magazine, Accepted and to Appear}, 2018.

\bibitem{Amoozadeh2015}
M.~Amoozadeh, A.~Raghuramu, C.~n.~Chuah, D.~Ghosal, H.~M. Zhang, J.~Rowe, and
  K.~Levitt, ``Security vulnerabilities of connected vehicle streams and their
  impact on cooperative driving,'' \emph{IEEE Communications Magazine},
  vol.~53, no.~6, pp. 126--132, June 2015.

\bibitem{Parvez2018}
I.~Parvez, A.~Rahmati, I.~Guvenc, A.~I. Sarwat, and H.~Dai, ``A survey on low
  latency towards 5g: Ran, core network and caching solutions,'' \emph{IEEE
  Communications Surveys Tutorials}, pp. 1--1, May 2018.

\bibitem{Husak}
M.~Hus\'{a}k, N.~Neshenko, M.~Safaei~Pour, E.~Bou-Harb, and P.~\v{C}eleda,
  ``Assessing internet-wide cyber situational awareness of critical sectors,''
  in \emph{Proceedings of the 13th International Conference on Availability,
  Reliability and Security}.\hskip 1em plus 0.5em minus 0.4em\relax Hamburg,
  Germany: ACM, August 2018, pp. 29:1--29:6.

\bibitem{Ferdowsicolonel}
A.~Ferdowsi, W.~Saad, B.~Maham, and N.~B. Mandayam, ``A {Colonel Blotto} game
  for interdependence-aware cyber-physical systems security in smart cities,''
  in \emph{Proceedings of the 2Nd International Workshop on Science of Smart
  City Operations and Platforms Engineering}, ser. SCOPE '17.\hskip 1em plus
  0.5em minus 0.4em\relax New York, NY, USA: ACM, 2017, pp. 7--12.

\bibitem{Kargl2008}
F.~Kargl, P.~Papadimitratos, L.~Buttyan, M.~Müter, E.~Schoch, B.~Wiedersheim,
  T.~V. Thong, G.~Calandriello, A.~Held, A.~Kung, and J.~P. Hubaux, ``Secure
  vehicular communication systems: implementation, performance, and research
  challenges,'' \emph{IEEE Communications Magazine}, vol.~46, no.~11, pp.
  110--118, November 2008.

\bibitem{Jeep2015Greenberg}
\BIBentryALTinterwordspacing
A.~Greenberg. Hackers remotely kill a jeep on the highway. [Online]. Available:
  \url{https://www.wired.com/2015/07/hackers-remotely-kill-jeep-highway/}
\BIBentrySTDinterwordspacing

\bibitem{Woo2015}
S.~Woo, H.~J. Jo, and D.~H. Lee, ``A practical wireless attack on the connected
  car and security protocol for in-vehicle can,'' \emph{IEEE Transactions on
  Intelligent Transportation Systems}, vol.~16, no.~2, April 2015.

\bibitem{obd2016Narayanan}
S.~N. Narayanan, S.~Mittal, and A.~Joshi, ``{OBD}\_securealert: An anomaly
  detection system for vehicles,'' in \emph{Proc. of IEEE International
  Conference on Smart Computing (SMARTCOMP)}, May 2016, pp. 1--6.

\bibitem{Calandriello2011}
G.~Calandriello, P.~Papadimitratos, J.~P. Hubaux, and A.~Lioy, ``On the
  performance of secure vehicular communication systems,'' \emph{IEEE
  Transactions on Dependable and Secure Computing}, vol.~8, no.~6, pp.
  898--912, Nov 2011.

\bibitem{Kim2010}
T.~Kim, A.~Studer, R.~Dubey, X.~Zhang, A.~Perrig, F.~Bai, B.~Bellur, and
  A.~Iyer, ``Vanet alert endorsement using multi-source filters,'' in
  \emph{Proceedings of the Seventh ACM International Workshop on VehiculAr
  InterNETworking}, Chicago, IL, USA, September 2010, pp. 51--60.

\bibitem{Sun2017}
M.~Sun, M.~Li, and R.~Gerdes, ``A data trust framework for vanets enabling
  false data detection and secure vehicle tracking,'' in \emph{Proc. of IEEE
  Conference on Communications and Network Security (CNS)}, Las Vegas, NV, USA,
  Oct 2017, pp. 1--9.

\bibitem{2017Eltayeb}
M.~E. Eltayeb, J.~Choi, T.~Y. Al-Naffouri, and R.~W. Heath, ``Enhancing secrecy
  with multiantenna transmission in millimeter wave vehicular communication
  systems,'' \emph{IEEE Transactions on Vehicular Technology}, vol.~66, no.~9,
  pp. 8139--8151, Sept 2017.

\bibitem{PETRILLO2018}
A.~Petrillo, A.~Pescapé, and S.~Santini, ``A collaborative approach for
  improving the security of vehicular scenarios: The case of platooning,''
  \emph{Computer Communications}, vol. 122, pp. 59 -- 75, 2018.

\bibitem{ferdowsi2018deep}
A.~Ferdowsi and W.~Saad, ``Deep learning for signal authentication and security
  in massive {Internet of Things} systems,'' \emph{IEEE Transactions on
  Communications}, October 2018.

\bibitem{Tuohy2015}
S.~Tuohy, M.~Glavin, C.~Hughes, E.~Jones, M.~Trivedi, and L.~Kilmartin,
  ``Intra-vehicle networks: A review,'' \emph{IEEE Transactions on Intelligent
  Transportation Systems}, vol.~16, no.~2, pp. 534--545, April 2015.

\bibitem{Kleberger2011}
P.~Kleberger, T.~Olovsson, and E.~Jonsson, ``Security aspects of the in-vehicle
  network in the connected car,'' in \emph{Proc. of IEEE Intelligent Vehicles
  Symposium (IV)}, Baden-Baden, Germany, June 2011, pp. 528--533.

\bibitem{2015Bradley}
J.~M. Bradley and E.~M. Atkins, ``Optimization and control of cyber-physical
  vehicle systems,'' \emph{Sensors}, vol.~15, no.~9, pp. 23\,020--23\,049,
  September 2015.

\bibitem{XUE2014852}
M.~Xue, W.~Wang, and S.~Roy, ``Security concepts for the dynamics of autonomous
  vehicle networks,'' \emph{Automatica}, vol.~50, no.~3, pp. 852 -- 857, 2014.

\bibitem{Sadraddini2017}
S.~Sadraddini, S.~Sivaranjani, V.~Gupta, and C.~Belta, ``Provably safe cruise
  control of vehicular platoons,'' \emph{IEEE Control Systems Letters}, vol.~1,
  no.~2, pp. 262--267, Oct 2017.

\bibitem{Lefèvre2016}
S.~Lefèvre, A.~Carvalho, and F.~Borrelli, ``A learning-based framework for
  velocity control in autonomous driving,'' \emph{IEEE Transactions on
  Automation Science and Engineering}, vol.~13, no.~1, pp. 32--42, Jan 2016.

\bibitem{ferdowsidRL2018}
A.~Ferdowsi, U.~Challita, W.~Saad, and N.~B. Mandayam, ``Robust deep
  reinforcement learning for security and safety in autonomous vehicle
  systems,'' in \emph{Proc. of IEEE International Conference on Intelligent
  Transportation Systems}, Maui, HI, November 2018.

\bibitem{2016Schoitsch}
E.~Schoitsch, C.~Schmittner, Z.~Ma, and T.~Gruber, ``The need for safety and
  cyber-security co-engineering and standardization for highly automated
  automotive vehicles,'' in \emph{Advanced Microsystems for Automotive
  Applications 2015}, T.~Schulze, B.~M{\"u}ller, and G.~Meyer, Eds.\hskip 1em
  plus 0.5em minus 0.4em\relax Cham: Springer International Publishing, 2016,
  pp. 251--261.

\bibitem{williams2007linear}
R.~L. Williams, D.~A. Lawrence \emph{et~al.}, \emph{Linear state-space control
  systems}.\hskip 1em plus 0.5em minus 0.4em\relax John Wiley \& Sons, 2007.

\bibitem{nonlinear}
H.~K. Khalil, \emph{Nonlinear systems}.\hskip 1em plus 0.5em minus 0.4em\relax
  Upper Saddle River, NJ : Prentice-Hall, 2002.

\bibitem{kailath1980linear}
T.~Kailath, \emph{Linear systems}.\hskip 1em plus 0.5em minus 0.4em\relax
  Prentice-Hall Englewood Cliffs, NJ, 1980, vol. 156.

\bibitem{Tian2015}
Y.~Tian and L.~Pan, ``Predicting short-term traffic flow by long short-term
  memory recurrent neural network,'' in \emph{Proc. IEEE International
  Conference on Smart City/SocialCom/SustainCom (SmartCity)}, Dec 2015, pp.
  153--158.

\bibitem{MA2015187}
X.~Ma, Z.~Tao, Y.~Wang, H.~Yu, and Y.~Wang, ``Long short-term memory neural
  network for traffic speed prediction using remote microwave sensor data,''
  \emph{Transportation Research Part C: Emerging Technologies}, vol.~54, pp.
  187 -- 197, 2015.

\bibitem{Kasgari2018Stochastic}
A.~{Taleb Zadeh Kasgari} and W.~Saad, ``Stochastic optimization and control
  framework for {5G} network slicing with effective isolation,'' in \emph{Proc.
  52nd Annual Conference on Information Sciences and Systems (CISS) (CISS
  2018)}, Princeton, NJ, USA, Mar. 2018.

\bibitem{Yang2013}
Q.~Yang, L.~Chang, and W.~Yu, ``On false data injection attacks against
  {Kalman} filtering in power system dynamic state estimation,'' \emph{Security
  and Communication Networks}, vol.~9, no.~9, pp. 833--849, August 2013.

\bibitem{matrixCookBook}
K.~B. Petersen and M.~S. Pedersen, \emph{The Matrix Cookbook}.\hskip 1em plus
  0.5em minus 0.4em\relax Technical University of Denmark, 2012.

\bibitem{Auer2002}
P.~Auer, N.~Cesa-Bianchi, and P.~Fischer, ``Finite-time analysis of the
  multiarmed bandit problem,'' \emph{Machine Learning}, vol.~47, no.~2, pp.
  235--256, May 2002.

\bibitem{Kleinberg2010}
R.~Kleinberg, A.~Niculescu-Mizil, and Y.~Sharma, ``Regret bounds for sleeping
  experts and bandits,'' \emph{Machine Learning}, vol.~80, no.~2, pp. 245--272,
  Sep 2010.

\bibitem{ali2018sleeping}
S.~Ali, A.~Ferdowsi, W.~Saad, N.~Rajatheva, and J.~Haapola, ``Sleeping
  multi-armed bandit learning for fast uplink grant allocation in machine type
  communications,'' \emph{arXiv preprint arXiv:1810.12983}, 2018.

\bibitem{gipps1981behavioural}
P.~G. Gipps, ``A behavioural car-following model for computer simulation,''
  \emph{Transportation Research Part B: Methodological}, vol.~15, no.~2, pp.
  105--111, 1981.

\bibitem{Kesting4585}
A.~Kesting, M.~Treiber, and D.~Helbing, ``Enhanced intelligent driver model to
  access the impact of driving strategies on traffic capacity,''
  \emph{Philosophical Transactions of the Royal Society of London A:
  Mathematical, Physical and Engineering Sciences}, vol. 368, no. 1928, pp.
  4585--4605, 2010.

\end{thebibliography}

	
	


\end{document}